%% file: main.tex
\documentclass[11pt]{article}

\usepackage[margin=1in]{geometry}
\usepackage{setspace}
\usepackage{microtype}
\usepackage{lmodern}
\usepackage[T1]{fontenc}
\usepackage{bm}
\usepackage{enumitem}

\usepackage{amsmath,amssymb,amsthm,mathtools}

\usepackage{bbm}

\theoremstyle{plain}
\newtheorem{proposition}{Proposition}[section]
\newtheorem{lemma}{Lemma}[section]

\theoremstyle{definition}

\theoremstyle{remark}

\usepackage{booktabs}
\usepackage{array}
\usepackage{colortbl}
\usepackage{graphicx}
\usepackage{caption}
\usepackage{subcaption}
\usepackage{adjustbox}
\usepackage{tikz}
\usetikzlibrary{positioning,calc,arrows.meta,automata,fit}
\usepackage{ifthen}
\usepackage{xcolor}
\definecolor{darkgreen}{HTML}{008f00}
\usepackage{makecell}
\usepackage[skins]{tcolorbox}
\tcbuselibrary{breakable}

\usepackage{hyperref}
\hypersetup{
    colorlinks=true,
    linkcolor=magenta,
    filecolor=magenta, 
    citecolor=blue,      
    urlcolor=blue
}
\definecolor{forestgreen}{RGB}{34,139,34}
\definecolor{bblue}{HTML}{1E88E5} 
\definecolor{bgreen}{HTML}{004D40}
 \definecolor{bred}{HTML}{D81B60}
 
\usepackage[noblocks]{authblk}

\usepackage[
    backend=biber,
    style=nature,
    natbib=true,
    date=year,
    doi=true,
    isbn=false,
    url=false,
    eprint=false
]{biblatex}

\AtEveryBibitem{%
  \clearfield{note}%
}
\AtEveryCitekey{\clearlist{publisher}}
\AtEveryBibitem{\clearlist{publisher}}
\renewbibmacro{in:}{}
\newif\ifsubmissionsnapshot
\submissionsnapshottrue

\ifsubmissionsnapshot
  \newcommand{\resultspath}{results/}
\else
  \newcommand{\resultspath}{../results/}
\fi

\title{How should we select test-negative controls? A causal perspective in the era of multiplex respiratory testing}
\author[1,2]{Christopher B. Boyer\thanks{Corresponding author: \href{mailto:boyerc5@ccf.org}{boyerc5@ccf.org}}}
\author[3]{Kendrick Qijun Li}
\author[4]{Xu Shi}
\author[5,6,7]{Marc Lipsitch}
\author[8]{Eric J. Tchetgen Tchetgen}
\affil[1]{Department of Quantitative Health Sciences, Cleveland Clinic, Cleveland, OH, USA}
\affil[2]{Department of Medicine, Cleveland Clinic Lerner College of Medicine of Case Western Reserve University, Cleveland, OH, USA}
\affil[3]{Department of Biostatistics, St. Jude Children's Research Hospital, Memphis, TN, USA}
\affil[4]{Department of Biostatistics, University of Michigan, Ann Arbor, MI, USA}
\affil[5]{Division of Infectious Diseases and Geographic Medicine, Department of Medicine, Stanford University, Stanford, CA, USA}
\affil[6]{Department of Biology, Stanford University, Stanford, CA, USA}
\affil[7]{Center for International Security and Cooperation, Freeman-Spogli Institute, Stanford University, Stanford, CA, USA}
\affil[8]{Department of Statistics and Data Science, University of Pennsylvania, Philadelphia, PA, USA}
\date{\today}

\begin{document}
\maketitle

\newpage
\clearpage

\begin{abstract}
\singlespacing
The test-negative design (TND) is widely used to estimate vaccine effectiveness (VE) for respiratory pathogens by comparing vaccination odds among test-positive cases versus test-negative controls. A central yet underexplored design element is which test-negative illnesses constitute valid controls. With rapid multiplex PCR panels, investigators can now identify specific non-focal pathogens among test-negative patients, allowing for better characterization of ``test-negative illness'', but also revealing a mixture of control outcomes that may each satisfy or violate causal assumptions. We synthesize recent causal identification results for the TND and show that they imply two distinct interpretations of control selection: 1) a sampling view in which controls represent the source population and 2) a bias-correction view in which controls function as negative control outcomes under equi-confounding. Building on these interpretations, we develop a framework for multiplex-informed control selection. We propose a taxonomy that distinguishes controls that serve primarily as \emph{exposure proxies} (sharing unmeasured determinants of infection) from those that serve as \emph{testing proxies} (sharing unmeasured determinants of care-seeking), derive implications for pathogen-specific and pooled estimators, and suggest three practical principles for control selection: vaccine irrelevance, avoidance of entanglement with other interventions, and testing-process comparability. We also formalize nuances introduced by multiplex panels, including co-detections and pan-negative episodes, and outline when standard pooled estimators remain valid versus when alternative estimators are needed. In simulations across 9 scenarios, we demonstrate violations concentrated in a single control pathogen can substantially bias pooled TND estimates, whereas a pre-specified pathogen screening estimator remained unbiased. 
\end{abstract}
\vspace{0.5em}
\noindent\textbf{Keywords:} test-negative design, vaccines, causal inference, observational studies, unmeasured confounding, selection bias
\newpage

\section{Introduction}
The test-negative design (TND) has become a mainstay of post-licensure vaccine evaluation for respiratory pathogens \citep{jacksonTestnegativeDesignEstimating2013,sullivanTheoreticalBasisTestNegative2016,lewnardMeasurementVaccineDirect2018,boyerIdentificationEstimationVaccine2026}. Because it can be embedded in routine clinical testing and may mitigate confounding by healthcare-seeking behavior relative to other observational approaches, the TND has been widely used for evaluating influenza \citep{frutos2025interim}, COVID-19 \citep{andrews2022covid}, and RSV vaccines \citep{payne2024respiratory}. In the canonical design, symptomatic individuals who seek care are tested for a focal pathogen; those testing positive serve as cases and those testing negative serve as controls. Vaccine effectiveness (VE) is then estimated from the odds ratio comparing vaccination coverage among test-positive versus test-negative patients.

Despite this popularity, TND validity relies on strong assumptions, and subtle design choices can introduce bias \citep{sullivanPotentialTestnegativeDesign2014}. A persistent ambiguity concerns the role of controls: if test-negative individuals are symptomatic but lack the focal pathogen, what are they sick with, and does it matter for identification? A few influenza studies have attempted to characterize common sources of test-negative infection \citep{chuaUseTestnegativeControls2020a,fengAssessmentVirusInterference2017,fengEstimatingInfluenzaVaccine2018}, but most have treated ``test-negative illness'' as a single composite outcome (often by necessity), implicitly assuming the specific causes of non-focal illness are unimportant or equally valid so long as the symptom screen is met.

This simplification is increasingly untenable. Rapid multiplex PCR assays can simultaneously detect many respiratory viruses---and sometimes bacteria---from a single specimen, making the true heterogeneity of ``test-negative illness'' explicit \citep{huangMultiplexPCRSystem2018}. However, these pathogens may themselves vary in terms of their validity as test-negative controls. For instance some may be affected by focal vaccination, while others are affected by a vaccine or preventative intervention that's correlated with focal vaccination; some may provide nonspecific or short-term immunity against focal infection, while others may have different symptom profiles or test-seeking pathways. 

Multiplex testing therefore raises a crucial question: having revealed the sources of test-negative illness which pathogens should be included as controls? One the one hand, pathogens that violate assumptions may be identified and excluded. At the same time, naive exclusion risks ``collider bias'' \citep{wacholder1992selection}. Multiplex testing also introduces novel complications, such as co-detections (individuals testing positive for multiple pathogens) and pan-negatives (individuals testing negative for all pathogens), that must be carefully considered. What is needed is a clear methodological framework to guide these choices.

In this article, we review recent causal frameworks for the TND (Section~\ref{sec:review}) and show that they imply two contrasting views on control selection (Section~\ref{sec:two_views}). Building on the negative-control view, we examine the implications of multiplex testing and their relationship to unmeasured confounding (Section~\ref{sec:multiplex_implications}), introduce a taxonomy of controls as exposure versus testing proxies (Section~\ref{sec:proxy_taxonomy}), propose practical selection principles (Section~\ref{sec:three_principles}), evaluate them in simulation (Section~\ref{sec:sim}), and discuss implications and limitations (Section~\ref{sec:discussion}).

\section{A review of causal models of the TND}\label{sec:review}

\subsection{Setup and notation}
Let $X$ represent a vector of baseline covariates, $V$ an indicator of vaccination status for the focal vaccine (1: vaccinated, 0: unvaccinated), and $Y$ a tri-level outcome at the end of follow up where:
$$Y:= \begin{cases}
    Y = 1 & \text{test-positive case} \implies (I_{+} = 1, T = 1) , \\
    Y = -1 & \text{test-negative control} \implies (I_{-} = 1, T = 1) , \\
    Y = 0 & \text{not tested} \implies (T = 0).
\end{cases}
$$
The outcome $Y$ represents \emph{medically attended illness} and is only meaningful given context about the population's testing regime, which we define through additional variables. First, $I_{+}$ and $I_{-}$ indicate symptomatic infection due to the focal pathogen and any test-negative pathogen, respectively, where ``symptomatic'' means symptoms matching the pre-specified set used to screen for inclusion in the TND. For simplicity we begin by assuming $I_{+}$ and $I_{-}$ are mutually exclusive ($I_+ = 1 \implies I_- = 0$, and vice versa)---no individual is simultaneously symptomatically infected by both a focal and a non-focal pathogen within the testing window---which is justified when focal--control co-infection is rare or a clear attribution rule identifies the source of current symptoms. We retain mutual exclusivity throughout Sections~\ref{sec:multiplex_implications}--\ref{sec:three_principles}, because most of its control-selection implications are unchanged, and treat co-infections in Section~\ref{sec:coinfection} and, in depth, Appendix~\ref{sec:app_codetection}. Second, $T$ indicates receipt of a clinical test (1: tested, 0: not tested). Together $I_{+}$, $I_{-}$, and $T$ determine $Y$: in particular, $Y = -1$ requires testing negative for the focal pathogen \emph{and} positive for at least one non-focal pathogen on the panel. Symptomatic, tested individuals negative for all pathogens (``pan-negatives'') are discussed in Section~\ref{sec:pan_negatives}. 

\subsection{A DAG for the TND}
Figure~\ref{fig:tnd-dags} shows directed acyclic graphs (DAGs) representing vaccination, illness processes, and selection into the tested population. The key feature is that TND analyses condition on $T=1$, a post-exposure variable influenced by illness and healthcare behavior; without additional structure, conditioning on $T$ can induce selection or ``collider-stratification'' bias \citep{sullivanTheoreticalBasisTestNegative2016,lewnardMeasurementVaccineDirect2018}.

\begin{figure}[!ht]
\centering
\begin{subfigure}[t]{0.48\textwidth}
\centering
\begin{tikzpicture}[> = stealth, shorten > = 1pt, auto, node distance = 2.5cm, inner sep = 0pt, minimum size = 0.5pt, thick]
  \tikzstyle{every state}=[draw=none, fill=none]
  \node[state] (x) {$X$};
  \node[state] (v) [right of=x] {$V$};
  \node[state] (i) [right of=v] {$I_+$};
  \node[state] (t) [right of=i] {$T$};
  \node[state] (i1) [below of=i] {$I_-$};
  \node[state] (u) [below of=v] {$U$};

  \path[->] (x) edge node {} (v);
  \path[->] (x) edge [out=45, in=135] node {} (i);
  \path[->] (v) edge node {} (i);

  \path[->] (i) edge node {} (t);
  \path[->] (i1) edge node {} (t);
  \path[->] (x) edge [out=45, in=135] node {} (t);

  \path[->] (u) edge node {} (x);
  \path[->] (u) edge node {} (i);
  \path[->] (u) edge node {} (t);
  \path[->] (u) edge node {} (i1);
\end{tikzpicture}
\subcaption{No unmeasured confounding (Section~\ref{sec:identification_1}): $U \to V$ is absent.\label{fig:base-dag}}
\end{subfigure}
\hfill
\begin{subfigure}[t]{0.48\textwidth}
\centering
\begin{tikzpicture}[> = stealth, shorten > = 1pt, auto, node distance = 2.5cm, inner sep = 0pt, minimum size = 0.5pt, thick]
  \tikzstyle{every state}=[draw=none, fill=none]
  \node[state] (x) {$X$};
  \node[state] (v) [right of=x] {$V$};
  \node[state] (i) [right of=v] {$I_+$};
  \node[state] (t) [right of=i] {$T$};
  \node[state] (i1) [below of=i] {$I_-$};
  \node[state] (u) [below of=v] {$U$};

  \path[->] (x) edge node {} (v);
  \path[->] (x) edge [out=45, in=135] node {} (i);
  \path[->] (v) edge node {} (i);

  \path[->] (i) edge [draw=gray!40] node {} (t);
  \path[->] (i1) edge [draw=gray!40] node {} (t);
  \path[->] (x) edge [out=45, in=135] node {} (t);

  \path[->] (u) edge node {} (x);
  \path[->] (u) edge node {} (v);
  \path[->] (u) edge [draw=gray!40] node {} (i);
  \path[->] (u) edge [draw=gray!40] node {} (t);
  \path[->] (u) edge [draw=gray!40] node {} (i1);
\end{tikzpicture}
\subcaption{Equi-confounding (Section~\ref{sec:identification_2}): $U$ may affect $V$ directly; gray arrows are restricted by the equi-confounding assumption.\label{fig:equiconfounding}}
\end{subfigure}
\caption{Directed acyclic graphs for a test-negative study under two identification strategies. Panel~(a): under no unmeasured confounding, all unmeasured factors $U$ affect $V$ only through $X$. Panel~(b): under equi-confounding, $U \to V$ is present, but arrows $U \rightarrow I_+$ and $U \rightarrow I_-$ require equivalent effects of $U$ on $I_+$ and $I_-$ on the multiplicative scale, and arrows $U \rightarrow T$, $I_+ \rightarrow T$, and $I_- \rightarrow T$ require equivalent effects of $U$ on test-seeking for $I_+$ and $I_-$. $V$: vaccination; $I_+$: test-positive illness; $I_-$: test-negative illness; $T$: clinical testing; $X$: measured covariates; $U$: unmeasured factors.}
\label{fig:tnd-dags}
\end{figure}

\subsection{Target causal estimand}
A typical estimand is the conditional causal risk ratio for medically attended test-positive illness under vaccination versus no vaccination:
\begin{equation}
RR(X) \;\equiv\; 
\frac{\Pr\!\left(Y^{v=1}=1 \mid X\right)}
{\Pr\!\left(Y^{v=0}=1 \mid X\right)},
\end{equation}
where $Y^{v}$ is the potential outcome under vaccination status $v$, and conditional vaccine efficacy is $VE(X)=1-RR(X)$. The marginal risk ratio is often also of interest \citep{schnitzerEstimandsEstimationCOVID192022}; we focus on the conditional estimand to simplify exposition, but all control-selection insights apply equally to the marginal target (Appendix~\ref{sec:app_marginal}).

\subsection{Identification under no unmeasured confounding}\label{sec:identification_1}
As formalized by \textcite{schnitzerEstimandsEstimationCOVID192022}, under a set of assumptions including consistency, positivity, no causal effect of the focal vaccine on test-negative illness, and no unmeasured confounding conditional on $X$, $RR(X)$ is identified by\footnote{The ``no causal effect'' restriction takes a slightly different form under the two identification strategies: Section~\ref{sec:identification_1} requires the \emph{marginal} equality $\Pr(Y^{v=1}=-1\mid X) = \Pr(Y^{v=0}=-1\mid X)$ (Appendix~\ref{sec:app_identification_1}), whereas Section~\ref{sec:identification_2} requires only the \emph{within-vaccinated} equality $\Pr(Y^{v=1}=-1\mid V=1,X) = \Pr(Y^{v=0}=-1\mid V=1,X)$ (Appendix~\ref{sec:app_identification_2}). The latter is strictly weaker.}
\begin{equation}\label{eqn:or}
OR_{T=1}(X) \equiv \frac{\Pr(Y=1 \mid T=1, V=1, X)\;\Pr(Y=-1 \mid T=1, V=0, X)}
{\Pr(Y=-1 \mid T=1, V=1, X)\;\Pr(Y=1 \mid T=1, V=0, X)}.
\end{equation}
which is the odds ratio among the tested targeted by the standard TND logistic regression (assumptions and proofs in Appendix~\ref{sec:app_identification_1}). Under this view, the TND sampling design functions like incidence-density sampling in case-control studies: it recovers the exposure (or propensity-score) ratio in the source population under so-called ``control exchangeability'' \citep{schnitzerEstimandsEstimationCOVID192022, jiangTNDDREfficientDoubly2023}.

\subsection{Identification under equi-confounding}\label{sec:identification_2}
We have recently shown that the no-unmeasured-confounding assumption can be relaxed to permit particular confounding structures \citep{boyerIdentificationEstimationVaccine2026}: when the net effect of all unmeasured factors $U$ is equivalent on the ratio scale for test-positive and test-negative illness given $X$,
\begin{equation}\label{eqn:orec}
     \dfrac{\Pr(Y^{v=0} = 1  | V = 1, X)}{\Pr(Y^{v=0} = 1 | V = 0, X)} = \frac{\Pr(Y^{v=0} = -1 | V = 1, X)}{\Pr(Y^{v=0} = -1 | V = 0, X)},
\end{equation}
a condition we call multiplicative (odds-ratio) \emph{equi-confounding} \cite{tchetgenUniversalDifferenceinDifferencesCausal2023,parkUniversalDifferenceinDifferencesApproach2023}. Combined with consistency, no causal effect of the focal vaccine on test-negative illness, and a modified positivity assumption, $OR_{T=1}(X)$ then identifies the causal conditional risk ratio \emph{among the vaccinated},
\begin{equation}
RR_V(X) \;\equiv\; 
\frac{\Pr\!\left(Y^{v=1}=1 \mid V = 1, X\right)}
{\Pr\!\left(Y^{v=0}=1 \mid V = 1, X\right)},
\end{equation}
which equals $RR(X)$ when $X$ captures all relevant effect modifiers (assumptions and proofs in Appendix~\ref{sec:app_identification_2}).

This builds on the insight that, given the no-effect restriction, test-negative illness is functionally a negative control outcome \citep{lipsitchNegativeControlsTool2010a} for which bias correction via difference-in-differences is possible under equi-confounding on a suitable scale \citep{soferNegativeOutcomeControl2016,tchetgenUniversalDifferenceinDifferencesCausal2023,parkUniversalDifferenceinDifferencesApproach2023}. The TND is then a form of outcome-dependent sampling of the target outcome and an associated negative control. 

\subsection{Graphical intuition for equi-confounding}
Figure~\ref{fig:equiconfounding} illustrates which parts of the data-generating process are most consequential for equi-confounding. Relative to Figure~\ref{fig:base-dag}, $U$ may now affect $V$ directly ($U\rightarrow V$), and gray arrows mark the pathways where equi-confounding imposes parametric restrictions.

The equi-confounding assumption~\eqref{eqn:orec} factors into two sufficient components: (A) $U$ has equivalent multiplicative effects on symptomatic focal ($I_+$) and non-focal ($I_-$) illness, and (B) $U$ does not differentially modify test-seeking ($T$) for focal versus non-focal illness. Graphically, the gray arrows $U \rightarrow I_+$ and $U \rightarrow I_-$ must carry the same multiplicative effect, and $U$ must not modify $I_+ \rightarrow T$ relative to $I_- \rightarrow T$ (though the background probability of test-seeking may still differ). We state these conditions formally in Appendix~\ref{sec:app_identification_2}.

\section{Two views on the purpose of control selection in a TND}\label{sec:two_views}

The two identification arguments above---(i) no unmeasured confounding and ``control exchangeability'' (Section \ref{sec:identification_1}), and (ii) equi-confounding via assuming test-negative illness is a negative control outcome (Section \ref{sec:identification_2})---imply different views on the role of  controls in the TND and, therefore, different visions for multiplex-informed control selection. The first has dominated the TND literature, but the second is arguably more consistent with the original motivation for the design and better suited to control selection under multiplex testing; we adopt it for the remainder of the paper (although we discuss practical implications under View 1 in Appendix~\ref{sec:app_two_views_diverge}). Box~\ref{box:two-views} contrasts the two, which we describe in turn below.

\subsection{View 1: Controls represent vaccination rates in the source population}
Under the conventional interpretation, largely drawn from case-control literature, controls approximate the vaccination distribution (or propensity score) in the \emph{source population} that gave rise to cases \citep{wacholder1992selection}. This ``sampling'' view suggests a relatively conservative stance on control refinement: defining controls using post-baseline outcomes that are themselves affected by causes of vaccination or testing can introduce selection bias---for example, by opening collider paths through $T$ or through pathogen-specific symptom severity---leading to controls that are no longer ``representative'' of the source population. Multiplex testing is therefore used \emph{defensively}, to exclude clearly invalid controls (pathogens plausibly affected by the focal vaccine, or with sharply different testing pathways), rather than to build ever-finer pathogen-specific control strata.

\subsection{View 2: Controls are a negative control outcome for bias correction}
Under the equi-confounding approach, controls are not merely a sampling device but an \emph{auxiliary (negative control) outcome} used to correct bias due to unmeasured confounding---similar to the indirect cohort design \cite{broomePneumococcalDiseasePneumococcal1980}. Under this perspective, a valid control (i) is not causally affected by vaccination, yet (ii) shares the confounding and selection structure of the focal outcome such that differences in relative vaccination rates reflect the extent of residual confounding. This ``bias-correction'' view makes pathogen choice central rather than incidental.  Multiplex testing creates a \emph{menu of candidate negative controls} that differ in how well they meet these conditions, so the information is used \emph{constructively} to prioritize pathogens that share the dominant unmeasured bias mechanisms affecting the focal outcome while remaining unaffected by vaccination. Pooling several valid controls can improve precision, but a violation for even one pathogen biases the pool, so careful screening is essential.


\begin{tcolorbox}[
  float=ht,
  colback=white,
  colframe=black!70,
  coltitle=white,
  colbacktitle=black!70,
  fonttitle=\bfseries\small,
  title={Box 1: Two views on the role of test-negative controls},
  boxrule=0.6pt,
  arc=2pt,
  left=6pt, right=6pt, top=4pt, bottom=4pt
]
\label{box:two-views}
\small
\renewcommand{\arraystretch}{1.4}%
\begin{tabular}{@{} >{\itshape}p{2.8cm} p{5.1cm} p{5.1cm} @{}}
  & \textbf{View 1: Sampling} & \textbf{View 2: Bias correction} \\[0.3em]
  \arrayrulecolor{black!25}\hline\noalign{\vskip 4pt}
  Identification basis &
    No unmeasured confounding; control exchangeability &
    Equi-confounding; negative control outcome \\
  Role of controls &
    Represent vaccination distribution in the source population &
    Auxiliary outcome sharing the confounding and selection structure of the focal outcome \\
  Primary risk from refinement &
    Selection bias from conditioning on post-baseline information &
    Violation of equi-confounding if control and focal illness have different confounding structures \\
  Use of multiplex data &
    Defensive: exclude clearly invalid pathogens &
    Constructive: select pathogens that best approximate residual confounding \\[0.2em]
  \arrayrulecolor{black!25}\hline\noalign{\vskip 4pt}
  \multicolumn{3}{@{}c@{}}{\textbf{Shared requirement:} Vaccination does not causally affect the chosen control outcome.}
\end{tabular}
\renewcommand{\arraystretch}{1.0}%
\end{tcolorbox}

\section{Implications of multiplex testing}\label{sec:multiplex_implications}

\subsection{``Test-negative illness'' is no longer a single object}
Historically, control illness has been loosely defined, e.g., ``influenza test-negative influenza-like illness'' \citep{jacksonTestnegativeDesignEstimating2013}. Multiplex PCR resolves it into specific pathogens---RSV, rhinovirus, metapneumovirus, adenovirus, seasonal coronaviruses, parainfluenza, and others---each with distinct epidemiology, symptom profiles, and testing pathways \citep{huangMultiplexPCRSystem2018}. When identifying assumptions hold for some of these pathogens but not others, pooling them into a single control group can induce bias. Multiplex testing thus shifts the design question from ``Is the test-negative group valid?'' to ``Which components are valid, and how should they be combined?''

Notationally, the pathogenic source of test-negative illness can be represented by a vector $(I_{-1},\dots,I_{-K})$, where $I_{-k} \in \{0, 1\}$ indicates symptomatic illness due to test-negative pathogen $k$. The observed test-negative group can then be represented as a mixture across pathogens:
$$
I_{-} = \mathbbm{1} \left\{\bigcup_{k=1}^{K} I_{-k} = 1\right\}
$$
Multiplex testing introduces the possibility of detecting two potential types of co-infection: \emph{control--control} co-infections, i.e., those within $I_{-}$, and \emph{focal--control} co-infections, i.e., those between $I_+$ and any element of $I_{-}$. Both are treated in more depth below. We begin with the simple case where there are no co-infections, such as when all co-detections can be resolved to a single pathogenic source of current symptoms.

\subsection{We must consider identifiability under multiple candidate negative controls} 

Absent co-infections, we can accomodate multiplex results by expanding $Y$ into additional mutually exclusive categories $Y \in \{1,0,-1,\ldots,-K\}$ for a symptomatic  episode, where $Y=1$ indicates testing positive for the focal pathogen, $Y=0$ no test, and $Y=-k$ testing positive for (focal-negative) pathogen $k$. The conventional pooled ``test-negative'' control group is then the event $Y\in\{-1,\ldots,-K\}$, whereas pathogen-specific controls compare $Y=1$ versus $Y=-k$.

This notation reveals two implications of multiplex results. First, each pathogen-specific contrast has its own identifying conditions: the ``no vaccine effect on controls'' requirement becomes $\Pr(Y^{v=1}=-k\mid \cdot)=\Pr(Y^{v=0}=-k\mid \cdot)$ for each $k$, and the plausibility of equi-confounding may vary across $k$ depending on patheogen-specific determinants of exposure and care-seeking (see Appendix~\ref{sec:app_pooling} for identification results). Second, pooling produces a mixture estimand that can mask violations concentrated in a subset of pathogens: even when most satisfy assumptions, a single pathogen affected by vaccination or a different testing channel can bias the pool if it forms a sizable share of controls during part of the season. Figure~\ref{fig:multiplex-decomp} depicts this: the ``test-negative group'' is now a mixture of pathogen-specific illnesses and identification conditions plausible for one pathogen may fail for another. 

Accordingly, we recommend reporting pathogen-specific VE estimates as a stability analysis (grouping very low-prevalence pathogens, e.g. by viral family or transmission route, to preserve precision). Let $OR_k(X)$ denote the conditional odds ratio comparing vaccination odds among focal positives ($Y=1$) versus pathogen-$k$ controls ($Y=-k$) within strata of $X$; the pooled $OR_{T=1}(X)$ is then a control-prevalence-weighted average of $\{OR_k(X)\}_{k=1}^K$, weighted by each pathogen's share of the vaccinated control pool (Appendix~\ref{sec:app_pooling}, Equation~\ref{eq:pooled_decomp}). When all controls share the same source of unmeasured confounding and satisfy the identifying assumptions, the $\{OR_k(X)\}$ should agree within sampling variability. We therefore describe an omnibus homogeneity test and---because rejection does not reveal \emph{which} control is at fault---a complementary pairwise control--control diagnostic that equals one in expectation when both controls are valid, so departures from unity localize the offending pathogen (Appendix~\ref{sec:app_pooling}, Equation~\ref{eq:or_jk}). Unlike a traditional TND, multiplex testing thus lets control heterogeneity be examined and addressed through pre-specified exclusions and sensitivity analyses rather than assumed away. 

\subsection{Handling co-detections}\label{sec:coinfection}
Multiplex panels can detect nucleic acid from more than one organism in a single specimen (a \emph{co-detection}), raising the question of whether this reflects genuine co-causation of symptoms (a \emph{co-infection}) or an incidental finding; Appendix~\ref{sec:app_codetection} formalizes the mapping from detection to illness attribution. Prior literature suggests that co-detection rates can be nontrivial (7--27\%) especially in pediatric populations and during peak respiratory virus seasons \cite{krammerTenyearRetrospectiveData2024,yonghaoCharacteristicsRespiratoryCoinfections2026}. When a credible \textit{attribution rule} assigns the episode to a single source---e.g., highest viral load (lowest cycle threshold), best syndromic match, or clinical adjudication---mutual exclusivity is restored and the main-text methods apply unmodified. When no such rule exists or genuine \emph{co-infection} is likely, two cases must be distinguished. 

\textbf{Control--control co-infection} is the co-occurrence of non-focal pathogens. Provided the overlapping controls are both valid, pathogen-specific and pooled estimators remain unbiased. Each $OR_k$ can be estimated from control pathogen $k$ in isolation regardless of co-infection (though combining the estimates, e.g. by inverse-variance weighting, must account for their nonindependence). The pooled estimator is invariant in an even stronger sense: any non-excluding rule merely relabels which $-k$ bucket an episode enters, leaving the union $\bigcup_k \{Y = -k\}$ unchanged (Appendix~\ref{sec:app_control_control}). The choice between estimation strategies is thus mainly about precision with pooling yielding higher effective sample size. 

Alternatively, co-infections may represent latent differences in exposure risk. When this is unrelated to vaccination or care-seeking strategies, the results above apply without change. However, when latent exposure risk is related, and only applies to controls, it can invalidate equi-confounding assumption (see Figure~\ref{fig:exclude_codetect}) in which case excluding co-infection episodes may be preferred.

\textbf{Focal--control co-infection}, the co-occurence of focal and non-focal pathogens, is more consequential: under the conventional rule that assigns any focal-positive episode to the case group (by necessity when tests are singleplex), unvaccinated individuals---who face higher focal-infection risk---are preferentially removed from the control pool, so the standard pooled estimator overstates VE, and no change of attribution rule removes the bias. A difference-in-differences estimator that contrasts focal- and control-pathogen risk ratios cancels common multiplicative confounding and recovers the target VE under equi-confounding. We formalise the bias mechanism and identification result in Appendix~\ref{sec:app_focal_control} and evaluate a modified (stacked) Poisson estimator in simulation in Appendix~\ref{sec:app_coinf_sim}.

\subsection{Handling pan-negative episodes}\label{sec:pan_negatives}
A related challenge is \emph{pan-negative} episodes: symptomatic individuals testing negative for the focal pathogen \emph{and} every other pathogen on the panel. These may reflect (i) off-panel infections (bacterial, fungal, or novel viral pathogens), (ii) measurement error (e.g. false negatives), or (iii) a non-infectious source of symptoms (provided symptom screen is effective). Including pan-negatives risks importing controls that violate identifying assumptions, while excluding them may alter the confounding structure if pan-negativity relates to vaccination or care-seeking. When pan-negatives arise purely from measurement error, estimators that \emph{exclude} them still identify $RR_V(X)$ under perfect specificity and non-differential sensitivity (Appendix~\ref{sec:app_imperfect_test}). We recommend excluding them by default and reporting analyses with pan-negative controls as a pre-specified sensitivity analysis.

\section{Distinguishing between exposure proxies and testing proxies}\label{sec:proxy_taxonomy}
Under the equi-confounding view of control selection, multiplex assays present several candidate negative control outcomes, each of which may communicate different information about unmeasured confounding. A useful conceptual distinction is between pathogens that serve as ``exposure proxies'' versus ``testing proxies.'' Figure~\ref{fig:multiplex-proxies} reprises the multiplex DAG to emphasize that different test-negative pathogens may align more closely with unmeasured factors related to exposure ($U_1$), unmeasured factors related to testing ($U_2$), or both. Note, the exposure/testing distinction here refers to the \emph{primary} association: $U_1$ also affects testing indirectly via pathogen-specific illness, but its dominant role is in exposure risk. We discuss each type of proxy in turn below, along with examples of pathogens, and implications for control selection.

\begin{figure}[!ht]
\centering
\begin{tikzpicture}[> = stealth, shorten > = 1pt, auto, node distance = 2.5cm, inner sep = 0pt, minimum size = 0.5pt, thick]
  \tikzstyle{every state}=[draw=none, fill=none]
  \node[state] (x) {$X$};
  \node[state] (v) [right of=x] {$V$};
  \node[state] (i) [right of=v] {$I_+$};
  \node[state] (t) [right of=i] {$T$};
  \node[state] (i1) [below of=i] {$I_{-1}$};
  \node[state] (i2) [below=-0.25cm of i1] {$I_{-2}$};
  \node[state] (i3) [below=-0.35cm of i2] {$\vdots$};
  \node[state] (i4) [below=-0.25cm of i3] {$I_{-K}$};
  \node[state] (u) [below of=v] {$U_1$};
  \node[state] (u2) [below of=u] {$U_2$};

  \path[->] (x) edge node {} (v);
  \path[->] (x) edge [out=45, in=135] node {} (i);
  \path[->] (v) edge node {} (i);

  \path[->] (i) edge node {} (t);
  \path[->] (i1) edge node {} (t);
  \path[->] (i2) edge node {} (t);
  \path[->] (i4) edge node {} (t);
  \path[->] (x) edge [out=45, in=135] node {} (t);

  \path[->] (u) edge node {} (x);
  \path[->] (u) edge node {} (v);
  \path[->] (u) edge node {} (i);
  \path[->] (u) edge node {} (i1);
  \path[->] (u) edge node {} (i4);

 \path[->] (u2) edge [out=135, in=225] node {} (v);
 \path[->] (u2) edge [out=315, in=315] node {} (t);
 \path[->] (u2) edge [out=135, in=270] node {} (x);
\end{tikzpicture}
\caption{DAG illustrating classification of test-negative pathogens as proxies for unmeasured confounders of exposure ($U_1$) or testing behavior ($U_2$).}
\label{fig:multiplex-proxies}
\end{figure}

\subsection{Exposure proxies}
Some pathogens share exposure determinants with the focal pathogen (e.g., household crowding, daycare contact, occupation, masking practices) but may be unaffected by the focal vaccine. Detection of such pathogens among test-negative patients can act as a proxy for latent exposure risk. If unmeasured exposure risk confounds $V \rightarrow I^{+}$, then including control pathogens that are similarly exposure-confounded can make equi-confounding more plausible because unmeasured factors that increase focal disease risk also increase control disease risk. 

\paragraph{Examples.}
Table~\ref{tab:proxy-examples} (Panel A) lists examples. For influenza studies, rhinovirus and parainfluenza viruses share the same transmission routes and behavioral determinants of exposure (household crowding, daycare attendance, workplace contact density) without being affected by influenza vaccines. Similarly, for COVID-19 studies, seasonal coronaviruses (OC43, 229E, NL63, HKU1) are among the most closely transmission-aligned pathogens: they spread via similar aerosol/droplet routes and are sensitive to the same indoor-crowding and ventilation-related exposure determinants as SARS-CoV-2, yet are unaffected by mRNA vaccines.

\subsection{Testing proxies}
Other pathogens primarily indicate shared healthcare-seeking and clinical-testing propensity rather than shared exposure risk. Detection of such pathogens among test-negative patients can act as a proxy for latent care-seeking and testing behaviour. If unmeasured determinants of care-seeking $U_2$ confound $V \rightarrow I_+$---for example, when vaccinated and unvaccinated individuals differ in their threshold for seeking clinical testing for a given level of symptomatic illness---then including control pathogens that are similarly testing-confounded can make equi-confounding more plausible. This is often the more pressing threat in modern TND studies, where vaccination status is correlated with healthcare access, insurance, and at-home testing behaviour.

\paragraph{Examples.}
The clearest cases of testing-only proxies are pathogens that produce febrile illness resembling the focal syndrome but spread through non-respiratory routes (Table~\ref{tab:proxy-examples}, Panel B). For instance, enteric viruses (norovirus and rotavirus) and arboviruses (e.g. dengue) can trigger the same symptom-driven clinical encounter as certain respiratory infections---and thus proxy unmeasured care-seeking ($U_2$). Yet because they spread primarily via fecal-oral and vector-borne transmission, they share few of the respiratory exposure structure ($U_1$). 

\begin{table}[ht]
\centering
\caption{Examples of test-negative pathogens classified by their primary proxy role. Exposure proxies share unmeasured determinants of infection risk ($U_1$) with the focal pathogen; testing proxies primarily reflect unmeasured determinants of care-seeking and clinical testing ($U_2$). See Figure~\ref{fig:multiplex-proxies} for the corresponding DAG.}
\label{tab:proxy-examples}
\small
\begin{tabular}{@{}lll@{}}
\toprule
\textbf{Focal pathogen} & \textbf{Candidate control} & \textbf{Shared determinants} \\
\midrule
\multicolumn{3}{@{}l}{\textit{Panel A: Exposure proxies---shared determinants of infection risk ($U_1$)}} \\
\addlinespace[3pt]
Influenza & Rhinovirus/enterovirus & \makecell[tl]{Respiratory/contact transmission; household\\crowding; daycare/school attendance} \\
\addlinespace
Influenza & Parainfluenza viruses & \makecell[tl]{Respiratory droplet transmission; overlapping\\winter seasonality; similar age-contact patterns} \\
\addlinespace
COVID-19 & \makecell[tl]{Seasonal coronaviruses\\(OC43, 229E, NL63, HKU1)} & \makecell[tl]{Aerosol/droplet transmission; sensitivity to\\indoor crowding, ventilation, contact intensity} \\
\addlinespace
COVID-19 & Human metapneumovirus & \makecell[tl]{Respiratory route; winter seasonality; similar\\age-risk profile; congregate-setting exposure} \\
\midrule
\multicolumn{3}{@{}l}{\textit{Panel B: Testing proxies---shared determinants of care-seeking ($U_2$)}} \\
\addlinespace[3pt]
Influenza & Norovirus/rotavirus & \makecell[tl]{Febrile illness triggering same clinical encounter;\\fecal-oral transmission shares $U_2$ (care-seeking)\\but not $U_1$ (respiratory exposure)} \\
\addlinespace
Any respiratory & \makecell[tl]{Arboviruses (e.g., dengue\\on syndromic panels)} & \makecell[tl]{Acute febrile presentation overlapping ILI;\\mosquito-borne transmission orthogonal\\to respiratory exposure determinants} \\
\bottomrule
\end{tabular}
\end{table}

\subsection{Most respiratory pathogens serve as both}
The exposure-proxy/testing-proxy distinction is a useful heuristic, but not absolute: most pathogens carry both exposure and testing signals, and the classification reflects which dimension \emph{dominates} in a given setting. This matters because the respiratory pathogens that dominate multiplex panels, such as rhinovirus, parainfluenza, human metapneumovirus, and seasonal coronaviruses, share \emph{both} the transmission route (hence the exposure determinants in $U_1$) \emph{and} the symptom profile that triggers testing (hence the care-seeking determinants in $U_2$). Under the equi-confounding view, this dual alignment is precisely what makes them strong negative control candidates: they are informative about both sources of unmeasured confounding simultaneously. By contrast, a single-dimension proxy helps only against its matched threat (e.g. testing proxies against differential testing, exposure proxies against differential exposure). Pooling complementary exposure- and testing-ony proxies may reduce bias but approximates equi-confounding only under strong balancing conditions. The practical question for any study is which confounder, $U_1$ (exposure) or $U_2$ (testing), poses the larger unmeasured threat and whether the chosen controls capture it; when both are plausible, joint proxies (respiratory pathogens with similar transmission and presentation) are the most conservative choice.

\section{Three principles for control selection using multiplex testing}\label{sec:three_principles}

We now illustrate the framework with three (non-exhaustive) principles for control selection and their corresponding DAGs (Figure~\ref{fig:violations}).

\subsection{Principle 1: Vaccine irrelevance---exclude pathogens affected by the focal vaccine}
A direct violation occurs when the focal vaccine influences the probability of symptomatic illness for a control pathogen. This could be through cross-protection afforded by vaccine targeted antibodies or viral interference~\cite{fengAssessmentVirusInterference2017} through nonspecific protection or ecological competition. The standard TND contrast then no longer isolates the vaccine's effect on the focal pathogen; instead, it partly compares vaccination odds among cases to odds among controls whose membership is itself affected by vaccination. Under equi-confounding, we have shown previously that this biases VE toward the null when vaccine effects are directionally similar for the focal and control pathogens \citep{boyerIdentificationEstimationVaccine2026}.

Figure~\ref{fig:violations-vaccine-irrelevance} shows this with the dashed $V \rightarrow I_{-2}$ arrow (direct protection) and $V \rightarrow I_{+} \rightarrow I_{-1}$ arrow (indirect protection), which violate the ``no effect on controls'' condition. 

\begin{figure}[p]
\centering
\begin{subfigure}[t]{0.48\textwidth}
\centering
\begin{tikzpicture}[> = stealth, shorten > = 1pt, auto, node distance = 2.3cm, inner sep = 0pt, minimum size = 0.5pt, thick]
  \tikzstyle{every state}=[draw=none, fill=none]
  \node[state] (x) {$X$};
  \node[state] (v) [right of=x] {$V$};
  \node[state] (i) [right of=v] {$I_+$};
  \node[state] (t) [right of=i] {$T$};
  \node[state] (i1) [below of=i] {$I_{-1}$};
  \node[state] (i2) [below=-0.25cm of i1] {$I_{-2}$};
  \node[state] (i3) [below=-0.35cm of i2] {$\vdots$};
  \node[state] (i4) [below=-0.25cm of i3] {$I_{-K}$};
  \node[state] (u) [below of=v] {$U$};

  \path[->] (x) edge node {} (v);
  \path[->] (x) edge [out=45, in=135] node {} (i);
  \path[->] (v) edge node {} (i);

  \path[->] (i) edge node {} (t);
  \path[->] (i1) edge node {} (t);
  \path[->] (i2) edge node {} (t);
  \path[->] (i4) edge node {} (t);
  \path[->] (x) edge [out=45, in=135] node {} (t);

  \path[->] (u) edge node {} (x);
  \path[->] (u) edge node {} (v);
  \path[->] (u) edge node {} (i);
  \path[->] (u) edge node {} (t);
  \path[->] (u) edge node {} (i1);
  \path[->] (u) edge node {} (i2);
  \path[->] (u) edge node {} (i4);
\end{tikzpicture}
\subcaption{TND under multiplex testing\label{fig:multiplex-decomp}}
\end{subfigure}
\begin{subfigure}[t]{0.48\textwidth}
\begin{tikzpicture}[> = stealth, shorten > = 1pt, auto, node distance = 2.3cm, inner sep = 0pt, minimum size = 0.5pt, thick]
  \tikzstyle{every state}=[draw=none, fill=none]
  \node[state] (x) {$X$};
  \node[state] (v) [right of=x] {$V$};
  \node[state] (i) [right of=v] {$I_+$};
  \node[state] (t) [right of=i] {$T$};
  \node[state] (i1) [below of=i] {$I_{-1}$};
  \node[state] (i2) [below=-0.25cm of i1] {$I_{-2}$};
  \node[state] (i3) [below=-0.35cm of i2] {$\vdots$};
  \node[state] (i4) [below=-0.25cm of i3] {$I_{-K}$};
  \node[state] (u) [below of=v] {$U$};

  \path[->] (x) edge node {} (v);
  \path[->] (x) edge [out=45, in=135] node {} (i);
  \path[->] (v) edge node {} (i);
  \path[->] (v) edge [dashed] node {} (i2);

  \path[->] (i) edge node {} (t);
  \path[->] (i) edge [dashed] node {} (i1);
  \path[->] (i1) edge [dashed] node {} (t);
  \path[->] (i2) edge [dashed] node {} (t);
  \path[->] (i4) edge node {} (t);
  \path[->] (x) edge [out=45, in=135] node {} (t);

  \path[->] (u) edge node {} (x);
  \path[->] (u) edge node {} (v);
  \path[->] (u) edge node {} (i);
  \path[->] (u) edge node {} (t);
  \path[->] (u) edge node {} (i1);
  \path[->] (u) edge node {} (i2);
  \path[->] (u) edge node {} (i4);
\end{tikzpicture}
\subcaption{Vaccine irrelevance violation\label{fig:violations-vaccine-irrelevance}}
\end{subfigure}
\hfill
\begin{subfigure}[t]{0.48\textwidth}
\centering
\begin{tikzpicture}[> = stealth, shorten > = 1pt, auto, node distance = 2.3cm, inner sep = 0pt, minimum size = 0.5pt, thick]
  \tikzstyle{every state}=[draw=none, fill=none]
  \node[state] (x) {$X$};
  \node[state] (v) [right of=x] {$V$};
  \node[state] (i) [right of=v] {$I_+$};
  \node[state] (t) [right of=i] {$T$};
  \node[state] (i1) [below of=i] {$I_{-1}$};
  \node[state] (i2) [below=-0.25cm of i1] {$I_{-2}$};
  \node[state] (i3) [below=-0.35cm of i2] {$\vdots$};
  \node[state] (i4) [below=-0.25cm of i3] {$I_{-K}$};
  \node[state] (u) [below of=v] {$U$};
  \node[state] (v2) [below=0.9cm of u] {$V_2$};

  \path[->] (x) edge node {} (v);
  \path[->] (x) edge [out=45, in=135] node {} (i);
  \path[->] (v) edge node {} (i);

  \path[->] (i) edge node {} (t);
  \path[->] (i1) edge node {} (t);
  \path[->] (i2) edge [dashed] node {} (t);
  \path[->] (i4) edge node {} (t);
  \path[->] (x) edge [out=45, in=135] node {} (t);

  \path[->] (u) edge node {} (x);
  \path[->] (u) edge node {} (v);
  \path[->] (u) edge [dashed] node {} (v2);
  \path[->] (u) edge node {} (i);
  \path[->] (u) edge node {} (t);
  \path[->] (u) edge node {} (i1);
  \path[->] (u) edge node {} (i2);
  \path[->] (u) edge node {} (i4);

  \path[->] (v2) edge [dashed] node {} (i2);
\end{tikzpicture}
\subcaption{Other-vaccine entanglement\label{fig:violations-vaccine-entanglement}}
\end{subfigure}
\hfill
\begin{subfigure}[t]{0.48\textwidth}
\centering
\begin{tikzpicture}[> = stealth, shorten > = 1pt, auto, node distance = 2.3cm, inner sep = 0pt, minimum size = 0.5pt, thick]
  \tikzstyle{every state}=[draw=none, fill=none]
  \node[state] (x) {$X$};
  \node[state] (v) [right of=x] {$V$};
  \node[state] (i) [right of=v] {$I_+$};
  \node[state] (t) [right of=i] {$T$};
  \node[state] (i1) [below of=i] {$I_{-1}$};
  \node[state] (i2) [below=-0.25cm of i1] {$I_{-2}$};
  \node[state] (i3) [below=-0.35cm of i2] {$\vdots$};
  \node[state] (i4) [below=-0.25cm of i3] {$I_{-K}$};
  \node[state] (u) [below of=v] {$U$};
  \node[state] (t2) [right of=i2] {$T_2$};

  \path[->] (x) edge node {} (v);
  \path[->] (x) edge [out=45, in=135] node {} (i);
  \path[->] (v) edge node {} (i);

  \path[->] (i) edge node {} (t);
  \path[->] (i1) edge node {} (t);
  \path[->] (i2) edge node {} (t);
  \path[->] (i4) edge node {} (t);
  \path[->] (x) edge [out=45, in=135] node {} (t);

  \path[->] (u) edge node {} (x);
  \path[->] (u) edge node {} (v);
  \path[->] (u) edge node {} (i);
  \path[->] (u) edge node {} (t);
  \path[->] (u) edge node {} (i1);
  \path[->] (u) edge [dashed] node {} (i2);
  \path[->] (u) edge node {} (i4);

  \path[->] (i2) edge [dashed] node {} (t2);
  \path[->] (t2) edge [dashed] node {} (t);
\end{tikzpicture}
\subcaption{Testing-process mismatch\label{fig:violations-at-home-test}}
\end{subfigure}

\caption{DAG extension under multiplex testing, decomposing test-negative illness into pathogen-specific components ($I_{-1},\ldots,I_{-K}$). Panel (a) shows the general multiplex DAG structure. Panels (b) through (d) illustrate common violations motivating pathogen-specific test-negative control selection: (b) vaccine irrelevance violation where $V$ affects a control pathogen; (c) other-vaccine entanglement where a correlated vaccine $V_2$ affects a control pathogen; (d) testing-process mismatch where at-home testing $T_2$ alters clinical testing differentially. Dashed arrows denote pathways that, if present, can compromise identification when pooling test-negative pathogens. $V$ indicates focal vaccination; $I_+$, focal (test-positive) illness; $I_{-k}$, test-negative illness due to pathogen $k$; $T$, clinical testing; $T_2$, at-home testing; $X$, measured covariates; and $U$, unmeasured factors.}
\label{fig:violations}
\end{figure}

\subsection{Principle 2: Avoid intervention entanglement---exclude or adjust for pathogens with their own preventative intervention}
Suppose a control pathogen has its own vaccine $V_2$ (or another pathogen-specific preventive intervention) that reduces the risk of $I_{-2}$, and uptake of $V_2$ is correlated with focal-vaccine uptake through shared determinants of vaccination behavior (represented by $U$). Conditioning on that pathogen's test-negative illness can then induce bias even when $V$ has no causal effect on it \citep{dollEffectsConfoundingBias2022,payneImpactAccountingCorrelation2023}. As Figure~\ref{fig:violations-vaccine-entanglement} shows, the $U \rightarrow V_2 \rightarrow I_{-2}$ pathway adds an additional source of asymmetry between focal and control outcomes: it does not by itself violate equi-confounding, but it imposes additional balance conditions that make the assumption harder to justify.

\subsection{Principle 3: Testing-process comparability---account for alternative testing pathways (e.g., at-home tests)}
TND studies rely on a clinically tested population, but alternative channels such as at-home rapid tests can make care-seeking and clinical testing pathogen-specific. If a pathogen can be identified at home (via $T_2$) and this alters clinical testing ($T$), the selection mechanism into $T=1$ differs by pathogen \citep{qasmiehMagnitudePotentialBiases2024}. Figure~\ref{fig:violations-at-home-test} depicts one such structure, motivating exclusion of affected pathogens or restriction/stratification by testing era and setting.

More broadly, heterogeneous reasons for testing---at-home tests, mandatory screening, contact tracing, and surveillance-driven testing---each induce distinct selection mechanisms and target distinct VE estimands. \textcite{yuTestNegativeDesignsVarious2026} and \textcite{lipsitchInterpretingVaccineEfficacy2021a} show that the standard odds-ratio estimator is generally biased when individuals with different reasons for testing are pooled, and propose stratification by reason for testing. This complements the pathogen-level perspective developed here: just as control pathogens may satisfy or violate the identifying assumptions to differing degrees (Figure~\ref{fig:violations}), so may different testing pathways within a given pathogen. Investigators should therefore consider both \emph{which pathogens} and \emph{which testing pathways} to include, and stratify or restrict accordingly. 

\section{Simulation study}\label{sec:sim}

We conducted a simulation study to evaluate how multiplex-informed control selection affects bias of test-negative vaccine effectiveness estimators under common deviations from idealized identifying conditions.

\subsection{Data-generating process}
For each replicate we generated a cohort of $n = 10{,}000$ individuals with covariates $X \sim N(0,1)$ and an unmeasured confounder $U \sim N(0,1)$ that affects both exposure risk and care-seeking. Vaccination was drawn as $\Pr(V=1\mid X,U)=\mathrm{logit}^{-1}(-0.5 + 0.3X + 0.5U)$. The focal infection followed a log-linear model
\[
\Pr(I_+=1\mid V,X,U)=\exp(-3.0 + 0.2X + 0.4U + \beta_V V),\qquad \beta_V = \log(0.5),
\]
giving a true VE of 50\%. The $K=5$ test-negative pathogens were generated analogously,
\[
\Pr(I_{-k}=1\mid V,X,U)=\exp(\gamma_{0k} + \gamma_{Xk} X + 0.4U),
\]
with intercepts $\gamma_{0k}\in[-3.5,-2.9]$ and $\gamma_{Xk}\in[0.17,0.25]$ inducing realistic heterogeneity in baseline prevalence. The shared $U$ coefficient ensures equi-confounding holds exactly in the baseline scenario \citep{boyerIdentificationEstimationVaccine2026}; infections were then resolved to mutually exclusive outcomes via multinomial sampling. Clinical testing was drawn as $\Pr(T=1\mid V, X, U, \text{symptomatic})=\mathrm{logit}^{-1}(1.0 + 0.1X + 0.3U)\times\mathbbm{1}\{\text{symptomatic}\}$, where only symptomatic individuals can be tested. The shared $U$ coefficient across pathogens ensures equi-confounding in the testing process (equi-selection). This yielded roughly 300 focal cases and 1{,}300--1{,}700 controls per replicate.

\subsection{Scenarios}
We evaluated nine scenarios. Scenarios 1--4 correspond to the violations in Figure~\ref{fig:violations}; Scenarios 5--7 stress-test additional identification issues formalised in Appendix~\ref{sec:app_sims}.

\begin{enumerate}
\item \textbf{Baseline (idealized validity):} all control pathogens satisfy vaccine irrelevance, share the focal pathogen's confounding structure, and have comparable testing processes.
\item \textbf{Vaccine irrelevance violation:} the focal vaccine has 30\% cross-protective effectiveness against control pathogen 2 ($\beta_V^{(2)} = \log(0.7)$).
\item \textbf{Other-vaccine entanglement:} control pathogen 2 has its own 80\% effective vaccine $V_2$ strongly correlated with $V$ through shared dependence on $U$, $\Pr(V_2=1\mid X,U)=\mathrm{logit}^{-1}(0.3X + 2.0U)$.
\item \textbf{At-home testing:} individuals with control pathogen 2 may use at-home rapid tests with differential uptake by vaccination status (70\% vs.\ 40\%); positive at-home tests reduce clinical testing probability by 80\%.
\item \textbf{Pan-negative controls} (5a, valid; 5b, biased): pan-negative episodes are added to the control pool; under 5b the focal vaccine also lowers pan-negative episode rates, violating exchangeability (Appendix~\ref{sec:app_panneg_sim}).
\item \textbf{Co-detections} (6a, control--control; 6b, focal--control): the multiplex panel may report multiple positives per episode, either among controls only (6a) or between the focal pathogen and a control (6b); both scenarios additionally evaluate a modified-Poisson difference-in-differences estimator (Appendix~\ref{sec:app_coinf_sim}).
\item \textbf{Imperfect multiplex testing:} the focal and control assays have pathogen-specific sensitivities below one; we compare pooled estimators with and without pan-negatives in the control pool (Appendix~\ref{sec:app_imperfect_sim}).
\end{enumerate}

\subsection{Estimators and performance measures}
Among tested episodes ($T=1$), the \textit{pooled} estimator regresses case status on $V$ and $X$ using all $K$ control pathogens, and the \textit{screened} estimator does so after excluding the problematic pathogen (pathogen 2 in Scenarios 2--4; pan-negative episodes in Scenario 7). The two coincide in Scenarios 1, 5, and 6. Scenario 6 additionally reports the pairwise difference-in-differences estimator described in Appendix~\ref{sec:app_coinf_sim}. We ran 1{,}000 replicates per scenario and report bias, RMSE, and 95\% Wald CI coverage (robust SE). We also computed pathogen-specific estimates $\{\widehat{VE}_k\}_{k=1}^K$ and a Wald test for their homogeneity with cluster-robust standard errors (clustering on individuals to account for the shared case group); see Appendix~\ref{sec:app_pooling} for details.

\subsection{Results}
\input{\resultspath tables/simulation_results.tex}

Table~\ref{tab:sim_results} summarises results across all scenarios. In the baseline, both estimators are essentially unbiased (mean VE 49.5\%, coverage 95.7\%) and the homogeneity test rejects in 4.9\% of replicates, consistent with correct size (Table~\ref{tab:homogeneity}). Scenarios 2 and 4 produce the largest pooled bias ($-3.1$ and $-2.6$~pp; coverage 93\%); excluding pathogen 2 restores near-nominal performance in both ($|\text{bias}|\leq 0.7$~pp, coverage $\geq 94\%$). Scenario 3 (other-vaccine entanglement) yields a more modest pooled bias of $-1.2$~pp because entanglement operates through only one of five control pathogens. The homogeneity test detects these violations in 52\%, 17\%, and 48\% of replicates in Scenarios 2, 3, and 4 respectively. Figure~\ref{fig:pathogen_specific} shows pathogen-specific VE estimates, illustrating how the problematic pathogen (pathogen 2) exhibits systematically different estimates under each violation scenario, providing a diagnostic signal that could guide exclusion decisions in applied settings.

Results for the additional Scenarios 5--7 confirm the corresponding identification results derived in Appendix~\ref{sec:app_sims}.
\begin{figure}[ht]
\centering
\includegraphics[width=0.9\textwidth]{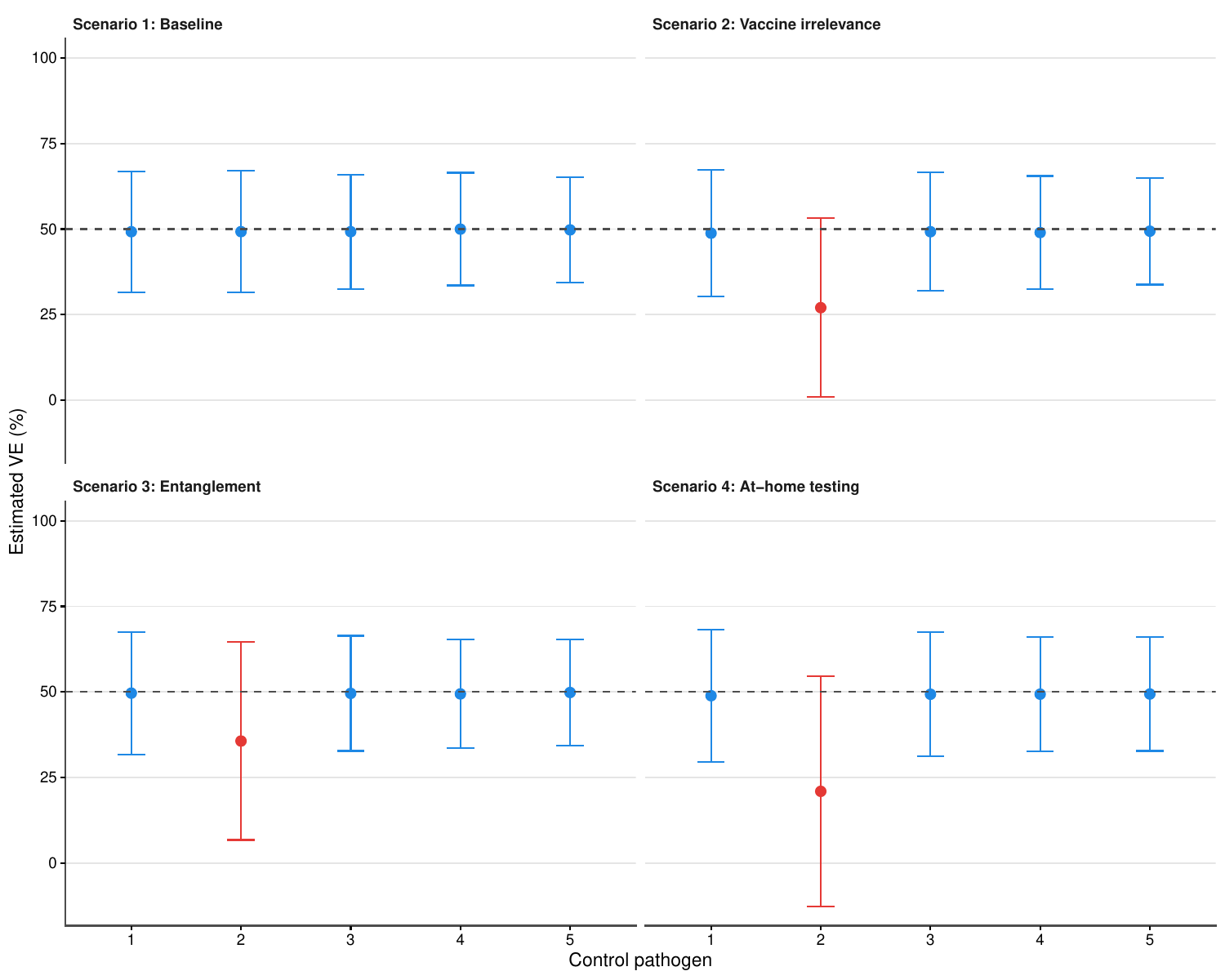}
\caption{Pathogen-specific vaccine effectiveness estimates across scenarios. Pathogen 2 (highlighted) exhibits systematically biased estimates under each assumption violation scenario (2--4), while remaining consistent with other pathogens under the baseline scenario. Dashed horizontal line indicates true VE of 50\%.}
\label{fig:pathogen_specific}
\end{figure}

\section{Discussion}\label{sec:discussion}
Multiplex respiratory testing enables more purposeful TND control selection through pathogen screening, yet no formal guidance exists for it. We introduce a framework grounded in the control-dependent identifying assumptions of the TND---most importantly that vaccination does not affect the control illness and that unmeasured confounding relates focal and non-focal illness in a structured way \citep{sullivanTheoreticalBasisTestNegative2016,lewnardMeasurementVaccineDirect2018,boyerIdentificationEstimationVaccine2026}. The accompanying DAGs (Figures~\ref{fig:tnd-dags}--\ref{fig:violations}) make explicit which pathways motivate exclusion or adjustment and yield three practical principles: exclude pathogens affected by the focal vaccine; exclude or adjust for pathogens with their own vaccines; and account for pathogen-specific testing pathways.

Our simulations show that violations concentrated in even a single control pathogen can substantially bias pooled TND estimates: when the focal vaccine violated vaccine irrelevance for one of five controls, or when at-home testing induced differential selection by vaccination status, the pooled estimator was biased with below-nominal coverage (Table~\ref{tab:sim_results}); excluding the offending pathogen restored near-nominal performance in both cases. The Wald homogeneity test detected these violations as a diagnostic but requires adequate sample size and careful choice of decision thresholds.

Pathogen-specific selection introduces a bias--variance tradeoff: excluding controls reduces bias when violations are present but shrinks the effective sample size. In our simulations, dropping one pathogen of five retained precision comparable to the pooled estimator, though aggressive exclusion could materially inflate variance when case counts are limited. Combining the homogeneity test with the pairwise diagnostic (Appendix~\ref{sec:app_pooling}) offers a data-driven way to identify which pathogen to drop in large samples, but pre-specifying the control set on biological grounds is generally preferable. We provide a step-by-step workflow in Appendix~\ref{sec:app_workflow}.

Our framework complements recent work on heterogeneous reasons for testing. \textcite{yuTestNegativeDesignsVarious2026} show that pooling individuals tested for different reasons (symptoms, mandatory screening, contact tracing) can bias VE because each pathway involves a different selection mechanism and estimand. This heterogeneity operates at the population level and ours at the pathogen level, but the two interact: a pathogen detected mainly through mandatory screening (e.g., workplace surveillance) rather than symptom-driven testing can enrich controls with a differently confounded population. When testing indication is recorded, investigators should consider stratifying by both pathogen and reason for testing.

Our work also connects to recently proposed ``platform TNDs'' that use a common surveillance source and universal multiplex testing to evaluate multiple vaccines at once \citep{chuaUseTestnegativeControls2020a}, where each vaccine's controls are the test-negatives for pathogens it does not target; our principles apply directly to selecting each arm's control pathogens.

Several issues merit further attention. First, multiplex panels vary in sensitivity, specificity, and composition across sites and time, and this heterogeneity may interact with control selection and should be documented. Second, the framework relies on substantive knowledge (e.g., plausibility of cross-protection or vaccine correlation) that may be uncertain, so sensitivity-analysis frameworks tailored to multiplex controls are an important next step. Third, equi-confounding, while a tractable identification route, is a strong assumption that may be implausible in many settings; developing multiplex-informed diagnostics for its violation is a promising direction. It is closely related to proximal causal inference \citep{tchetgenIntroductionProximalCausal2024}, which formalizes when proxy variables can recover causal effects under more general unmeasured confounding structures, and explicit proximal--TND connections exist \citep{liDoubleNegativeControl2023}; extending them to the multiplex setting, where each pathogen-specific control may satisfy proximal conditions to differing degrees, is a natural next step. Finally, the practical value of multiplex-informed selection depends on the prevalence and severity of violations in real data, so empirical evaluation on large multiplex surveillance databases would complement our theory and simulations.



\newpage 
\printbibliography

\newpage 

\appendix

\section{Appendix: Identification results for the conditional risk ratio via the test-negative odds ratio}
\numberwithin{equation}{section} 
\setcounter{equation}{0}         
\renewcommand{\theequation}{\thesection.\arabic{equation}}

\numberwithin{figure}{section} 
\setcounter{figure}{0}         
\renewcommand{\thefigure}{\thesection\arabic{figure}}

\numberwithin{table}{section} 
\setcounter{table}{0}         
\renewcommand{\thetable}{\thesection\arabic{table}}

\subsection{Notation and sampling}
Recall the observed outcome $Y\in\{-1,0,1\}$ where $Y=1$ denotes test-positive illness ($I_+=1$ and $T=1$), $Y=-1$ denotes test-negative illness ($I_-=1$ and $T=1$), and $Y=0$ denotes no recorded test outcome ($T=0$). For each $v\in\{0,1\}$, let $Y^{v}$ denote the counterfactual outcome under intervention that sets $V = v$. Throughout the appendix we abbreviate $Y^{v=0}$ and $Y^{v=1}$ (used in the main text) as $Y^{0}$ and $Y^{1}$; the two notations refer to the same potential outcomes.

The test-negative odds ratio among the tested is
\begin{equation}\label{eq:app_or_t1}
OR_{T=1}(X)\equiv
\frac{\Pr(Y=1\mid T=1,V=1,X)\,\Pr(Y=-1\mid T=1,V=0,X)}
{\Pr(Y=-1\mid T=1,V=1,X)\,\Pr(Y=1\mid T=1,V=0,X)}.
\end{equation}

\paragraph{A useful algebraic identity.}
Because $T=1$ implies $Y\in\{-1,1\}$, within strata of $(V,X)$ we have
\begin{equation*}\label{eq:app_odds_invariance}
\frac{\Pr(Y=1\mid T=1,V=v,X)}{\Pr(Y=-1\mid T=1,V=v,X)}
=
\frac{\Pr(Y=1\mid V=v,X)}{\Pr(Y=-1\mid V=v,X)} \qquad \text{for } v\in\{0,1\},
\end{equation*}
since
$\Pr(Y=y\mid T=1,V=v,X)=\Pr(Y=y\mid V=v,X)/\Pr(T=1\mid V=v,X)$ for $y\in\{-1,1\}$,
and the normalizing factor cancels in the ratio.
Therefore,
\begin{equation}\label{eq:app_or_as_uncond}
OR_{T=1}(X)
=
\frac{
\Pr(Y=1\mid V=1,X)/\Pr(Y=-1\mid V=1,X)
}{
\Pr(Y=1\mid V=0,X)/\Pr(Y=-1\mid V=0,X)
}.
\end{equation}

\subsection{Identification under no unmeasured confounding (Section 2.4)}\label{sec:app_identification_1}
The target estimand in the main text is the conditional causal risk ratio
\begin{equation}\label{eq:app_target_rr}
RR(X)\equiv \frac{\Pr(Y^{1}=1\mid X)}{\Pr(Y^{0}=1\mid X)}.
\end{equation}
We state assumptions in a form compatible with the simplified DAG in Figure~\ref{fig:base-dag}.

\begin{enumerate}[label=(A\arabic*),ref=A\arabic*]
\item \label{ass:A1} \textit{Consistency}: If $V=v$, then $Y=Y^{v}$ almost surely.

\item \label{ass:A2} \textit{No unmeasured confounding}: For the medically attended illness outcome,
\begin{equation*}\label{eq:app_exchangeability}
Y^{v}\ \perp\!\!\!\perp\ V \mid X \qquad \text{for } v\in\{0,1\}.
\end{equation*}

\item \label{ass:A3} \textit{No causal effect on test-negative illness}:
\begin{equation*}\label{eq:app_noeffect_controls}
\Pr(Y^{1}=-1\mid X)=\Pr(Y^{0}=-1\mid X),
\end{equation*}
i.e., intervening on $V$ does not change the (counterfactual) risk of test-negative illness.

\item \label{ass:A4} \textit{Positivity}: For all $x$ in the support of $X$, $\Pr(V=v\mid X=x)>0$ for $v\in\{0,1\}$.
\end{enumerate}

Alternatively, \textcite{jiangTNDDREfficientDoubly2023} state the assumptions in terms of conditional exchangeability with respect to the test-positive outcome and ``control exchangeability'' with respect to the (observed) test-negative outcome, where the latter embeds the no effect on test-negative illness assumption and makes the connection to ``random sampling'' of controls explicit, i.e.
\begin{enumerate}[label=(A\arabic*$^*$),ref=A\arabic*$^*$,start=2]
  \item \label{ass:A2star} \textit{Conditional exchangeability for the test-positive outcome}: $$\mathbbm{1}(Y^{v} = 1)\ \perp\!\!\!\perp\ V \mid X \qquad \text{for } v\in\{0,1\}.$$
  \item \label{ass:A3star} \textit{Control exchangeability for the test-negative outcome}: $$\mathbbm{1}(Y = -1)\ \perp\!\!\!\perp\ V \mid X \qquad \text{for } v\in\{0,1\}.$$
\end{enumerate}
We note that, under Assumption \ref{ass:A3star}, we have 
\begin{equation*}
  \Pr(V = v \mid Y = -1, X) = \Pr(V = v \mid X) \quad \text{ for } v\in\{0,1\},
\end{equation*}
which is the formal statement of the idea that the test-negative controls are effectively a conditionally random sample of vaccination status (or unbiased estimate of the propensity score) in the source population as claimed in Section~\ref{sec:two_views} of the main text.

\begin{proposition}\label{prop:identification_nouc}
Under Assumptions \ref{ass:A1}--\ref{ass:A4}, the conditional causal risk ratio $RR(X)$ is identified by the TND odds ratio $OR_{T=1}(X)$ using the observed data $(Y, V, X, T = 1)$ among the tested, i.e. 
\begin{equation*}\label{eq:app_id_nouc}
RR(X) = OR_{T=1}(X).
\end{equation*}
\end{proposition}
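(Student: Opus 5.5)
The plan is to work from the algebraic identity \eqref{eq:app_or_as_uncond}, which already removes the conditioning on $T=1$. It rewrites $OR_{T=1}(X)$ as a ratio of odds, case versus control, computed in the full (untested-inclusive) population within strata of $(V,X)$. The remaining task is to convert each of the four observed probabilities $\Pr(Y=y\mid V=v,X)$, for $y\in\{-1,1\}$ and $v\in\{0,1\}$, into counterfactual quantities, and then show that the control factors cancel.

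\emph{Step 1 (consistency).} By \ref{ass:A1}, on the event $\{V=v\}$ we have $Y=Y^{v}$. Hence
\[
\Pr(Y=y\mid V=v,X)=\Pr(Y^{v}=y\mid V=v,X).
\]
Positivity \ref{ass:A4} guarantees these conditional probabilities are well defined for both $v$.

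\emph{Step 2 (exchangeability).} By \ref{ass:A2}, $Y^{v}\perp\!\!\!\perp V\mid X$, applied to the whole three-level variable $Y^v$. This gives
\[
\Pr(Y^{v}=y\mid V=v,X)=\Pr(Y^{v}=y\mid X)
\]
simultaneously for $y=1$ and $y=-1$.

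\emph{Step 3 (rearrangement).} Substituting into \eqref{eq:app_or_as_uncond} and regrouping gives
\[
OR_{T=1}(X)=\frac{\Pr(Y^{1}=1\mid X)}{\Pr(Y^{0}=1\mid X)}\cdot\frac{\Pr(Y^{0}=-1\mid X)}{\Pr(Y^{1}=-1\mid X)} .
\]
The first factor is $RR(X)$ by definition \eqref{eq:app_target_rr}.

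\emph{Step 4 (no effect on controls).} The second factor equals one by \ref{ass:A3}, which yields $RR(X)=OR_{T=1}(X)$.

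I would also remark that the alternative assumptions \ref{ass:A2star}--\ref{ass:A3star} give the same conclusion by a parallel route. \ref{ass:A2star} handles the case probabilities exactly as in Step 2. \ref{ass:A3star} makes $\Pr(Y=-1\mid V=v,X)$ free of $v$, so the control factors cancel directly on the observed scale without passing through counterfactuals.

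The only delicate point is not the algebra but well-definedness: the odds ratio requires $\Pr(T=1\mid V=v,X)>0$ and $\Pr(Y=-1\mid V=v,X)>0$, and the target requires $\Pr(Y^0=1\mid X)>0$. These go beyond the stated \ref{ass:A4}. I would handle this by noting explicitly that the identity \eqref{eq:app_or_as_uncond} presupposes these quantities are nonzero, and treat that as part of the positivity assumption rather than add a new one.
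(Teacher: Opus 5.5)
Your argument is correct and matches the paper's proof step for step: you use identity \eqref{eq:app_or_as_uncond} to drop the conditioning on $T=1$, apply consistency (\ref{ass:A1}) and then exchangeability (\ref{ass:A2}) to turn each observed probability into $\Pr(Y^{v}=y\mid X)$, and cancel the control terms using \ref{ass:A3}. Your two side remarks are also sound and go slightly beyond what the paper states: the parallel route via \ref{ass:A2star}--\ref{ass:A3star}, and the observation that the odds ratio additionally needs $\Pr(T=1\mid V=v,X)>0$ and $\Pr(Y=-1\mid V=v,X)>0$, which the stated positivity assumption \ref{ass:A4} does not by itself guarantee.
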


\begin{proof}
Starting with $OR_{T=1}(X)$. By the identity in \eqref{eq:app_or_as_uncond} and consistency (Assumption~\ref{ass:A1}),
\begin{align*}
OR_{T=1}(X)
&=
\frac{
\Pr(Y=1\mid V=1,X)/\Pr(Y=-1\mid V=1,X)
}{
\Pr(Y=1\mid V=0,X)/\Pr(Y=-1\mid V=0,X)
}\\
&=
\frac{
\Pr(Y^{1}=1\mid V=1,X)/\Pr(Y^{1}=-1\mid V=1,X)
}{
\Pr(Y^{0}=1\mid V=0,X)/\Pr(Y^{0}=-1\mid V=0,X)
}.
\end{align*}
By exchangeability (Assumption~\ref{ass:A2}),
$\Pr(Y^{v}=y\mid V=v,X)=\Pr(Y^{v}=y\mid X)$ for $y\in\{-1,1\}$ and $v\in\{0,1\}$.
Thus
\[
OR_{T=1}(X)
=
\frac{
\Pr(Y^{1}=1\mid X)/\Pr(Y^{1}=-1\mid X)
}{
\Pr(Y^{0}=1\mid X)/\Pr(Y^{0}=-1\mid X)
}.
\]
Finally, by the no-effect-on-controls restriction (Assumption~\ref{ass:A3}),
$\Pr(Y^{1}=-1\mid X)=\Pr(Y^{0}=-1\mid X)$ so these terms cancel, giving
\[
OR_{T=1}(X)=\frac{\Pr(Y^{1}=1\mid X)}{\Pr(Y^{0}=1\mid X)}=RR(X),
\]
which proves \eqref{eq:app_id_nouc}.
\end{proof}

\subsection{Identification under equi-confounding (Section 2.5)}\label{sec:app_identification_2}
This subsection formalizes the weaker alternative in Section~\ref{sec:identification_2}, in which
test-negative illness is used as a negative control outcome and an ``equi-confounding'' restriction
replaces full exchangeability. However, we must first modify the target estimand. 

\paragraph{Target estimand under equi-confounding.}
Define the conditional risk ratio among the vaccinated,
\begin{equation}\label{eq:app_rr_vacc}
RR_V(X)\equiv \frac{\Pr(Y^{1}=1\mid V=1,X)}{\Pr(Y^{0}=1\mid V=1,X)}.
\end{equation}
As noted in the main text, if $X$ is sufficiently rich to eliminate relevant effect modification, then $RR_V(X)=RR(X)$.

We state assumptions in a form compatible with the simplified DAG in Figure~\ref{fig:equiconfounding}.

\begin{enumerate}[label=(A\arabic*),ref=A\arabic*,resume]
\item \label{ass:A5} \textit{Consistency}: If $V=v$, then $Y=Y^{v}$ almost surely. Same as Assumption~\ref{ass:A1}.

\item \label{ass:A6} \textit{Equi-confounding on the multiplicative scale}: For all $x$ in the support of $X$,
\begin{equation}\label{eq:app_equiconf}
\frac{\Pr(Y^{0}=1\mid V=1,X=x)}{\Pr(Y^{0}=1\mid V=0,X=x)}
=
\frac{\Pr(Y^{0}=-1\mid V=1,X=x)}{\Pr(Y^{0}=-1\mid V=0,X=x)}.
\end{equation}

\item \label{ass:A7} \textit{No causal effect on test-negative illness}: The restriction on the focal vaccine holds within vaccination strata:
\begin{equation}\label{eq:app_noeffect_controls_strata}
\Pr(Y^{1}=-1\mid V=1,X)=\Pr(Y^{0}=-1\mid V=1,X).
\end{equation}

\item \label{ass:A8} \textit{Overlap}: For $y\in\{-1,1\}$ and $v\in\{0,1\}$, let $\mathcal{S}_y(v)$ denote the support of the joint law of $(Y^0=y,V=v,X)$.
Assume that, for $v\in\{0,1\}$,
\begin{equation}\label{eq:app_overlap_support_main_alt}
\mathcal{S}_{1}(1)\subseteq \mathcal{S}_{1}(0)
\qquad\text{and}\qquad
\mathcal{S}_{1}(v)\subseteq \mathcal{S}_{-1}(v).
\end{equation}
Equivalently, whenever $\Pr(Y^0=1,V=1,X=x)>0$, it also holds that
$\Pr(Y^0=1,V=0,X=x)>0$ and $\Pr(Y^0=-1,V=v,X=x)>0$ for the relevant vaccination status $v$.
This condition ensures that the conditional odds defining $OR_{T=1}(X)$ are well-defined on the support of the target
estimand $RR_V(X)$.
\end{enumerate}

\begin{proposition}\label{prop:identification_equiconf}
Under Assumptions \ref{ass:A5}--\ref{ass:A8}, the conditional causal risk ratio among the vaccinated $RR_V(X)$ is identified by the TND odds ratio $OR_{T=1}(X)$ using the observed data $(Y, V, X, T = 1)$ among the tested, i.e.
\begin{equation}\label{eq:app_id_equiconf}
OR_{T=1}(X)=RR_V(X).
\end{equation}
\end{proposition}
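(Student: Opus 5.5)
The plan is to follow the same template as the proof of Proposition~\ref{prop:identification_nouc}. We first reduce $OR_{T=1}(X)$ to a ratio of unconditional odds via the identity \eqref{eq:app_or_as_uncond}. We then rewrite each factor in counterfactual terms. The difference from the earlier proof is that, instead of exchangeability, we use equi-confounding (Assumption~\ref{ass:A6}) to move the $V=0$ quantities into the $V=1$ stratum. Overlap (Assumption~\ref{ass:A8}) will be invoked at the outset: on the support of $RR_V(X)$, all four probabilities $\Pr(Y=\pm 1\mid V=v,X)$ are strictly positive. Hence every ratio below is well defined, and dividing through by them is legitimate.

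Concretely, starting from \eqref{eq:app_or_as_uncond}, consistency (Assumption~\ref{ass:A5}) gives
\[
OR_{T=1}(X)=\frac{\Pr(Y^{1}=1\mid V=1,X)/\Pr(Y^{1}=-1\mid V=1,X)}{\Pr(Y^{0}=1\mid V=0,X)/\Pr(Y^{0}=-1\mid V=0,X)}.
\]
Next, Assumption~\ref{ass:A7} replaces $\Pr(Y^{1}=-1\mid V=1,X)$ by $\Pr(Y^{0}=-1\mid V=1,X)$ in the numerator. Only the within-vaccinated form of the no-effect restriction is needed, which is why A7 suffices. Rearranging, we obtain
\[
OR_{T=1}(X)=\frac{\Pr(Y^{1}=1\mid V=1,X)}{\Pr(Y^{0}=1\mid V=0,X)}\cdot\frac{\Pr(Y^{0}=-1\mid V=0,X)}{\Pr(Y^{0}=-1\mid V=1,X)}.
\]
Cross-multiplying \eqref{eq:app_equiconf} shows that the second factor equals $\Pr(Y^{0}=1\mid V=0,X)/\Pr(Y^{0}=1\mid V=1,X)$. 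Substituting it in, the $\Pr(Y^{0}=1\mid V=0,X)$ terms cancel, leaving $\Pr(Y^{1}=1\mid V=1,X)/\Pr(Y^{0}=1\mid V=1,X)=RR_V(X)$, as claimed in \eqref{eq:app_id_equiconf}.

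The algebra itself is routine. The main point requiring care is the positivity bookkeeping. The counterfactual quantities $\Pr(Y^0=\pm1\mid V=v,X)$ must be nonzero exactly where needed, and Assumption~\ref{ass:A8} is phrased in terms of supports of $(Y^0=y,V=v,X)$. We therefore need to check two things. First, for $V=0$, consistency identifies these supports with the observed ones. Second, for $V=1$, Assumption~\ref{ass:A7} transfers positivity of $\Pr(Y^0=-1\mid V=1,X)$ to the observed $\Pr(Y=-1\mid V=1,X)$. Once this is done, every division in the chain is justified on the support of the target estimand, and the identity holds there.
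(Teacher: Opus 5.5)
Your proof is correct and follows the paper's argument step for step: odds invariance plus consistency, then Assumption~\ref{ass:A7} on the vaccinated control term, then equi-confounding to cancel $\Pr(Y^{0}=1\mid V=0,X)$. Your positivity check is a reasonable refinement the paper leaves implicit, with one inaccuracy: Assumption~\ref{ass:A8} constrains only $Y^{0}$, so it does not make $\Pr(Y=1\mid V=1,X)=\Pr(Y^{1}=1\mid V=1,X)$ positive, contrary to your claim about ``all four probabilities''---but this is harmless, because that term only ever appears in a numerator.
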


\begin{proof}
Starting with $OR_{T=1}(X)$. By the identity in \eqref{eq:app_or_as_uncond} and consistency (Assumption~\ref{ass:A5}):
\[
OR_{T=1}(X)
=
\frac{
\Pr(Y^{1}=1\mid V=1,X)/\Pr(Y^{1}=-1\mid V=1,X)
}{
\Pr(Y^{0}=1\mid V=0,X)/\Pr(Y^{0}=-1\mid V=0,X)
}.
\]
By the no-effect-on-controls restriction (Assumption~\ref{ass:A7}),
$$\Pr(Y^{1}=-1\mid V=1,X)=\Pr(Y^{0}=-1\mid V=1,X),$$
so
\[
OR_{T=1}(X)
=
\frac{
\Pr(Y^{1}=1\mid V=1,X)/\Pr(Y^{0}=-1\mid V=1,X)
}{
\Pr(Y^{0}=1\mid V=0,X)/\Pr(Y^{0}=-1\mid V=0,X)
}.
\]
Now apply the equi-confounding restriction (Assumption~\ref{ass:A6}):
\[
\frac{\Pr(Y^{0}=-1\mid V=1,X)}{\Pr(Y^{0}=-1\mid V=0,X)}
=
\frac{\Pr(Y^{0}=1\mid V=1,X)}{\Pr(Y^{0}=1\mid V=0,X)}.
\]
Substituting into the previous display yields cancellation of $\Pr(Y^{0}=1\mid V=0,X)$ and gives
\[
OR_{T=1}(X)
=
\frac{\Pr(Y^{1}=1\mid V=1,X)}{\Pr(Y^{0}=1\mid V=1,X)}
=
RR_V(X),
\]
which proves \eqref{eq:app_id_equiconf}.
\end{proof}

\subsection{Extension to the marginal risk ratio}\label{sec:app_marginal}
The identification results above apply to the conditional risk ratios $RR(X)$ or $RR_V(X)$, but can be extended to the marginal (population-average) risk ratio
\[
\overline{RR} \;\equiv\; \frac{\Pr(Y^{1}=1)}{\Pr(Y^{0}=1)},
\]
or the marginal risk ratio among the vaccinated,
\[
\overline{RR}_V = \Pr(Y^{1}=1\mid V=1)/\Pr(Y^{0}=1\mid V=1),
\]
under the same assumptions. However, these are no longer identified by the standard TND odds ratio estimator $OR_{T=1}(X)$ (except in the trivial case where there is no effect modification). Instead, alternative identifying expressions are required, as discussed in \textcite{schnitzerEstimandsEstimationCOVID192022} and \textcite{boyerIdentificationEstimationVaccine2026}. 

Despite this, because our multiplex-informed control selection principles operate at the $X$-conditional level, they apply equally to any weighted average of the conditional estimand, including the marginal risk ratios. Thus, even when the target estimand is a marginal risk ratio rather than a conditional risk ratio, multiplex-informed control selection can still be used to protect identifying assumptions and mitigate bias.

\section{Appendix: Pathogen-specific identification, estimation, pooling, and homogeneity testing}\label{sec:app_pooling}

This appendix formally defines the pathogen-specific VE estimators arising from multiplex TND data, describes methods for combining them into a pooled estimate, and presents tests of homogeneity that can serve as a diagnostic for assumption violations.

\subsection{Setup}
Consider a TND sample of tested individuals ($T=1$) in which a multiplex assay classifies each episode as focal-pathogen positive ($Y=1$) or as positive for one of $K$ non-focal pathogens ($Y=-k$, $k=1,\ldots,K$). Let $\mathcal{S} \subseteq \{1,\ldots,K\}$ denote the set of control pathogens retained after the screening steps described in the main text, with $|\mathcal{S}| = K^*$. We maintain mutual exclusivity of infections for now although discuss extensions to co-infection as well as possible inclusion of pan-negatives in Section~\ref{sec:app_codetection}. 

For each retained control pathogen $k \in \mathcal{S}$, define the pathogen-specific log odds ratio
\begin{equation}\label{eq:beta_k}
\beta_k \;=\; \log OR_k \;=\; \log \frac{\Pr(Y=1 \mid T=1, V=1, X)\;\Pr(Y=-k \mid T=1, V=0, X)}{\Pr(Y=-k \mid T=1, V=1, X)\;\Pr(Y=1 \mid T=1, V=0, X)},
\end{equation}
with corresponding vaccine effectiveness $VE_k = 1 - \exp(\beta_k)$. Note that, under odds ratio invariance, this is equivalent to 
\begin{equation*}
\beta_k \;=\; \log \frac{\Pr(Y=1 \mid Y \in \{1,-k\}, V=1, X)\;\Pr(Y=-k \mid Y \in \{1,-k\}, V=0, X)}{\Pr(Y=-k \mid Y \in \{1,-k\}, V=1, X)\;\Pr(Y=1 \mid Y \in \{1,-k\}, V=0, X)}.
\end{equation*}

\subsection{Pathogen-specific identification.}
The identification results of Appendix~\ref{sec:app_identification_1} (no unmeasured confounding) and Appendix~\ref{sec:app_identification_2} (equi-confounding) extend directly to each retained pathogen $k\in\mathcal{S}$, with $Y=-1$ replaced by $Y=-k$ throughout. Specifically, under the pathogen-specific versions of consistency, positivity/overlap, and the no-causal-effect restriction
\begin{equation}\label{eq:no_causal_effect_k}
\Pr(Y^{1}=-k\mid V=1,X)=\Pr(Y^{0}=-k\mid V=1,X),
\end{equation}
together with the pathogen-specific equi-confounding restriction
\begin{equation}\label{eq:equiconf_k}
\frac{\Pr(Y^{0}=1\mid V=1,X)}{\Pr(Y^{0}=1\mid V=0,X)} \;=\; \frac{\Pr(Y^{0}=-k\mid V=1,X)}{\Pr(Y^{0}=-k\mid V=0,X)},
\end{equation}
the same argument used to prove Proposition~\ref{prop:identification_equiconf} gives $OR_k(X) = RR_V(X)$. Under the stronger no-unmeasured-confounding view (Appendix~\ref{sec:app_identification_1}), the same logic yields $OR_k(X) = RR(X)$. Crucially, the NCO restriction \eqref{eq:no_causal_effect_k} and the pathogen-specific equi-confounding restriction \eqref{eq:equiconf_k} may hold for some $k$ and fail for others, which is precisely what justifies pre-specified screening of the control set $\mathcal{S}$ and motivates the homogeneity diagnostic developed below.

\subsection{Pathogen-specific estimators}

In practice, $\beta_k$ is estimated by logistic regression on the subset containing all focal-pathogen cases and controls from pathogen $k$:
\begin{equation}\label{eq:logistic_k}
\mathrm{logit}\;\Pr(Y=1 \mid V, X,\, Y \in \{1,-k\}) \;=\; \alpha_k + \beta_k V + \bm{\gamma}_k^\top X.
\end{equation}
Let $\hat{\beta}_k$ and $\hat{\sigma}_k^2 = \widehat{\mathrm{Var}}(\hat{\beta}_k)$ denote the maximum likelihood estimate and its estimated variance (from the model-based or sandwich variance estimator). 

\subsection{Pooled estimators}

\paragraph{Standard pooled estimator.}
The conventional TND estimator pools all $K^*$ retained control pathogens into a single control group:
\begin{equation}\label{eq:pooled}
\mathrm{logit}\;\Pr\bigl(Y=1 \mid V, X,\, Y \in \{1\} \cup \{-k : k \in \mathcal{S}\}\bigr) \;=\; \alpha_{\text{pool}} + \beta_{\text{pool}} V + \bm{\gamma}_{\text{pool}}^\top X.
\end{equation}
This estimator treats all retained controls as exchangeable; it is efficient when the identifying assumptions hold uniformly across $\mathcal{S}$, but it can be biased when assumptions hold for some pathogens but not others (as demonstrated in our simulation study).

\paragraph{Decomposition of the pooled odds ratio.}
To see why violations concentrated in a single pathogen can contaminate the pooled estimator, consider the population-level pooled odds ratio with covariates suppressed for clarity. Let $a_v = \Pr(Y=1\mid V=v,T=1)$ denote the case probability among the tested at vaccination level $v$ and $c_{k,v} = \Pr(Y=-k\mid V=v,T=1)$ the corresponding pathogen-$k$ control probability. Then $OR_k = (a_1 c_{k,0})/(a_0 c_{k,1})$ and the pooled odds ratio is
\[
OR_{T=1} \;=\; \frac{a_1\sum_{k\in\mathcal{S}} c_{k,0}}{a_0\sum_{k\in\mathcal{S}} c_{k,1}}.
\]
Rearranging via $a_1 c_{k,0} = OR_k \cdot a_0 c_{k,1}$ gives the exact decomposition
\begin{equation}\label{eq:pooled_decomp}
OR_{T=1} \;=\; \sum_{k\in\mathcal{S}} w_k\, OR_k, \qquad w_k \;=\; \frac{c_{k,1}}{\sum_{j\in\mathcal{S}} c_{j,1}},
\end{equation}
i.e., the pooled odds ratio is exactly a convex combination of the pathogen-specific odds ratios, with weights equal to each pathogen's share of the \emph{vaccinated} control pool. When all $OR_k$ are equal, $OR_{T=1}$ equals the common value; when one pathogen is contaminated (say $OR_2 \neq RR_V$), its contribution to $OR_{T=1}$ scales with $w_2$, making pooled-estimator bias proportional to the prevalence of the violating pathogen. This formalises the prevalence-weighting claim made in the main text and motivates the prevalence sweep in Appendix~\ref{sec:app_sweeps}. With covariate adjustment, the same intuition holds within strata of $X$.

\paragraph{Inverse-variance weighted estimator.}
When pathogen-specific estimates are available, an alternative is to combine them via inverse-variance weighting. Under the assumption that all retained pathogens target the same $\beta \equiv \log OR$, the combined estimator is
\begin{equation}\label{eq:ivw}
\hat{\beta}_{\text{IVW}} \;=\; \frac{\sum_{k \in \mathcal{S}} w_k \,\hat{\beta}_k}{\sum_{k \in \mathcal{S}} w_k}, \qquad w_k = \hat{\sigma}_k^{-2},
\end{equation}
with estimated variance
\begin{equation}\label{eq:ivw_var}
\widehat{\mathrm{Var}}(\hat{\beta}_{\text{IVW}}) \;=\; \left(\sum_{k \in \mathcal{S}} w_k\right)^{-1}.
\end{equation}
This estimator is asymptotically efficient under homogeneity ($\beta_1 = \cdots = \beta_{K^*}$) and independent estimates. In the TND setting, however, pathogen-specific estimates are \emph{not} independent because the case group ($Y=1$) is shared across all contrasts. This positive correlation means that \eqref{eq:ivw_var} understates the true variance, and a corrected version is needed.

\paragraph{Accounting for shared cases.}
To properly account for the correlation induced by shared cases, we construct a stacked dataset in which each case appears $K^*$ times (once per pathogen-specific stratum) while each control appears once. A joint model is fit on this stacked dataset:
\begin{equation}\label{eq:stacked}
\mathrm{logit}\;\Pr(Y=1 \mid V, X, k) \;=\; \alpha_k + \beta V + \bm{\gamma}_k^\top X,
\end{equation}
where pathogen-specific intercepts $\alpha_k$ absorb differences in baseline prevalence across control pathogens, but a common vaccine coefficient $\beta$ is imposed. The variance of $\hat{\beta}$ is estimated using cluster-robust (sandwich) standard errors with clustering on individual, which correctly accounts for the within-person correlation arising from repeated inclusion of cases \citep{liang1986longitudinal}. This approach is analogous to a meta-analysis with correlated effect sizes but is implemented within a single regression framework, avoiding the need to separately estimate the inter-study correlation.

\subsection{Test of homogeneity across control pathogens}
Before pooling, it is useful to test whether the pathogen-specific vaccine effects are homogeneous. Heterogeneity can signal that identifying assumptions are violated for a subset of control pathogens.

\paragraph{Hypotheses.}
\begin{equation}\label{eq:h0}
H_0: \beta_1 = \beta_2 = \cdots = \beta_{K^*} \qquad \text{vs.} \qquad H_1: \beta_j \neq \beta_k \text{ for some } j \neq k.
\end{equation}

\paragraph{Wald test with cluster-robust variance.}
To test \eqref{eq:h0}, we fit an expanded model on the stacked dataset that allows pathogen-specific vaccine effects:
\begin{equation}\label{eq:interaction}
\mathrm{logit}\;\Pr(Y=1 \mid V, X, k) \;=\; \alpha_k + \beta_1 V + \sum_{k=2}^{K^*} \delta_k \,(V \times \mathbbm{1}\{\text{pathogen} = k\}) + \bm{\gamma}_k^\top X.
\end{equation}
Here $\delta_k = \beta_k - \beta_1$ captures the deviation of pathogen $k$'s vaccine effect from pathogen 1 (the reference), and the null hypothesis \eqref{eq:h0} is equivalent to $H_0: \bm{\delta} = \mathbf{0}$, where $\bm{\delta} = (\delta_2, \ldots, \delta_{K^*})^\top$.

Let $\hat{\bm{\delta}}$ denote the vector of estimated interaction coefficients and let $\hat{\bm{\Sigma}}$ denote the corresponding $(K^*-1) \times (K^*-1)$ block of the cluster-robust variance--covariance matrix (clustered on individual). The Wald statistic is
\begin{equation}\label{eq:wald}
W \;=\; \hat{\bm{\delta}}^\top \,\hat{\bm{\Sigma}}^{-1}\, \hat{\bm{\delta}} \;\xrightarrow{d}\; \chi^2_{K^*-1} \quad \text{under } H_0.
\end{equation}
We reject $H_0$ at level $\alpha$ if $W > \chi^2_{K^*-1,\,1-\alpha}$.

\paragraph{Interpretation.}
The cluster-robust variance $\hat{\bm{\Sigma}}$ is essential because the stacked dataset duplicates cases across strata, creating within-individual correlation. Using a naive (model-based) variance instead would overstate the precision of the interaction terms and inflate the type I error rate of the test.

Rejection of homogeneity does not by itself identify \emph{which} pathogen is responsible for the heterogeneity. In practice, investigators can supplement the omnibus test with pairwise comparisons or visual inspection of the $\{\widehat{VE}_k\}$ estimates (as in Figure~\ref{fig:pathogen_specific}) to guide exclusion decisions. Importantly, a failure to reject $H_0$ does not guarantee that all assumptions hold uniformly: the test has limited power when the sample size per pathogen is small or when violations are diffuse across multiple pathogens.

\paragraph{Additional simulation results.}
We implement the homogeneity test across the 7 simulation scenarios described in the main text as a potential data-driven method for identifying which pathogens to screen. Results are shown in Table~\ref{tab:homogeneity}. In Scenario 1, when all pathogens are valid test-negative controls, the test rejects at the nominal rate (4.9\%). In Scenarios 2 through 4, when a single offending pathogen is included among the pooled controls, the homogeneity test has between roughly 15\% and 55\% power to reject, with differences driven by the absolute numbers of controls carrying the offending pathogen as well as the strength of the violation. The remaining scenarios track the identification results: the test retains near-nominal size when all retained controls are valid (Scenario 5a, 5.7\%; Scenarios 6a--6b, $\approx 5\%$), whereas it rejects at elevated rates when a pan-negative control violates vaccine irrelevance (Scenario 5b, 46.5\%) or when heterogeneous assay sensitivity contaminates the pooled control set (Scenario 7, 71.8\%). 

\input{\resultspath tables/homogeneity_test.tex}

\paragraph{Relationship to the combined estimator of \textcite{yuTestNegativeDesignsVarious2026}.}
The inverse-variance weighted estimator in \eqref{eq:ivw} and the stacked-regression estimator in \eqref{eq:stacked} are analogous to the combined stratified estimator proposed by \textcite{yuTestNegativeDesignsVarious2026}, who show that stratified VE estimates from different \emph{reasons for testing} (symptoms, mandatory screening, contact tracing) can be combined when the underlying estimands coincide. Our framework applies the same logic, but instead of stratifying by the reason for testing, we stratify by the \emph{identity of the control pathogen}. The justification is similar---if all retained control pathogens satisfy the same identifying assumptions, each pathogen-specific estimator targets the same VE, and combining them improves precision. The homogeneity test in \eqref{eq:wald} provides an empirical check of this premise, just as \textcite{yuTestNegativeDesignsVarious2026} propose testing equality of their stratified VE parameters before combining. 


\paragraph{Pairwise control--control diagnostic.}
To localize heterogeneity detected by the omnibus Wald test, consider any pair $(j,k)$ of retained control pathogens and define the pairwise vaccination-odds contrast
\begin{equation}\label{eq:or_jk}
OR_{j|k}(X) \;=\; \frac{\Pr(V = 1 \mid Y = -j, X)}{\Pr(V = 0 \mid Y = -j, X)} \,\bigg/\, \frac{\Pr(V = 1 \mid Y = -k, X)}{\Pr(V = 0 \mid Y = -k, X)}.
\end{equation}
Under vaccine irrelevance for both $j$ and $k$ and pathogen-specific equi-confounding (i.e., Equation~\eqref{eq:equiconf_k} applied to each), the multiplicative confounding factor entering numerator and denominator is the same, so $OR_{j|k}(X) = 1$ in expectation for all $X$. A systematic departure of $\widehat{OR}_{j|k}$ from unity, after accounting for sampling variability, flags any of (a) failure of vaccine irrelevance for $j$ or $k$, (b) differential confounding strength between $j$ and $k$, perhaps through some biological or testing-channel interaction, violating equi-confounding.

Operationally, we can fit a logistic regression of $V$ on an indicator for ``$j$ rather than $k$'' (and $X$) within the subset $\{Y \in \{-j,-k\}\}$, and inspect whether the indicator coefficient is significantly non-zero. This can be computed for all $\binom{|\mathcal{S}|}{2}$ retained pairs and arranged as a symmetric matrix; a single control $k^*$ producing $|\log \widehat{OR}_{k^*|k}| > 0$ against most other controls is stronger evidence of a localized violation than scattered noise across the matrix. The pairwise diagnostic complements the omnibus test in two ways: it localizes which controls differ, and it can reveal structure that is obscured in a single omnibus p-value. As with any negative-control diagnostic, $\widehat{OR}_{j|k} \approx 1$ remains compatible with either both controls being valid or both violating assumptions in nearly the same way.

Figure~\ref{fig:pairwise_diagnostic} illustrates the diagnostic in simulation. In the baseline scenario every off-diagonal cell is close to unity, whereas under each violation scenario the row and column corresponding to the offending control (pathogen 2) deviate systematically from one, localizing the violation to a single control.

\begin{figure}[hp]
\centering
\includegraphics[width=0.95\textwidth]{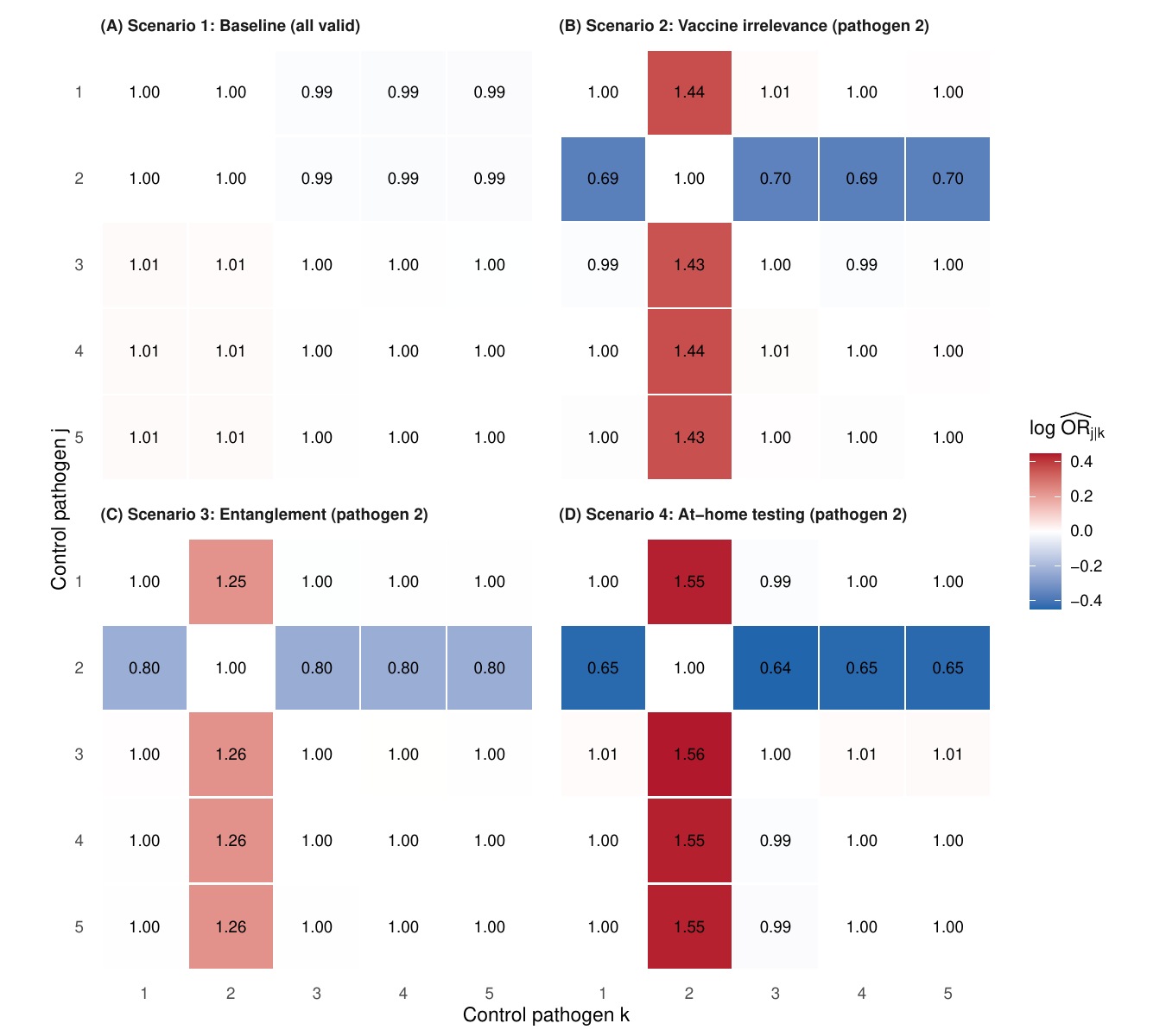}
\caption{Pairwise control--control vaccination-odds contrasts $\widehat{OR}_{j|k}$ (Equation~\ref{eq:or_jk}), averaged across simulation replicates, for the baseline scenario and the three violation scenarios. Each cell reports $\widehat{OR}_{j|k}$ for control pathogens $j$ (row) and $k$ (column); cells are shaded by $\log \widehat{OR}_{j|k}$ (blue $<0$, red $>0$, white $=0$). Under the maintained assumptions all off-diagonal cells are $\approx 1$ (baseline, panel A); under each violation the row/column of the offending control (pathogen 2) departs from unity, localizing the violation.}
\label{fig:pairwise_diagnostic}
\end{figure}

\section{Appendix: Formal notation for co-detections and pan-negative episodes}\label{sec:app_codetection}

This appendix introduces notation that separates pathogen \emph{detection} from illness \emph{attribution}, formalizes co-infections and pan-negative episodes, defines an attribution mapping, and presents a DAG that visualizes the relationships among these quantities.

\subsection{Detection versus attributed illness}

Multiplex PCR can identify the presence of multiple pathogens in a single sample. However, co-detection (i.e., the presence of nucleic acid from two or more pathogens in a single sample) does not, by itself, imply co-infection (i.e., symptomatic illness caused by infection of two or more pathogens in a single host) as pathogen detection can reflect contamination, asymptomatic carriage, or residual RNA from a previous infection. Resolving whether co-detection reflects co-infection requires an intermediate \emph{attribution} step in which each detected pathogen is evaluated for its likelihood of causing the observed symptoms. To reason precisely about co-infections (as well as pan-negative episodes), we now formally separate detection via the raw assay output from attribution.

Let $D_{-k} \in \{0,1\}$ indicate that pathogen $k$ is \emph{detectably present} (that is, sufficient nucleic acid is present to return a positive PCR result) and let $D_+$ be the corresponding indicator for the focal pathogen. Because detectable presence is a biological state (reflecting current or recent infection, asymptomatic carriage, or residual RNA), $D_{-k}$ is defined for all individuals regardless of testing status; the observed multiplex result is the vector $\mathbf{D}_{-} \mid T = 1$, where
$$
\mathbf{D}_{-} = (D_{-1}, \ldots, D_{-K}).
$$
Crucially, $D_{-k} = 1$ does not imply $I_{-k} = 1$: detection may reflect incidental carriage or residual RNA from a previous (possibly asymptomatic) infection rather than clinically meaningful symptomatic illness caused by pathogen $k$. Conversely, $I_{-k} = 1$ requires both detectable presence ($D_{-k} = 1$) and that pathogen $k$ is judged to be the cause of symptoms.

\subsection{Co-detection indicator}

Define the co-detection indicator
\begin{equation}\label{eq:codetect}
C_{-} = \mathbbm{1}\!\left\{\sum_{k=1}^{K} D_{-k} > 1\right\},
\end{equation}
which flags episodes in which more than one non-focal pathogen is detected on the multiplex panel. When $C_{-} = 1$, the mapping from $\mathbf{D}$ to a single attributed illness $I_{-k}$ is ambiguous and requires an explicit attribution rule (see below).

Similarly, we can define a focal--control co-detection indicator
\begin{equation}\label{eq:codetect_focal}
C_+ \;=\; \mathbbm{1}\!\left\{D_+ = 1,\; \textstyle\sum_{k=1}^{K} D_k \geq 1 \right\},
\end{equation}
which flags episodes in which the focal pathogen co-occurs with at least one non-focal pathogen (but possibly more) on the panel. Again when $C_{+} = 1$, the mapping to either a single $I_{-k}$ or $I_{+}$ is ambiguous and requires an explicit attribution rule.

\subsection{Attribution mapping}

Moving from the raw detection vector to pathogen-attributed illness requires both testing and an \emph{attribution rule}. For each pathogen $j \in \{+, -1, \ldots, -K\}$ define the tri-level outcome
\begin{equation}\label{eq:trilevel}
Y_j = \begin{cases}
  +1 & \text{tested and attributed positive for pathogen } j \;(T = 1,\; I_j = 1),\\
  -1 & \text{tested and attributed negative for pathogen } j \;(T = 1,\; I_j = 0),\\
  \phantom{-}0 & \text{not tested } (T = 0),
\end{cases}
\end{equation}
and collect these into the outcome vector $\mathbf{Y} = (Y_+, Y_{-1}, \ldots, Y_{-K})$. Each component shares the tri-level structure of the scalar outcome $Y$ of the main text but is now defined separately for every pathogen on the panel. The attribution rule is the mapping
\begin{equation}\label{eq:attribution}
\alpha:\; \{0,1\}^{K+1} \times \{0,1\} \;\longrightarrow\; \{-1,\, 0,\, +1\}^{K+1}, \qquad \mathbf{Y} = \alpha(D_+, \mathbf{D}_{-}, T),
\end{equation}
which takes the detection vector and testing indicator and maps them to the outcome vector. As discussed in the main text, three common choices of $\alpha$ are assigning the episode with the highest viral load (lowest cycle threshold value), assigning the most consistent with presenting syndrome, or to the pathogen identified through formal clinical adjudication.

This vector representation separates two phenomena:
\begin{itemize}[nosep]
  \item \textbf{Co-detection} is a property of the \emph{detection} vector: $C_{-} = 1$ when $\sum_k D_{-k} > 1$ (Equation~\eqref{eq:codetect}), or $C_{+} = 1$ when the focal pathogen is detected alongside a control (Equation~\eqref{eq:codetect_focal}). It records that the assay returned more than one positive but says nothing about how illness is attributed.
  \item \textbf{Genuine co-infection} is a property of the \emph{outcome} vector: it occurs when the attribution rule assigns positive illness status to more than one pathogen, i.e., $\sum_j \mathbbm{1}\{Y_j = +1\} > 1$. A co-detected episode need not be a genuine co-infection, as an attribution rule may resolve it to a single $Y_j = +1$, but a genuine co-infection is necessarily co-detected.
\end{itemize}

\paragraph{Collapse to the scalar outcome.} When at most one component of $\mathbf{Y}$ equals $+1$ (i.e., when the attributed illnesses are mutually exclusive), the vector collapses without loss of information to the single categorical outcome used elsewhere in the paper:
\begin{equation}\label{eq:collapse}
Y = \begin{cases}
  +1 & Y_+ = 1 \quad (\text{focal case}),\\
  -k & Y_{-k} = 1 \quad (\text{control attributed to pathogen } k),\\
  \phantom{-}0 & T = 0 \quad (\text{all } Y_j = 0),
\end{cases}
\end{equation}
The mutual-exclusivity assumption maintained in Sections~\ref{sec:multiplex_implications}--\ref{sec:three_principles} is precisely the statement that this collapse is possible, so that every result stated for the scalar $Y$ applies unchanged. Genuine co-infection is the case in which the collapse fails and the full vector $\mathbf{Y}$ must be retained; this is the situation analysed in Sections~\ref{sec:app_control_control} and \ref{sec:app_focal_control}.

\subsection{Control--control co-infection}\label{sec:app_control_control}

Consider first genuine \emph{control--control} co-infection: an episode in which two or more non-focal pathogens are attributed positive, $Y_{-j} = 1$ and $Y_{-k} = 1$ for distinct $j, k$, while the focal pathogen is negative ($Y_+ = -1$). The outcome vector $\mathbf{Y}$ does not collapse to a single label, but unlike the focal--control case of Section~\ref{sec:app_focal_control} the focal arm $\Pr(Y_+ = 1 \mid V, X, T = 1)$ of every TND contrast is untouched, so the only question is how such an episode enters the control pool.

\paragraph{Pathogen-specific estimators.} Each $OR_k(X)$ uses the focal arm and the single control arm $\{Y_{-k} = 1\}$, irrespective of whether other controls are also positive in the same episode. Provided every retained control $k \in \mathcal{S}$ satisfies vaccine irrelevance and equi-confounding (Appendix~\ref{sec:app_pooling}), a co-infected episode simply appears in the $\{Y_{-k} = 1\}$ arm of each control it carries, and $OR_k(X) = RR_V(X)$ continues to hold pathogen-by-pathogen.

\paragraph{Pooled estimator.} A co-infected episode contributes to more than one bucket $\{Y_{-k} = 1\}$ at once, so the pooled control sums $\sum_{k \in \mathcal{S}} \Pr(Y_{-k} = 1 \mid V, X)$ count it once per involved control. This re-weighting is immaterial under the common equi-confounding restriction, in which a single ratio $\rho(X)$ relates the $V = 1$ and $V = 0$ control probabilities for every $k \in \mathcal{S}$ (Appendix~\ref{sec:app_pooling}): each term in the numerator and denominator sums---including the doubly counted mass from co-infected episodes---carries the same factor $\rho(X)$, so the ratio of the two pooled-control sums is $\rho(X)$ regardless of how many controls an episode activates. Writing $r_+(X) = \Pr(Y_+ = 1 \mid V = 1, X)/\Pr(Y_+ = 1 \mid V = 0, X)$ for the focal risk ratio,
$$
OR_{\text{pool}}(X) \;=\; \frac{r_+(X)}{\rho(X)} \;=\; RR_V(X),
$$
exactly as in the mutually exclusive case. This is the invariance noted in the main text: any non-excluding resolution of the co-infection---assigning the episode to a single bucket by a $V$-independent rule, or leaving it in several---leaves the $V$-conditional distribution of the pooled control set unchanged.

Control--control co-infection is therefore a precision question, not an identification one, so long as every retained control is valid; it becomes a bias issue only when one of the co-infecting controls itself violates equi-confounding or vaccine irrelevance, in which case the offending pathogen---not the co-infection---is at fault.

\paragraph{Diagnostics.} The formal definitions of the omnibus homogeneity test (Equation~\eqref{eq:wald}) and pairwise control--control diagnostic (Equation~\eqref{eq:or_jk}) are given in Appendix~\ref{sec:app_pooling}. Under explicit detection notation, both are implemented after applying a pre-specified attribution rule $\alpha(D_+,\mathbf{D}_{-},T)$. For the omnibus diagnostic, fit the stacked focal-vs-control model using attributed pathogen categories $\{Y_{-k} = 1: k\in\mathcal{S}\}$ exactly as in Section~\ref{sec:app_pooling}. For pairwise comparisons, restrict to episodes with $(Y_{-j}=1 \text{ or } Y_{-k}=1)$ to define the two-category subset. This re-expression changes bookkeeping, not interpretation: departures from homogeneity or from $OR_{j|k}(X)=1$ still indicate that at least one retained control behaves differently from the others.

\paragraph{An example of when to exclude control--control co-infections} Consider the case where co-infection is mainly a reflection of latent immune status, where individuals with a certain immune profile are more susceptible to being infected with two or more pathogens simultaneously, but this doesn't extend to the focal pathogen. If this latent immune status is also related to vaccination status, it would be a source of non-equi-confounding and therefore invalidate it as potential control. Excluding co-infected episodes effectively conditions on individuals with lower latent immune susceptibility, and potentially mitigates the resulting bias (but also changes estimand). Figure~\ref{fig:exclude_codetect} illustrates this scenario. We could imagine an equivalent scenario where co-infection reflects latent exposure risk, with certain high risk individuals being more likely to encounter multiple pathogens (excepting the focal pathogen) simultaneously. 

\begin{figure}[!ht]
  \centering
\begin{tikzpicture}[> = stealth, shorten > = 1pt, auto, node distance = 2.3cm, inner sep = 0pt, minimum size = 0.5pt, thick]
  \tikzstyle{every state}=[draw=none, fill=none]
  \node[state] (x) {$X$};
  \node[state] (v) [right of=x] {$V$};
  \node[state] (i) [right of=v] {$I_+$};
  \node[state] (t) [right of=i] {$T$};
  \node[state] (i1) [below of=i] {$I_{-1}$};
  \node[state] (i2) [below=-0.25cm of i1] {$I_{-2}$};
  \node[state] (i3) [below=-0.35cm of i2] {$\vdots$};
  \node[state] (i4) [below=-0.25cm of i3] {$I_{-K}$};
  \node[state] (u) [below of=v] {$U$};
  \node[state] (v2) [below=0.9cm of u] {$U_{-}$};

  \path[->] (x) edge node {} (v);
  \path[->] (x) edge [out=45, in=135] node {} (i);
  \path[->] (v) edge node {} (i);

  \path[->] (i) edge node {} (t);
  \path[->] (i1) edge node {} (t);
  \path[->] (i2) edge [dashed] node {} (t);
  \path[->] (i4) edge [dashed] node {} (t);
  \path[->] (x) edge [out=45, in=135] node {} (t);

  \path[->] (u) edge node {} (x);
  \path[->] (u) edge node {} (v);
  \path[->] (v2) edge [dashed, out=135, in=225] node {} (v);
  \path[->] (u) edge node {} (i);
  \path[->] (u) edge node {} (t);
  \path[->] (u) edge node {} (i1);
  \path[->] (u) edge node {} (i2);
  \path[->] (u) edge node {} (i4);

  \path[->] (v2) edge [dashed] node {} (i2);
  \path[->] (v2) edge [dashed] node {} (i4);
\end{tikzpicture}
\caption{An example of co-infection caused by latent immune status or latent exposure risk specific to non-focal pathogens violating equi-confounding. $U_{-}$ represents the latent factor affecting non-focal pathogens that is also related to focal vaccination.}\label{fig:exclude_codetect}
\end{figure}

\subsection{Focal--control co-infection}\label{sec:app_focal_control}

Genuine focal--control co-infection is the event in which the focal pathogen and a control pathogen are \emph{both} attributed positive, $Y_+ = 1$ and $Y_{-k} = 1$ for some $k$. When single-plex PCR testing was the norm, such an episode would typically be forced into the case group ($Y = +1$ via Equation~\eqref{eq:collapse}) because that was the only pathogen that could be identified. This is innocuous when focal--control co-infection is rare, but creates a distinct identification problem when it is not, because the depletion mechanism it induces does not rely on any violated assumption: it is driven by the causal effect of $V$ on focal infection itself, which is the estimand of interest. 

\paragraph{Bias mechanism.} Work within the tested population ($T = 1$) and let $\pi_+(v, X) = \Pr(Y_+ = 1 \mid V = v, X)$ and $\pi_k(v, X) = \Pr(Y_{-k} = 1 \mid V = v, X)$ denote the probabilities that the focal pathogen and control pathogen $k$, respectively, are attributed positive; assume the two attributed-positive events are conditionally independent given $(V, X, U)$. Under the default scalar attribution rule, an episode contributes to the control pool for pathogen $k$ only when the focal pathogen is attributed negative and control $k$ positive, i.e., $Y_+ = -1$ and $Y_{-k} = 1$; genuine co-infections ($Y_+ = 1,\; Y_{-k} = 1$) are instead routed to the case group. The probability of contributing to that pool is therefore $(1 - \pi_+(v, X))\,\pi_k(v, X)$, which is smaller for $v = 0$ than for $v = 1$ whenever the focal vaccine reduces focal infection risk ($\pi_+(0, X) > \pi_+(1, X)$). The control pool is thus differentially depleted of unvaccinated episodes, and the conventional pooled TND estimator is biased upward even if vaccine irrelevance and equi-confounding hold for $k$.

\paragraph{Identification of VE under focal--control co-infection.} Restricting attention to a single control pathogen $k$, consider the pairwise subset $\mathcal{P}_k = \{Y_+ = 1 \text{ or } Y_{-k} = 1\}$ of tested episodes, where we now allow $Y_+ = Y_{-k} = 1$ simultaneously. 

Our target is the conditional causal risk ratio among the vaccinated
\begin{equation}
  \mathrm{RR}_{V+}(X) = \dfrac{\Pr(Y^{v=1}_+ = 1 \mid V = 1, X)}{\Pr(Y^{v=0}_+ = 1 \mid V = 1, X)}.
\end{equation}
Define the conditional log risk ratios
\begin{align}
\log \mathrm{RR}_+^{(k)}(X) &= \log \Pr(Y_+ = 1 \mid V = 1, X, \mathcal{P}_k) - \log \Pr(Y_+ = 1 \mid V = 0, X, \mathcal{P}_k), \label{eq:rrplus}\\
\log \mathrm{RR}_k^{(k)}(X) &= \log \Pr(Y_{-k} = 1 \mid V = 1, X, \mathcal{P}_k) - \log \Pr(Y_{-k} = 1 \mid V = 0, X, \mathcal{P}_k), \label{eq:rrk}
\end{align}
and define the difference-in-differences contrast
\begin{equation}\label{eq:did}
\delta_k(X) \;=\; \log \mathrm{RR}_+^{(k)}(X) - \log \mathrm{RR}_k^{(k)}(X).
\end{equation}

Assume:
\begin{enumerate}[label=(C\arabic*),ref=C\arabic*]
  \item \label{ass:C1} \textit{Consistency}: If $V = v$, then $Y_{+} = Y^v$ and $Y_{-k}^v = Y_{-k}$ almost surely.
  \item \label{ass:C2} \textit{Vaccine irrelevance for control $k$}: $Y_{-k}^{v=1} = Y_{-k}^{v=0}$ almost surely, so that any observed $\log\mathrm{RR}_k^{(k)}$ reflects only confounding, not a causal effect of $V$ on $Y_{-k}$.
  \item \label{ass:C3} \textit{Equi-confounding on the multiplicative scale}: there exists a function $\gamma(X)$ such that within $\mathcal{P}_k$,
  $$
  \log \frac{\Pr(Y_+^{v=0} = 1 \mid V = 1, X)}{\Pr(Y_+^{v=0} = 1 \mid V = 0, X)} \;=\; \log \frac{\Pr(Y_{-k}^{v=0} = 1 \mid V = 1, X)}{\Pr(Y_{-k}^{v=0} = 1 \mid V = 0, X)} \;=\; \gamma(X).
  $$
  That is, unmeasured confounding induces the same multiplicative shift in risk for focal and control under the no-vaccine counterfactual. Note that this condition is equivalent to odds ratio equi-confounding 
  $$
  \log \dfrac{\dfrac{\Pr(Y_+^{v=0} = 1 \mid V = 1, X)}{\Pr(Y_+^{v=0} = 0 \mid V = 1, X)}}{\dfrac{\Pr(Y_+^{v=0} = 1 \mid V = 0, X)}{\Pr(Y_+^{v=0} = 0 \mid V = 0, X)}} \;=\; \log \dfrac{\dfrac{\Pr(Y_{-k}^{v=0} = 1 \mid V = 1, X)}{\Pr(Y_{-k}^{v=0} = 0 \mid V = 1, X)}}{\dfrac{\Pr(Y_{-k}^{v=0} = 1 \mid V = 0, X)}{\Pr(Y_{-k}^{v=0} = 0 \mid V = 0, X)}} \;=\; \gamma(X)
  $$
  as $Y_+= 0$ when $Y_{-k}= 0$ as both refer to same untested state. 
  \item \label{ass:C4} \textit{Overlap}: Whenever $\Pr(Y_+^0=1,V=1,X=x)>0$, it also holds that
  $\Pr(Y^0_+=1,V=0,X=x)>0$ and $\Pr(Y^0_{-k}=1,V=v,X=x)>0$ for the relevant vaccination status $v$.
\end{enumerate}

Under Assumptions~\ref{ass:C1}--\ref{ass:C4}, the standardised log-risk-ratios decompose as
\begin{align*}
\log \mathrm{RR}_+^{(k)} &= \log \mathrm{RR}_{V+}(X) + \gamma(X), \\
\log \mathrm{RR}_k^{(k)} &= \gamma(X),
\end{align*}
where $\log \mathrm{RR}_{V+}(X)$ is the causal log-risk-ratio for the focal pathogen. The confounding term $\gamma(X)$ enters both expressions identically and cancels in the difference, giving
$$
\delta_k(X) \;=\; \log \mathrm{RR}_{V+}(X),
$$
so $\delta_k(X)$ identifies the marginal causal log-risk-ratio for the focal pathogen and hence VE on the risk-ratio scale.

Two remarks. First, restricting to the pairwise subset $\mathcal{P}_k$ does not introduce a separate selection bias: the two working models are fit on the same subset, and any constant offset induced by the conditioning enters $\log\mathrm{RR}_+^{(k)}$ and $\log\mathrm{RR}_k^{(k)}$ identically under Assumption~\ref{ass:C2} and therefore cancels in $\delta_k$, similar to algebraic identity in Equation~\ref{eq:app_odds_invariance}. Second, estimation of the difference-in-differences contrast in Equation~\ref{eq:did} is no longer possible via the traditional logistic regression estimator. Instead, one can fit separate working models for the focal and control pathogens as suggested below. Pooling across control pathogens via a stacked common-$\delta$ regression is discussed alongside the simulation results.

\paragraph{Pathogen-specific estimators.}
For each control pathogen $k$, we can estimate $\delta_k(X)$ by fitting separate working models for the focal and control pathogens on the pairwise subset $\mathcal{P}_k$:
\begin{align*}
  \Pr(Y_+ = 1 \mid V, X, \mathcal{P}_k) &= p_+^{(k)}(V, X), \\
  \Pr(Y_{-k} = 1 \mid V, X, \mathcal{P}_k) &= p_k^{(k)}(V, X),
\end{align*}
and then computing the difference-in-differences contrast
\begin{equation*}\hat{\delta}_k(X) = \log \dfrac{\hat{p}_+^{(k)}(1, X)}{\hat{p}_+^{(k)}(0, X)} - \log \dfrac{\hat{p}_k^{(k)}(1, X)}{\hat{p}_k^{(k)}(0, X)}.\end{equation*}
Under Assumptions~\ref{ass:C1}--\ref{ass:C4}, $\hat{\delta}_k(X)$ is a consistent estimator of the causal log-risk-ratio $\log \mathrm{RR}_{V+}(X)$ for the focal pathogen. Standard errors can be obtained via the delta method.

For example, one can fit log binomial regression models for the focal and control pathogens separately on the pairwise subset $\mathcal{P}_k$:
\begin{align*}
\mathrm{log}\;\Pr(Y_+ = 1 \mid V, X, \mathcal{P}_k) &= \alpha_+^{(k)} + \beta_+^{(k)} V + \bm{\gamma}_+^{(k)\top} X, \\
\mathrm{log}\;\Pr(Y_{-k} = 1 \mid V, X, \mathcal{P}_k) &= \alpha_k^{(k)} + \beta_k^{(k)} V + \bm{\gamma}_k^{(k)\top} X,
\end{align*}
in which case the difference-in-differences contrast is computed as $\hat{\delta}_k(X) = \hat{\beta}_+^{(k)} - \hat{\beta}_k^{(k)}$,
because the log-risk-ratio for each model is simply the coefficient on $V$. In practice we fit these two working models jointly as a single arm-stratified log-link (modified) Poisson regression and obtain the standard error of $\hat{\delta}_k(X)$---which must account for the covariance between $\hat{\beta}_+^{(k)}$ and $\hat{\beta}_k^{(k)}$ induced by the shared subset---from a cluster-robust sandwich covariance clustered on individual (Appendix~\ref{sec:app_coinf_sim}).

\paragraph{Pooled estimator.}
To combine information across all retained control pathogens $k \in \mathcal{S}$ while targeting a single focal effect, we stack the per-pathogen pairwise subsets $\mathcal{P}_k$ into one dataset---each contributing a focal arm ($Y_+$) and a control arm ($Y_{-k}$)---and fit a single modified-Poisson working model with pathogen- and arm-specific nuisance terms but a \emph{common} focal-versus-control difference-in-differences parameter $\delta$:
\begin{equation*}
\log \Pr(Y = 1 \mid V, X, k, \mathrm{arm}) = \alpha_{k,\mathrm{arm}} + \bm{\gamma}_{k,\mathrm{arm}}^\top X + \eta_k V + \delta\, \big(V \times \mathbbm{1}\{\mathrm{arm}=+\}\big),
\end{equation*}
where $\eta_k$ is the control-arm log-risk-ratio for pathogen $k$ (pure confounding under vaccine irrelevance) and $\hat\delta$ is the pooled estimate of $\log \mathrm{RR}_{V+}(X)$. Under the maintained assumptions every $\delta_k$ equals the same causal focal log-risk-ratio, so imposing a common $\delta$ is appropriate. The variance of $\hat\delta$ is obtained from a cluster-robust sandwich covariance clustered on individual, which accounts for both the within-pathogen focal/control correlation and the cross-pathogen correlation induced by the shared focal cases. We prefer this stacked estimator to (i) contrasting the focal outcome against the \emph{pooled} control $Y_- = \mathbbm{1}\{\bigcup_k Y_{-k}=1\}$, which can be biased because the log-risk-ratio of a high-prevalence union outcome no longer equals the common confounding offset, and to (ii) inverse-variance weighting of the $\hat\delta_k$, whose variance would understate the strong positive cross-pathogen correlation.

\subsection{Pan-negatives}

Define the pan-negative indicator
\begin{equation}\label{eq:panneg}
P = \mathbbm{1}\!\left\{Y_+ = -1,\; \bigcap_{k=1}^{K^*} Y_{-k}=-1\right\},
\end{equation}
identifying symptomatic, tested individuals for whom neither the focal pathogen nor any non-focal pathogen on the panel is detected (or attributed as pathogenic source after applying attribution rule). As mentioned in the main text, pan-negative results may reflect: 
\begin{enumerate}[font=\bfseries]
  \item \textbf{A pathogen not included in the multiplex panel.} In this case, the illness is caused by an infectious agent that is not tested for (e.g. a fungus, a bacterium or virus not included in the panel, or a novel pathogen), and the pan-negative result is a true negative for all pathogens on the panel.
  \item \textbf{A non-infectious cause of illness.} In this case, symptoms are caused by a non-infectious agent (e.g. an autoimmune disorder, an allergic reaction, or a metabolic condition) that mimics a respiratory infection, and the pan-negative result is a true negative for all pathogens on the panel.
  \item \textbf{A false negative result for one or more pathogens on the panel.} In this case, the illness is caused by an infectious agent that is included in the multiplex panel, but the test fails to detect it due to imperfect sensitivity (e.g. low viral load, improper sample collection, or other technical errors), and the pan-negative result is a false negative for one or more pathogens on the panel.
\end{enumerate}

One can think of pan-negative results as a distinct control category subject to the same conditions for inclusion in the pooled control set as any other control pathogen: i.e. (1) are they a valid negative control, in the sense that they are unaffected by focal vaccine and relevant to source of unmeasured confounding, and (2) do they satisfy equi-confounding. If these conditions hold, pan-negatives can be included in the pooled control set without biasing the VE estimate for the focal pathogen. The problem is that because the precise identity of the causal agent or source of negative result is unknown, it is difficult to evaluate the plausibility of these conditions. Furthermore, pan-negatives may be a heterogeneous mixture of the three scenarios above, with the relative proportions varying across populations, time periods, or other covariates.

\subsection{A DAG for detection, attribution, and boundary cases}

Figure~\ref{fig:detection-dag} extends the multiplex DAG from the main text to show the relationship between the detection vector $\mathbf{D}$, the attribution mapping $\alpha$, the co-detection indicator $C$, the pan-negative indicator $P$, and the observed test result $Y$.

\begin{figure}[!ht]
\centering
\begin{tikzpicture}[
  > = stealth, shorten > = 1pt, auto,
  node distance = 2.2cm,
  inner sep = 2pt, minimum size = 8pt, thick,
  every state/.style={draw=none, fill=none, font=\small},
  obs/.style={draw=black, rounded corners=2pt, inner sep=4pt, font=\small},
  latent/.style={draw=black, dashed, rounded corners=2pt, inner sep=4pt, font=\small}
]
  \node[state] (v) {$V$};
  \node[state] (u) [below=0.5cm of v] {$U,X$};

  \node[state] (dplus) [right=2cm of v] {$D_+$};
  \node[state] (d1) [below=0.5cm of dplus] {$D_{-1}$};
  \node[state] (dk) [below=1cm of d1] {$D_{-K}$};
  \node[state] (vdots2) at ($(d1)!0.5!(dk)$) {$\vdots$};

  \node[state] (iplus) [right=1.5cm of dplus] {$I_+$};
  \node[state] (i1) [below=0.5cm of iplus] {$I_{-1}$};
  \node[state] (ik) [below=1cm of i1] {$I_{-K}$};
  \node[state] (vdots1) at ($(i1)!0.5!(ik)$) {$\vdots$};

  \node[state] (t) [right=1.5cm of i1] {$T$};

  \node[latent] (alpha) [right=0.8cm of t] {$\alpha$};

  \node[state] (y1) [right=1.5cm of alpha] {$Y_{-1}$};
  \node[state] (yplus) [above=0.5cm of y1] {$Y_+$};
  \node[state] (yk) [below=1cm of y1] {$Y_{-K}$};
  \node[state] (vdots3) at ($(y1)!0.5!(yk)$) {$\vdots$};

  \node[state] (p) [below=1.25cm of alpha] {$P$};

  \path[->] (v) edge (dplus);
  \path[->] (v) edge[out=30, in=150] (iplus);
  \path[->] (u) edge (v);
  \path[->] (u) edge (dplus);
  \path[->] (u) edge (d1);
  \path[->] (u) edge (dk);
  \path[->] (u) edge (iplus);
  \path[->] (u) edge[out=-15, in=195] (i1);
  \path[->] (u) edge[out=-15, in=195] (t);
  \path[->] (u) edge (ik);

  \path[->] (dplus) edge (iplus);
  \path[->] (d1) edge (i1);
  \path[->] (dk) edge (ik);

  \path[->] (iplus) edge (t);
  \path[->] (i1) edge (t);
  \path[->] (ik) edge (t);


  \path[->] (t) edge (alpha);
  \path[->] (dplus) edge[out=30, in=150] (alpha);
  \path[->] (d1) edge[out=15, in=165] (alpha);
  \path[->] (dk) edge[out=330, in=210] (alpha);

  \path[->] (alpha) edge (yplus);
  \path[->] (alpha) edge (y1);
  \path[->] (alpha) edge (yk);
  \path[->] (alpha) edge (p);


\end{tikzpicture}
\caption{DAG showing the relationship between detectable pathogen ($D_+, \mathbf{D}$), symptomatic infection ($I_+, I_{-k}$), and the observed test result ($Y_+, Y_{-k}$) and the pan-negative indicator ($P$) via the attribution mapping ($\alpha$). The dashed rectangle around $\alpha$ indicates that the attribution rule is a design choice and a mapping between random variables rather than a true state variable itself.}
\label{fig:detection-dag}
\end{figure}

The DAG highlights three key features. First, the detection layer $(D_+, \mathbf{D}_{-})$ is an intermediate between the latent infection process and the observed outcome vector $\mathbf{Y} = (Y_+, Y_{-1}, \ldots, Y_{-K})$: it captures what the assay measures, which may diverge from what is clinically relevant. Second, the co-detection ($C_{-}, C_{+}$) and pan-negative ($P$) indicators are deterministic functions of the detection vector---they do not introduce new causal mechanisms, but they delineate the boundary cases that require explicit handling---while genuine co-infection is the corresponding property of the attributed outcome vector, arising when more than one component of $\mathbf{Y}$ equals $+1$. Third, the attribution mapping $\alpha$ is a \emph{design choice}, not a data-generating quantity: different investigators analyzing the same data may apply different mappings, yielding different analytic samples and different estimands. Making $\alpha$ explicit encourages pre-specification and transparency about how co-detections, co-infections, and pan-negatives are resolved.

\section{Identification under imperfect multiplex testing}\label{sec:app_imperfect_test}

The identification results of Appendices~\ref{sec:app_identification_1}--\ref{sec:app_identification_2} and~\ref{sec:app_pooling} were stated for the true outcome $Y$. In practice, the analyst observes a mismeasured outcome derived from the multiplex detection vector $(D_+, D_{-1}, \ldots, D_{-K})$. Mirroring the attribution mapping of Appendix~\ref{sec:app_codetection}, each assay yields a per-pathogen observed outcome $\widetilde{Y}_j$ ($+1$ if tested and assay-positive, $-1$ if tested and assay-negative, $0$ if untested), collected into the observed vector $\widetilde{\mathbf{Y}} = (\widetilde{Y}_+, \widetilde{Y}_{-1}, \ldots, \widetilde{Y}_{-K})$. Under the mutual exclusivity maintained throughout this appendix, $\widetilde{\mathbf{Y}}$ collapses to the scalar mismeasured outcome $\widetilde{Y}$ exactly as in Equation~\eqref{eq:collapse}, with $\{\widetilde{Y} = +1\} = \{\widetilde{Y}_+ = 1\}$ and $\{\widetilde{Y} = -k\} = \{\widetilde{Y}_{-k} = 1\}$; we work with the scalar form below. Building on the single-pathogen result of \textcite{boyerIdentificationEstimationVaccine2026}, we extend the standard non-differential-misclassification conditions to the multiplex panel and show that the pathogen-specific and pooled multiplex TND estimators continue to identify $RR_V(X)$ under those conditions, and characterise the residual bias incurred when pan-negative episodes are included in the control pool.

\paragraph{Assumptions.}

\begin{enumerate}[label=(D\arabic*),ref=D\arabic*]
  \item \label{ass:D1} \textit{Perfect specificity for every assay.} For each $j \in \{+, -1, \ldots, -K\}$ and every individual $i$,
  $$I_{j,i} = 0 \;\Longrightarrow\; D_{j,i} = 0,$$
  equivalently $D_{j,i} = 1 \Rightarrow I_{j,i} = 1$. This is the assay-by-assay version of the perfect-specificity condition (G1) of \textcite{boyerIdentificationEstimationVaccine2026}.
  \item \label{ass:D2} \textit{Non-differential sensitivity for every assay.} For each $j$, define $se_j(X) = \Pr(D_j = 1 \mid I_j = 1, T = 1, X)$ and assume
  $$\Pr(D_j^v = 1 \mid V = v', I_j^v = 1, T = 1, X) \;=\; se_j(X) \qquad \forall\, v, v' \in \{0,1\}.$$
  Sensitivities may vary across assays and across covariate strata but not across vaccination strata within a stratum of $X$.
\end{enumerate}
Implicit in Assumption~\ref{ass:D2} is that $D_j$ depends on $V$ only through $(I_j, X)$, the standard non-differential-misclassification structure.

\begin{lemma}\label{lemma:imperfect}
Under Assumptions~\ref{ass:D1}, \ref{ass:D2}, and the mutual-exclusivity of $I_+$ and $\{I_{-k}\}$ maintained in Sections~\ref{sec:multiplex_implications}--\ref{sec:three_principles}, for every $v \in \{0,1\}$, $k \in \{1,\ldots,K\}$, and $X$,
\begin{align}
\Pr(\widetilde{Y} = +1 \mid V = v, X) &\;=\; se_+(X)\,\Pr(Y = +1 \mid V = v, X), \label{eq:imperfect_lemma_plus}\\
\Pr(\widetilde{Y} = -k \mid V = v, X) &\;=\; se_{-k}(X)\,\Pr(Y = -k \mid V = v, X). \label{eq:imperfect_lemma_k}
\end{align}
\end{lemma}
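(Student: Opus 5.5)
The plan is to write each observed event as an intersection of true-status, detection, and testing events, and then peel these apart with the two assumptions. Fix $v$, $X$, and a pathogen index $j \in \{+,-1,\ldots,-K\}$. The assay for $j$ is run only on tested episodes, so the observed event is
\[
\{\widetilde{Y}=j\}=\{T=1,\;D_j=1\},
\]
understood within the collapse of Equation~\eqref{eq:collapse}. Here $\{\widetilde Y=j\}$ means $\{\widetilde Y=+1\}$ when $j=+$ and $\{\widetilde Y=-k\}$ when $j=-k$; likewise for $\{Y=j\}$.

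The first step uses Assumption~\ref{ass:D1}. Since $D_j=1\Rightarrow I_j=1$, we may intersect with $\{I_j=1\}$ at no cost:
\[
\{\widetilde Y = j\}=\{T=1,\;I_j=1,\;D_j=1\}=\{Y=j\}\cap\{D_j=1\}.
\]
The last equality holds because $\{T=1,\,I_j=1\}$ is exactly the true event $\{Y=j\}$ under mutual exclusivity. Mutual exclusivity also ensures that a detection for $j$ cannot coincide with true illness from another pathogen, which could otherwise relabel the episode. Consequently the observed categories are disjoint and the collapse to the scalar $\widetilde Y$ is well defined. Episodes with $I_j=1$ but $D_j=0$ drop out, becoming pan-negatives or untested-equivalents; they never leak into another category.

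The second step factorizes the probability:
\[
\Pr(\widetilde Y=j\mid V=v,X)=\Pr(D_j=1\mid I_j=1,T=1,V=v,X)\,\Pr(Y=j\mid V=v,X).
\]
I then apply Assumption~\ref{ass:D2}, using consistency (Assumption~\ref{ass:A5}) to pass from observed to potential quantities on $\{V=v\}$. Taking $v'=v$, we have $D_j=D_j^v$ and $I_j=I_j^v$ on $\{V=v\}$, so the first factor equals $se_j(X)$ regardless of $v$. Specializing $j=+$ gives \eqref{eq:imperfect_lemma_plus}, and $j=-k$ gives \eqref{eq:imperfect_lemma_k}.

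The main obstacle is the bookkeeping in the first step. One must verify that the scalar collapse of $\widetilde{\mathbf Y}$ actually matches $\{T=1, D_j=1\}$, and that nothing assigned to category $j$ is detected through another pathogen's assay. Perfect specificity across all assays, together with mutual exclusivity of the $I$'s, guarantees that at most one $D_j$ equals one per episode, so the category is unambiguous. A secondary subtlety is that $se_j(X)$ is defined conditional on $T=1$. The conditioning in Assumption~\ref{ass:D2} is matched exactly by conditioning on $\{T=1,\,I_j=1\}=\{Y=j\}$, so no separate argument about testing is needed.
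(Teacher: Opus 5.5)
Your proof is correct and follows the paper's own argument: perfect specificity rewrites the observed event as $\{T=1, I_j=1, D_j=1\}$, you factor out $\Pr(D_j=1\mid I_j=1,T=1,V,X)$, and non-differential sensitivity turns that factor into $se_j(X)$, while the remaining factor is $\Pr(Y=j\mid V,X)$ by definition. The differences are cosmetic:
\begin{itemize}
\item you treat the focal and control cases together with a single index $j$;
\item you fold the paper's explicit chain $D_{-k}=1\Rightarrow I_{-k}=1\Rightarrow I_+=0\Rightarrow D_+=0$, which shows the focal-negative requirement for $\widetilde{Y}=-k$ is automatic, into your observation that at most one $D_j$ can equal one;
\item you state the consistency step used in applying Assumption~\ref{ass:D2}, which the paper leaves implicit.
\end{itemize}
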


\begin{proof}
For \eqref{eq:imperfect_lemma_plus}, the event $\{\widetilde{Y} = +1\}$ requires $T = 1$ and $D_+ = 1$; by Assumption~\ref{ass:D1} applied to the focal assay, $D_+ = 1 \Rightarrow I_+ = 1$, so $\{\widetilde{Y} = +1\} = \{T = 1, D_+ = 1, I_+ = 1\}$. Factorising,
$$\Pr(\widetilde{Y} = +1 \mid V, X) = \Pr(D_+ = 1 \mid I_+ = 1, T = 1, V, X)\,\Pr(I_+ = 1, T = 1 \mid V, X),$$
and by Assumption~\ref{ass:D2} the first factor equals $se_+(X)$ and does not depend on $V$, while the second factor is $\Pr(Y = +1 \mid V, X)$ by definition.

For \eqref{eq:imperfect_lemma_k}, the event $\{\widetilde{Y} = -k\}$ requires $T = 1$, $D_+ = 0$, and $D_{-k} = 1$. By Assumption~\ref{ass:D1} applied to assay $k$, $D_{-k} = 1 \Rightarrow I_{-k} = 1$; by mutual exclusivity $I_{-k} = 1 \Rightarrow I_+ = 0$; and by Assumption~\ref{ass:D1} applied to the focal assay, $I_+ = 0 \Rightarrow D_+ = 0$. The condition $D_+ = 0$ is therefore automatic given $D_{-k} = 1$, and $\{\widetilde{Y} = -k\} = \{T = 1, D_{-k} = 1, I_{-k} = 1\}$. Factorising and applying Assumption~\ref{ass:D2} for assay $k$ gives \eqref{eq:imperfect_lemma_k}.
\end{proof}

\begin{proposition}\label{prop:imperfect_pathogen_specific}
Under Assumptions~\ref{ass:D1}, \ref{ass:D2}, and mutual exclusivity, the pathogen-specific TND odds ratio computed with the observed outcome equals the pathogen-specific TND odds ratio computed with the true outcome:
$$
OR_k^{\text{obs}}(X) \;\equiv\; \frac{\Pr(\widetilde{Y} = +1 \mid V=1, X)\,\Pr(\widetilde{Y} = -k \mid V=0, X)}{\Pr(\widetilde{Y} = -k \mid V=1, X)\,\Pr(\widetilde{Y} = +1 \mid V=0, X)} \;=\; OR_k(X).
$$
\end{proposition}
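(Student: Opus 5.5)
The plan is to substitute the two identities of Lemma~\ref{lemma:imperfect} directly into the four factors of $OR_k^{\text{obs}}(X)$ and watch the sensitivities cancel. Concretely, I will rewrite the numerator factor $\Pr(\widetilde{Y}=+1\mid V=1,X)$ as $se_+(X)\Pr(Y=+1\mid V=1,X)$ using \eqref{eq:imperfect_lemma_plus}. I will likewise rewrite $\Pr(\widetilde{Y}=-k\mid V=0,X)$ as $se_{-k}(X)\Pr(Y=-k\mid V=0,X)$ using \eqref{eq:imperfect_lemma_k}. The two denominator factors are treated the same way at the opposite vaccination levels. The numerator then carries the multiplicative factor $se_+(X)\,se_{-k}(X)$, and so does the denominator. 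The key point is that neither sensitivity depends on $v$ (Assumption~\ref{ass:D2}), so the factors cancel exactly, leaving
\[
\frac{\Pr(Y=+1\mid V=1,X)\,\Pr(Y=-k\mid V=0,X)}{\Pr(Y=-k\mid V=1,X)\,\Pr(Y=+1\mid V=0,X)}.
\]

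Next, I will identify this expression with $OR_k(X)$ as defined in \eqref{eq:beta_k}, which is written with conditioning on $T=1$. For this I will reuse the algebraic identity preceding \eqref{eq:app_or_as_uncond}: because $Y\in\{+1,-k\}$ implies $T=1$, we have $\Pr(Y=y\mid T=1,V=v,X)=\Pr(Y=y\mid V=v,X)/\Pr(T=1\mid V=v,X)$. The normalizing factor $\Pr(T=1\mid V=v,X)$ appears once in the numerator and once in the denominator for each $v$, so it cancels. This shows that the unconditional-on-$T$ form and the $T=1$-conditional form of $OR_k$ coincide. The same remark shows that computing $OR_k^{\text{obs}}$ among the tested, as done in practice, gives the same value, since $\widetilde{Y}\neq 0$ also implies $T=1$.

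The only step requiring care is well-definedness. The cancellation needs $se_+(X)>0$ and $se_{-k}(X)>0$, and it needs the true-outcome probabilities in the denominator to be positive, as ensured by the overlap assumption (Assumption~\ref{ass:A8} in its pathogen-specific form). I will state the positivity of the sensitivities as an implicit regularity condition: with zero sensitivity the observed odds ratio is undefined. I do not expect any substantive obstacle, because all of the measurement-error content has already been absorbed into Lemma~\ref{lemma:imperfect}. A short corollary then follows by combining this result with the pathogen-specific identification result of Appendix~\ref{sec:app_pooling}: $OR_k^{\text{obs}}(X)=RR_V(X)$ under the pathogen-specific versions of Assumptions~\ref{ass:A5}--\ref{ass:A8}.
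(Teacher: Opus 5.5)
Your proposal is correct and matches the paper's proof: substitute the two identities of Lemma~\ref{lemma:imperfect} into the four factors, and $se_+(X)$ and $se_{-k}(X)$ each appear once above and once below the bar and cancel. Your extra remarks make explicit details the paper leaves implicit: reconciling the unconditional form with the $T=1$-conditional definition of $OR_k(X)$ via the odds-invariance identity, and requiring strictly positive sensitivities so that the observed odds ratio is defined.
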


\begin{proof}
Substitute \eqref{eq:imperfect_lemma_plus} and \eqref{eq:imperfect_lemma_k} into both numerator and denominator: $se_+(X)$ and $se_{-k}(X)$ each appear with equal multiplicity above and below the bar and cancel, leaving the right-hand side equal to $OR_k(X)$.
\end{proof}

Combining Proposition~\ref{prop:imperfect_pathogen_specific} with the pathogen-specific identification result of Appendix~\ref{sec:app_pooling} (Proposition~\ref{prop:identification_equiconf} applied with $Y = -1$ replaced by $Y = -k$ throughout, together with the pathogen-specific equi-confounding restriction \eqref{eq:equiconf_k}) gives $OR_k^{\text{obs}}(X) = RR_V(X)$ \emph{exactly}, with no rarity assumption. This is the multiplex analogue of the \textcite{boyerIdentificationEstimationVaccine2026} result, applied pathogen-by-pathogen.

\begin{proposition}\label{prop:imperfect_pooled}
Let $\mathcal{S}$ denote the retained control set and suppose the pathogen-specific equi-confounding restriction \eqref{eq:equiconf_k} holds for every $k \in \mathcal{S}$ with common multiplicative confounding ratio $\rho(X)$. Then under Assumptions~\ref{ass:D1}, \ref{ass:D2}, mutual exclusivity, and the remaining assumptions of Proposition~\ref{prop:identification_equiconf} applied to each $k \in \mathcal{S}$,
$$
OR_{\text{pool}}^{\text{obs}}(X) \;\equiv\; \frac{\Pr(\widetilde{Y} = +1 \mid V=1, X)\,\sum_{k \in \mathcal{S}}\Pr(\widetilde{Y} = -k \mid V=0, X)}{\sum_{k \in \mathcal{S}}\Pr(\widetilde{Y} = -k \mid V=1, X)\,\Pr(\widetilde{Y} = +1 \mid V=0, X)} \;=\; RR_V(X).
$$
\end{proposition}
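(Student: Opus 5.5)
The plan is to reduce the observed pooled odds ratio to the true-outcome pooled odds ratio and then invoke the common-ratio argument already used for control--control co-infection in Appendix~\ref{sec:app_control_control}. First, apply Lemma~\ref{lemma:imperfect} term by term. The focal factors $\Pr(\widetilde{Y}=+1\mid V=v,X)=se_+(X)\Pr(Y=+1\mid V=v,X)$ appear once in the numerator and once in the denominator, so $se_+(X)$ cancels exactly as in Proposition~\ref{prop:imperfect_pathogen_specific}. In each control sum, $\sum_{k\in\mathcal{S}}\Pr(\widetilde{Y}=-k\mid V=v,X)=\sum_{k\in\mathcal{S}} se_{-k}(X)\Pr(Y=-k\mid V=v,X)$.

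The sensitivities $se_{-k}(X)$ now vary with $k$ and sit inside a sum, so they do not factor out directly. This is the step I expect to be the main obstacle. The way around it is to use the common multiplicative ratio. The assumptions of Proposition~\ref{prop:identification_equiconf} applied pathogen-by-pathogen, namely consistency, no causal effect \eqref{eq:no_causal_effect_k}, and equi-confounding \eqref{eq:equiconf_k} with a common $\rho(X)$, should give for every $k\in\mathcal{S}$
\[
\Pr(Y=-k\mid V=1,X)=\Pr(Y^{0}=-k\mid V=1,X)=\rho(X)\,\Pr(Y^{0}=-k\mid V=0,X)=\rho(X)\,\Pr(Y=-k\mid V=0,X).
\]
I will read $\rho(X)$ as the control-side ratio in \eqref{eq:equiconf_k}, which by that equation also equals the focal no-vaccine ratio $\Pr(Y^0=1\mid V=1,X)/\Pr(Y^0=1\mid V=0,X)$. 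Substituting, the vaccinated control sum equals $\rho(X)$ times the unvaccinated control sum, with the same $k$-specific sensitivity weights appearing in both. The ratio of the two sensitivity-weighted sums is therefore exactly $\rho(X)$, regardless of how the $se_{-k}(X)$ differ.

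It then remains to combine the pieces:
\[
OR^{\text{obs}}_{\text{pool}}(X)=\frac{\Pr(Y=1\mid V=1,X)}{\Pr(Y=1\mid V=0,X)}\cdot\frac{1}{\rho(X)}.
\]
By consistency, the first factor is $\Pr(Y^1=1\mid V=1,X)/\Pr(Y^0=1\mid V=0,X)$. Dividing by $\rho(X)$, expressed as the focal ratio via \eqref{eq:equiconf_k}, replaces the denominator with $\Pr(Y^0=1\mid V=1,X)$ and yields $RR_V(X)$.

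Finally, overlap (Assumption~\ref{ass:A8}, pathogen-wise) guarantees that all denominators are positive on the relevant support. I would also remark that the argument fails if $\rho$ were allowed to depend on $k$. In that case the pooled observed ratio becomes an $se_{-k}$-reweighted mixture, so sensitivities would bias the pool, which is consistent with the Scenario~7 heterogeneity results.
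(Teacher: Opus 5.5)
Your proof is correct and follows the paper's argument essentially line for line: apply Lemma~\ref{lemma:imperfect}, cancel $se_+(X)$, use the common ratio $\rho(X)$ so the two sensitivity-weighted control sums differ by exactly the factor $\rho(X)$, and conclude $r_+(X)/\rho(X)=RR_V(X)$; your explicit derivation of $\Pr(Y=-k\mid V=1,X)=\rho(X)\Pr(Y=-k\mid V=0,X)$ from consistency and \eqref{eq:no_causal_effect_k} fills in a step the paper leaves implicit. One small correction to your closing aside: Scenario~7 keeps the common-$\rho$ baseline, and there the pooled estimator that excludes pan-negatives is approximately unbiased, as this proposition predicts; its bias and homogeneity-test rejections come from adding pan-negatives contaminated by missed focal infections, not from a $k$-dependent $\rho$.
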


\begin{proof}
By \eqref{eq:imperfect_lemma_k} and the common equi-confounding ratio,
$$
\sum_{k \in \mathcal{S}} \Pr(\widetilde{Y} = -k \mid V = 1, X) = \sum_{k \in \mathcal{S}} se_{-k}(X) \Pr(Y = -k \mid V = 1, X) = \rho(X) \sum_{k \in \mathcal{S}} se_{-k}(X) \Pr(Y = -k \mid V = 0, X),
$$
and the corresponding sum at $V = 0$ equals $\sum_{k \in \mathcal{S}} se_{-k}(X) \Pr(Y = -k \mid V = 0, X)$, so the ratio of the two sums is $\rho(X)$. By \eqref{eq:imperfect_lemma_plus}, the focal sensitivities $se_+(X)$ cancel between numerator and denominator. Therefore $OR_{\text{pool}}^{\text{obs}}(X) = r_+(X)/\rho(X)$, where $r_+(X) = \Pr(Y = +1 \mid V=1, X)/\Pr(Y = +1 \mid V=0, X)$. Applying Proposition~\ref{prop:identification_equiconf} to the true outcome gives $r_+(X)/\rho(X) = RR_V(X)$.
\end{proof}

Proposition~\ref{prop:imperfect_pooled} is slightly stronger than the single-pathogen result: heterogeneous control sensitivities $\{se_{-k}(X)\}$ do not bias the pooled multiplex estimator, because the equi-confounding restriction forces them to attach proportionally to the $V=1$ and $V=0$ pooled-control probabilities and cancel in the ratio.

\paragraph{Pan-negative inclusion.}
Pan-negative episodes are the event $\widetilde{Y}_{\text{pan}} = \{T = 1, D_+ = 0, D_{-k} = 0 \;\forall k \in \mathcal{S}\}$. Decomposing by true status under Assumptions~\ref{ass:D1}, \ref{ass:D2}, mutual exclusivity, and the special case in which pan-negativity arises only from measurement error on the panel (no infections by off-panel pathogens or non-infectious sources of symptoms),
\begin{equation}\label{eq:panneg_decomp}
\Pr(\widetilde{Y}_{\text{pan}} \mid V, X) \;=\; (1 - se_+(X))\,\Pr(Y = +1 \mid V, X) \;+\; \sum_{k \in \mathcal{S}} (1 - se_{-k}(X))\,\Pr(Y = -k \mid V, X).
\end{equation}
The missed-control contributions, by the same argument as Proposition~\ref{prop:imperfect_pooled}, carry the common multiplicative confounding ratio $\rho(X)$ and act as additional valid controls. The missed-focal contribution carries the focal multiplicative ratio $r_+(X) = RR_V(X)\rho(X)$, which is case-like rather than control-like.

Augmenting the control denominator with $\widetilde{Y}_{\text{pan}}$ and applying \eqref{eq:imperfect_lemma_plus}--\eqref{eq:panneg_decomp} yields, after cancellation of the detected-control sensitivities as in Proposition~\ref{prop:imperfect_pooled},
\begin{equation}\label{eq:pool_pan_or}
OR_{\text{pool+pan}}^{\text{obs}}(X) \;=\; RR_V(X) \cdot \frac{Q_0(X) + (1 - se_+(X))\,p_0(X)}{Q_0(X) + (1 - se_+(X))\,RR_V(X)\,p_0(X)},
\end{equation}
where $p_0(X) = \Pr(Y = +1 \mid V = 0, X)$ and $Q_0(X) = \sum_{k \in \mathcal{S}}\Pr(Y = -k \mid V = 0, X)$. The correction factor equals one---and pan-negative inclusion is identifying---in any of the following three cases:
\begin{enumerate}[nosep,label=(\roman*)]
  \item \emph{Perfect focal sensitivity}, $se_+(X) = 1$: the missed-focal mass in pan-negatives vanishes and the augmented pool consists entirely of detected and missed controls, all of which carry the equi-confounding ratio.
  \item \emph{No vaccine effect}, $RR_V(X) = 1$: trivially.
  \item \emph{Rare focal infection in the tested}, $p_0(X) \to 0$ and $RR_V(X) p_0(X) \to 0$: the correction factor approaches one, recovering the standard rare-outcome regime.
\end{enumerate}
When focal sensitivity is imperfect, focal infection is not rare among the tested, and there is a real vaccine effect, pan-negative inclusion induces residual bias toward the null whose magnitude is bounded by \eqref{eq:pool_pan_or} and scales with $(1 - se_+(X))\,p_0(X)/Q_0(X)$, the fraction of the augmented control pool composed of missed focal infections. Pre-specified sensitivity analyses that report estimates with and without pan-negative inclusion (recommended in Section~\ref{sec:coinfection}) provide a direct empirical check on the magnitude of this bias.

\section{Practical implications of the two views: when control selection recommendations diverge}\label{sec:app_two_views_diverge}

Section~\ref{sec:two_views} introduced two frameworks for thinking about control selection in a TND: \emph{View 1}, in which controls should represent vaccination rates in the source population (requiring control exchangeability and no unmeasured confounding), and \emph{View 2}, in which controls are auxiliary outcomes used to correct for unmeasured confounding (requiring equi-confounding and vaccine irrelevance). This appendix clarifies when and how these views lead to divergent practical recommendations.

\paragraph{When the views converge.}

For many plausible violations of the TND assumptions, the two views recommend the same action. Consider a control pathogen with its own vaccine $V_2$ that is correlated with the focal vaccine $V$ because both target similar populations (e.g., elderly or high-risk individuals). This is the vaccine-entanglement scenario described in Section~\ref{sec:three_principles}. Under \emph{View 1}, if $V_2$ is not measured and adjusted for, there is an uncovered backdoor path $V \leftarrow U \to V_2 \to I_{-} \to T$ that violates control exchangeability, so the control should be excluded or the analysis conditioned on $V_2$ (if available). Under \emph{View 2}, the same backdoor path violates the equi-confounding assumption because the confounding path $V \leftarrow U \to V_2 \to I_{-}$ has no symmetric path through the focal outcome $I_+$, again recommending exclusion or conditioning on $V_2$. Both views reach the same practical conclusion: remove or adjust for this control unless the cross-vaccine correlation is measured.

\paragraph{When the views diverge: unmeasured equi-confounding.}

The two views diverge fundamentally when there exists a strong unmeasured factor $U$ that is:
\begin{enumerate}[nosep]
  \item Confounded with vaccination in the source population ($U$ is a common cause of $V$ and illness outcomes),
  \item Affects both the focal and control outcomes through the same mechanism (equi-confounding), and
  \item Cannot be measured or adjusted for in the analysis.
\end{enumerate}

Under these conditions, the control pathogen would normally be invalid as a TND control under \emph{View 1} because the vaccination distribution among control outcomes is not representative of the source population vaccination distribution---a violation of the control exchangeability assumption. However, the control's unrepresentativeness is precisely a reflection of the broader unmeasured confounding structure. From the perspective of \emph{View 2}, this makes the control \emph{valuable}, not problematic: the control's distorted vaccination distribution is an accurate representation of how unmeasured confounding generates the residual bias in the focal analysis, allowing the analyst to subtract out this bias structure.

\paragraph{Concrete example: pneumococcal disease by vaccine-targeted and non-vaccine-targeted serotypes.}

To illustrate this divergence concretely, consider analyzing vaccine effectiveness (VE) against invasive pneumococcal disease caused by vaccine-targeted strains (the focal outcome) using disease from non-vaccine-targeted strains as a control (a common and sensible within-organism comparison in multiplex systems). Suppose:
\begin{itemize}[nosep]
  \item The pneumococcal vaccine $V$ protects against vaccine-targeted serotypes (focal outcome) but not against non-vaccine-targeted serotypes (control outcome).
  \item An unmeasured factor $U$ (e.g., underlying immune competence, propensity for aggressive preventive care, or chronic illness burden) affects both vaccination receipt and susceptibility to \emph{both} serotypes equally.
  \item Individuals with poor immune competence (high $U$) are both more likely to be vaccinated and more susceptible to invasive pneumococcal disease regardless of serotype.
\end{itemize}

In the source population, suppose $\Pr(V=1 \mid X) = 0.50$ (50\% of the population is vaccinated). Among those vaccinated, suppose $40\%$ have poor immune competence ($U=1$), whereas among unvaccinated, $10\%$ have poor immune competence. This creates selection bias: vaccination is a proxy for poor immune competence in this setting.

In a test-negative design naively comparing cases of vaccine-targeted disease ($I_+ = 1$) to cases of non-vaccine-targeted disease ($I_- = 1$), the vaccination distribution is distorted compared to the source population because:
\begin{itemize}[nosep]
  \item Cases of vaccine-targeted disease ($I_+ = 1$) are enriched with vaccinated individuals (vaccination provides some protection, so unvaccinated individuals are over-represented among these cases, but vaccination also selects for $U=1$ so vaccinated individuals remain well-represented).
  \item Cases of non-vaccine-targeted disease ($I_- = 1$) are also enriched with vaccinated individuals \emph{through the unmeasured confounding}, for the identical reason: $U$ is a common cause of both vaccination and disease.
\end{itemize}

Consequently, cases of non-vaccine-targeted disease have a \emph{higher} proportion vaccinated than the source population (approximately $50\%$ of cases are vaccinated, but within-stratum differences are driven by $U$, not by $V$'s causal effect).

\emph{Under View 1:} This violates control exchangeability because the vaccination distribution among non-vaccine-targeted disease cases is not representative of the source population. An analyst following this view would exclude non-vaccine-targeted strains as controls, reasoning: ``We cannot use this control because the vaccinated are over-represented due to their poor immune status and therefore unrepresentative of the source population.'' More broadly, they may recommend against the TND at altogether due to the unavailability of a crucial confounder: poor immune status.

\emph{Under View 2:} The distorted vaccination distribution among cases is exactly the confounding structure we want to capture. An analyst following this view would include non-vaccine-targeted strains as controls, reasoning: ``Non-vaccine-targeted strains are affected by the same unmeasured confounding as vaccine-targeted strains. The fact that vaccinated individuals are over-represented among non-vaccine-targeted disease cases is a direct reflection of the unmeasured immune competence confounding that also biases the focal analysis. By comparing VE estimates between the two outcomes, we can subtract out the shared bias structure.''

\paragraph{Practical guidance.}

Ultimately, the choice between these approaches depends on the investigator's underlying assumptions about the confounding structure in the source population, the availability of measured confounders, and the availability of suitable control pathogens (and one's ability to correctly identify them). For instance, when extensive covariate information is available and measured confounders can be adequately adjusted for, the analyst may lean toward the View 1 approach. Conversely, when key confounders are unmeasured but suitable control pathogens exist, the View 2 approach may be more appropriate.

\section{Appendix: Additional simulation results}\label{sec:app_sims}

This appendix reports simulation results beyond the four main scenarios described in Section~\ref{sec:sim}. We first present two parameter sweeps that examine how bias varies continuously with control-pathogen prevalence and with violation magnitude across the three violation scenarios. We then describe scenarios that formalise the use of pan-negative episodes as controls (scenarios 5a and 5b, building on Appendix~\ref{sec:app_codetection}), separate the two co-detection regimes of Section~\ref{sec:coinfection} into a control--control--only case (scenario 6a) and a focal--control case with a difference-in-differences estimator (scenario 6b), and verify the Appendix~\ref{sec:app_imperfect_test} identification results under imperfect multiplex testing (scenario 7).

\subsection{Parameter sweeps}\label{sec:app_sweeps}

The main simulation study treats violation parameters as fixed. To understand how sensitive the estimators are to the degree of violation and to the composition of the control pool, we conducted two additional sweeps.

\paragraph{Prevalence sweep.}
For each of the three violation scenarios (scenarios 2--4), we varied the baseline intercept $\gamma_{02}$ of control pathogen 2 while holding all other parameters fixed. Varying $\gamma_{02}$ changes the marginal prevalence of the problematic control pathogen within the tested population and, consequently, its relative weight in the pooled estimator. We evaluated 20 equally spaced grid points spanning a marginal prevalence of approximately 2\%--40\% among tested individuals. For each grid point we ran 500 replicates and computed mean bias, RMSE, and Monte Carlo standard errors.

Figure~\ref{fig:prevalence_sweep} shows bias in estimated VE (in percentage points) as a function of the assumed prevalence of control pathogen 2 for the pooled and screened estimators. Several patterns emerge. First, for the screened estimator---which excludes pathogen 2---bias is near zero and stable across the entire prevalence range, confirming that the exclusion strategy remains valid regardless of how common the problematic pathogen is. Second, for the pooled estimator, bias scales with the weight of the violated pathogen: at very low prevalences of pathogen 2, the contamination is diluted by the remaining four control pathogens and bias is minimal; as pathogen 2 becomes more prevalent, its share of the control pool increases and bias grows. Third, the rate of increase differs by violation type: the vaccine irrelevance violation and at-home testing violation produce monotone bias growth, while entanglement-induced bias grows more slowly because $V_2$ is correlated but not identical to $V$, so the confounding transmitted through $V_2$ is partially diluted. The prevalence sweep underscores that the severity of bias from a single violated control depends strongly on its prevalence relative to the other controls---a consideration particularly relevant in settings where one pathogen dominates the season.

\begin{figure}[hp]
\centering
\includegraphics[width=0.95\textwidth]{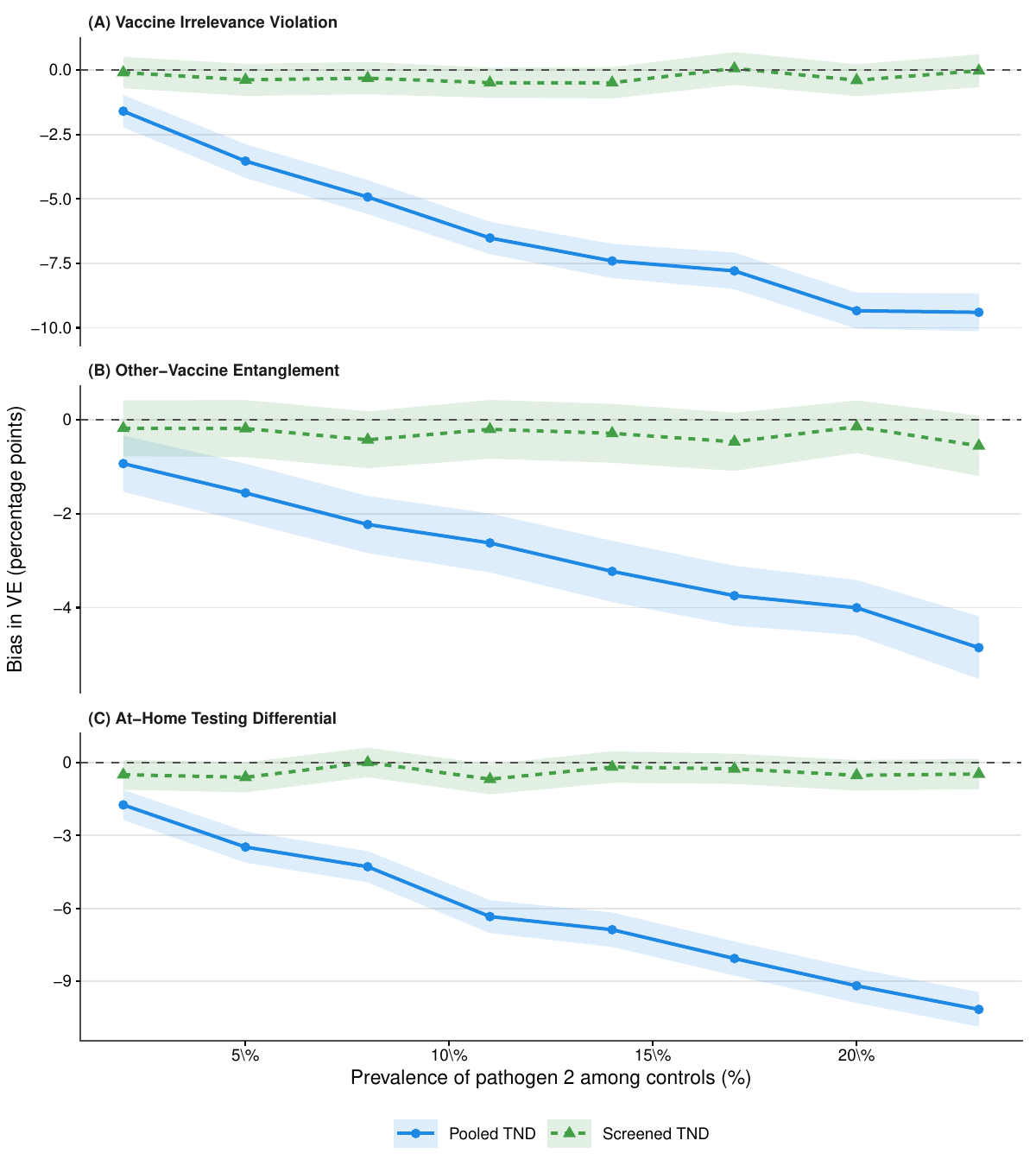}
\caption{Bias in estimated vaccine effectiveness (percentage points) as a function of the prevalence of control pathogen 2 (the pathogen violating the relevant assumption) for each of the three violation scenarios. Ribbons show $\pm 1.96$ Monte Carlo standard errors. The screened estimator (which excludes pathogen 2) is near-zero throughout; pooled estimator bias grows with pathogen prevalence. True VE\,=\,50\%.}
\label{fig:prevalence_sweep}
\end{figure}

\paragraph{Violation-strength sweep.}
We also examined how bias grows as the magnitude of each violation increases from zero. The violation-strength parameter is scenario-specific:
\begin{itemize}[nosep]
  \item \textbf{Vaccine irrelevance (scenario 2):} the cross-protection log-OR $\beta_V^{(2)}$ is varied from $0$ (no violation) to $\log(0.5)$ (same protection as against the focal pathogen).
  \item \textbf{Other-vaccine entanglement (scenario 3):} the $U$-coupling coefficient in $\Pr(V_2=1\mid X,U) = \mathrm{logit}^{-1}(\cdot + c_U \cdot U)$ is varied from $c_U = 0$ (no correlation between $V$ and $V_2$) to $c_U = 3.0$ (strong correlation).
  \item \textbf{At-home testing (scenario 4):} the differential at-home testing gap $\Delta p = p_{V\!=\!1} - p_{V\!=\!0}$ is varied from $0$ (no differential testing) to $0.5$ (large differential).
\end{itemize}
For each parameter value we ran 500 replicates. Figure~\ref{fig:violation_sweep} shows the resulting bias curves. All three violation types show a near-zero pooled-estimator bias when the violation parameter is at zero, confirming that the baseline scenario is correctly specified. Bias then increases roughly monotonically with violation strength for all three scenarios. The screened estimator remains close to zero throughout, as expected. The violation-strength sweep is useful for calibrating expectations in applied settings: if investigators can bound the plausible range of a violation parameter using external data or study design considerations, Figure~\ref{fig:violation_sweep} conveys how large the resulting bias is likely to be.

\begin{figure}[hp]
\centering
\includegraphics[width=0.95\textwidth]{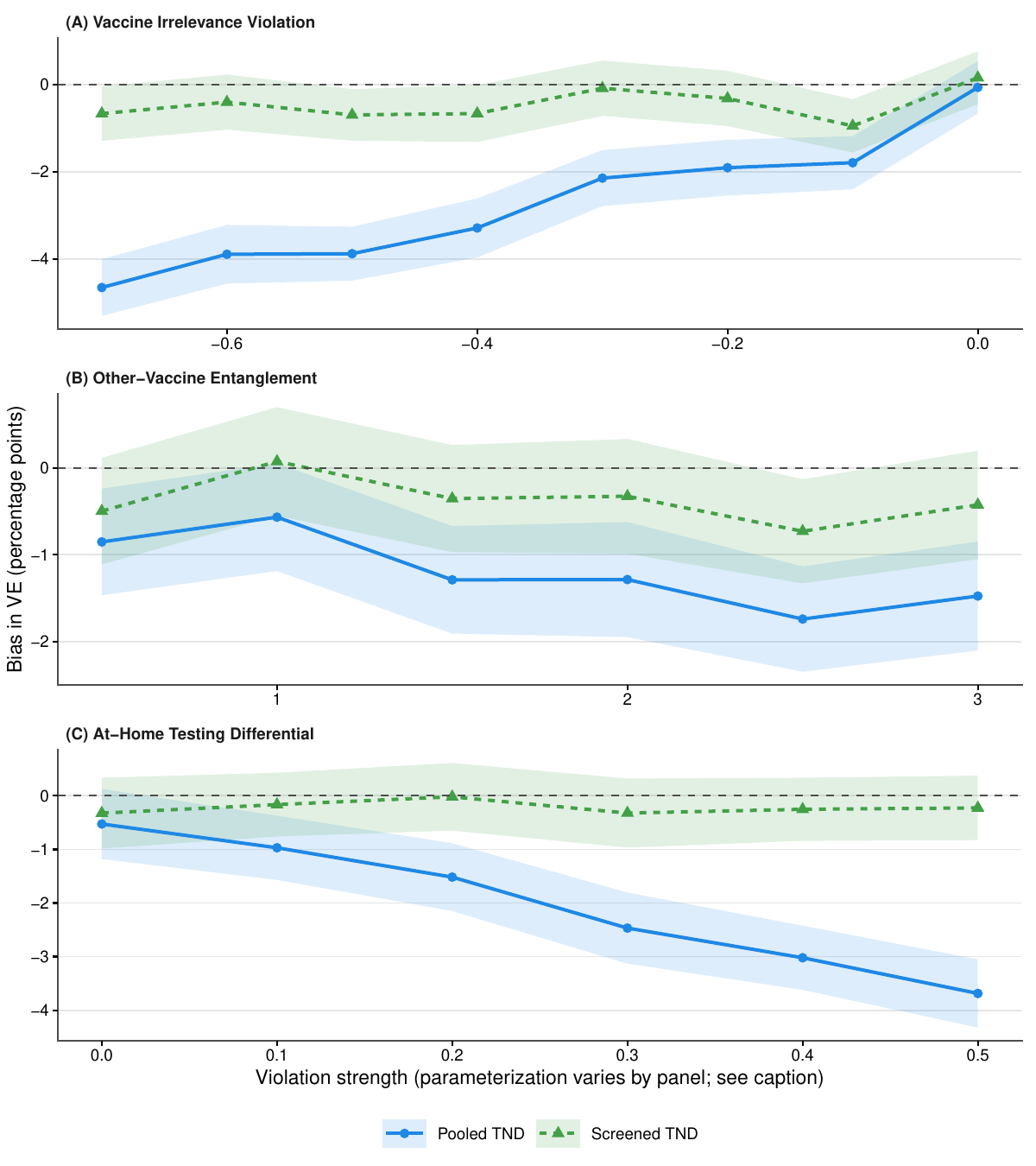}
\caption{Bias in estimated vaccine effectiveness (percentage points) as a function of violation magnitude for each of the three violation scenarios. The x-axis parameterisation differs by panel: cross-protection log-OR for vaccine irrelevance (panel A), $U$-coupling coefficient for entanglement (panel B), and differential at-home testing probability gap for at-home testing (panel C). Ribbons show $\pm 1.96$ Monte Carlo standard errors. True VE\,=\,50\%.}
\label{fig:violation_sweep}
\end{figure}

\subsection{Scenarios 5a and 5b: Pan-negative controls}\label{sec:app_panneg_sim}

The formal notation in Appendix~\ref{sec:app_codetection} defines pan-negative episodes as those in which a tested, symptomatic individual tests negative for all pathogens on the multiplex panel (Equation~\ref{eq:panneg}). While such episodes are typically discarded in standard TND analyses, they represent a potentially large pool of well-characterized negative controls whose inclusion could increase precision. The question is whether using pan-negatives as controls introduces bias. Section~\ref{sec:coinfection} discusses this conceptually; here we evaluate it through simulation.

We generated pan-negative episodes by introducing a \emph{background illness} process: a symptomatic tested-negative illness with probability
\[
\Pr(I_{\mathrm{bg}} = 1 \mid V, X, U) = \exp(-3.2 + 0.2X + 0.4U + \delta_V V),
\]
where $\delta_V$ governs whether the vaccine affects the background illness. Background illness causes clinical testing through the same testing model as pathogen-specific illness. All individuals with $I_{\mathrm{bg}} = 1$ and $T = 1$ who tested negative for all $K$ focal and control pathogens are designated as pan-negative controls with outcome $Y = -(K+1)$.

\paragraph{Scenario 5a (Pan-negatives valid).}
We set $\delta_V = 0$, so the vaccine has no effect on the background illness generating pan-negative episodes. Under this specification, pan-negatives satisfy vaccine irrelevance (the key validity condition for a TND control) and also share the same confounding structure as the focal pathogen through $U$. Including them as an additional control group should therefore not introduce bias.

\paragraph{Scenario 5b (Pan-negatives biased).}
We set $\delta_V = \log(0.7)$, so the vaccine reduces the probability of background illness by 30\%. Vaccinated individuals are less likely to present with pan-negative illness, making them under-represented in the pan-negative control group. This mimics a situation where the vaccine provides some non-specific protection (or is correlated with general health seeking behaviour), violating vaccine irrelevance for the pan-negative control group.

Figure~\ref{fig:pan_negative} shows pathogen-specific VE estimates across the $K$ control pathogens and the pan-negative group for both scenarios. Under scenario 5a, the pan-negative point estimate is consistent with the regular control pathogens and close to the true VE of 50\%, confirming that valid pan-negatives can be safely included. Under scenario 5b, the pan-negative point estimate is systematically biased downward (estimated VE $\approx 25\%$ versus the true $50\%$), reflecting the under-representation of vaccinated individuals among pan-negatives when the vaccine reduces background illness.

These results suggest that pan-negative controls are useful precisely when the vaccine has no plausible effect on the non-specific respiratory illness process generating them---a condition that should be assessed on biological grounds. When that condition is in doubt, a simple diagnostic is available: a pan-negative-specific pathogen-specific VE estimate that departs substantially from the estimates for established control pathogens is a signal of violation.

\begin{figure}[hp]
\centering
\includegraphics[width=0.85\textwidth]{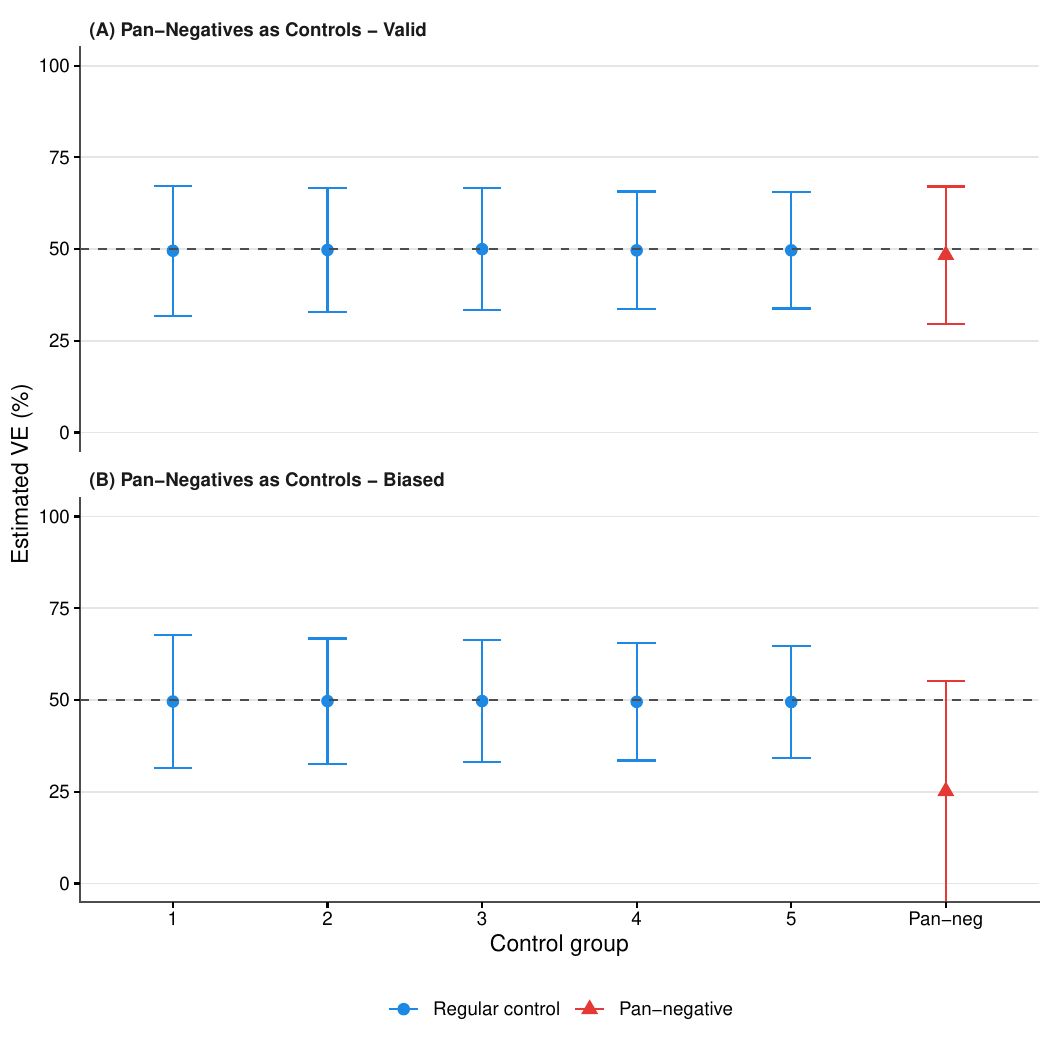}
\caption{Pathogen-specific vaccine effectiveness estimates for control pathogens 1--$K$ and the pan-negative group ($K+1$, shown in red) under scenarios 5a (pan-negatives valid; $\delta_V=0$) and 5b (pan-negatives biased; $\delta_V=\log(0.7)$). Error bars show $\pm 1.96$ simulation SD. Dashed line indicates true VE of 50\%. Under scenario 5b the pan-negative estimate departs from the other control pathogens, providing a diagnostic signal of violation.}
\label{fig:pan_negative}
\end{figure}

\subsection{Scenarios 6a and 6b: Co-detections}\label{sec:app_coinf_sim}

The main simulation study uses multinomial sampling to resolve infections to mutually exclusive outcomes, precluding co-detections by construction. Section~\ref{sec:coinfection} distinguishes two qualitatively different co-detection regimes: \emph{control--control} co-detections, which create an attribution ambiguity within the control pool but do not threaten identification of focal VE, and \emph{focal--control} co-detections, which asymmetrically deplete the control pool and bias the standard pooled estimator. Appendix~\ref{sec:app_focal_control} formalises the bias mechanism and identification result for the focal--control case. The two scenarios below evaluate each regime in turn.

\paragraph{Scenario 6a: control--control co-detections only.}
We retained the multinomial focal-vs-not-focal resolution from the main scenarios (so focal infection is mutually exclusive of control infection by construction), then drew each control pathogen independently among focal-negative individuals. Control pathogens can therefore co-occur with each other in the same episode, but the focal pathogen is never co-detected with a control. When multiple controls are detected on the same episode, we randomly attribute the episode to one of the positives uniformly (random attribution is a no-op in the multinomial scenarios 1--4 and 5a--b but is the natural symmetric rule when control--control co-detection is possible). The focal-pathogen intercept was kept at the baseline value (focal marginal infection probability $\approx 5\%$, preserving the rare-outcome regime in which $\mathrm{OR}\approx\mathrm{RR}$); only the control intercepts were elevated to $-2.0$ (each $\approx 14\%$ among focal-negatives) so the per-pair control--control co-detection rate is non-trivial. Under these settings the standard pathogen-specific and pooled estimators should be approximately unbiased, consistent with the Section~\ref{sec:coinfection} argument that control--control co-detection alone is not an identification threat under the maintained assumptions.

\paragraph{Scenario 6b: focal--control co-infection.}
We replaced the multinomial resolution with independent Bernoulli draws for each pathogen, allowing the focal pathogen to be co-detected with any subset of controls. To generate a non-trivial focal--control co-infection rate, we used the same elevated baseline intercepts as in scenario~6a (focal marginal infection probability $\approx 19\%$ and control marginal probability $\approx 14\%$ each), giving an expected focal--control co-infection rate of approximately $19\%\times14\%\approx 2.7\%$ per pair and roughly $57\%$ of focal-positive individuals carrying at least one control co-infection. Under the default attribution rule, focal-positive co-infected individuals are assigned to the case group and removed from the control pool, and because unvaccinated individuals face higher focal infection risk they are disproportionately depleted; the standard pooled TND estimator is therefore biased upward.

\paragraph{Modified-Poisson difference-in-differences estimator.}
For each control pathogen $k$, we restrict to individuals who tested positive for either the focal or the control pathogen ($I_+ = 1$ or $I_{-k} = 1$, including co-infected individuals in both arms) and stack the focal and control outcomes into a single dataset with an arm indicator. We fit a log-link (``modified'') Poisson working model that gives each arm its own intercept, $V$ slope, and $X$ slope:
\begin{align*}
\log\Pr(I_+ = 1 \mid V, X) &= a_0^+ + \beta_+ V + \bm{a}_X^{+\top} X, \\
\log\Pr(I_{-k} = 1 \mid V, X) &= a_0^k + \beta_k V + \bm{a}_X^{k\top} X.
\end{align*}
so that the within-arm coefficient on $V$ is directly the conditional log-risk-ratio (correctly specified because the data-generating process is log-linear in $X$ with no $V\times X$ interaction). The pairwise difference-in-differences contrast is
\[
\hat\delta_k = \hat\beta_+ - \hat\beta_k,
\]
the conditional log-risk-ratio for the focal pathogen. Modified Poisson (a log-link Poisson working model with robust standard errors) targets the risk ratio directly, so there is no odds-ratio/risk-ratio scale gap regardless of outcome prevalence. Because both arms are fit on the same individuals, we obtain $\mathrm{Var}(\hat\delta_k)$ from a cluster-robust (individual-clustered) sandwich covariance over the stacked fit---equivalent to an independence-working-correlation Poisson GEE---which captures $\mathrm{Cov}(\hat\beta_+,\hat\beta_k)$. This matters because the two log-risk-ratios are negatively correlated, so the naive independence combination $\sqrt{\mathrm{Var}(\hat\beta_+)+\mathrm{Var}(\hat\beta_k)}$ understates the variance and under-covers. We pool across the $K$ control pathogens with a single stacked modified-Poisson fit that imposes a common $\delta$ while retaining pathogen-specific nuisance terms (arm intercepts, arm $X$ slopes, and control-arm $V$ effects); its cluster-robust SE additionally accounts for the cross-pathogen correlation induced by the shared focal cases. We apply the estimator to both scenarios for parallelism, though it is principally motivated by scenario~6b.

Appendix~\ref{sec:app_focal_control} shows that under vaccine irrelevance for each control $k$ and equi-confounding on the log-risk-ratio scale, $\hat\delta_k$ identifies the causal focal log-risk-ratio and hence VE on the risk-ratio scale; the construction is analogous to the ``double-negative-control'' difference-in-differences in proximal causal inference \citep{soferNegativeOutcomeControl2016,tchetgenIntroductionProximalCausal2024}. The standard TND OR (Appendix~\ref{sec:app_identification_2}) targets an odds ratio; at the effect sizes considered here the two scales are close, so the estimators are directly comparable and the DiD's advantage is that it removes the depletion bias rather than any change of scale.

\paragraph{Results.}
Figure~\ref{fig:coinfection} compares pathogen-specific standard TND OR estimates to per-pathogen modified-Poisson DiD estimates for each of the $K=5$ control pathogens under both scenarios, together with the stacked common-$\delta$ pooled DiD estimate within each panel. Under scenario~6a (panel A), the standard pathogen-specific and pooled estimators are approximately unbiased, confirming that control--control co-detection alone is not an identification threat under the maintained assumptions; the DiD is shown for parity and carries a small upward bias---an artifact of the mutually-exclusive construction, in which the control arm is the complement of the focal arm within each pairwise subset at non-negligible control prevalence---so in this regime the standard estimator is preferred. Under scenario~6b (panel B), the standard pathogen-specific and pooled estimators are biased upward by the asymmetric control-pool depletion described above, whereas the modified-Poisson DiD---both per-pathogen and the stacked common-$\delta$ pooled estimate---recovers VE close to the truth. Because the pooled DiD's cluster-robust GEE standard error captures the strong positive correlation among the pathogen-specific contrasts (which share the focal cases), its confidence intervals attain near-nominal coverage, unlike a naive inverse-variance combination that would understate the variance.

\begin{figure}[hp]
\centering
\includegraphics[width=0.85\textwidth]{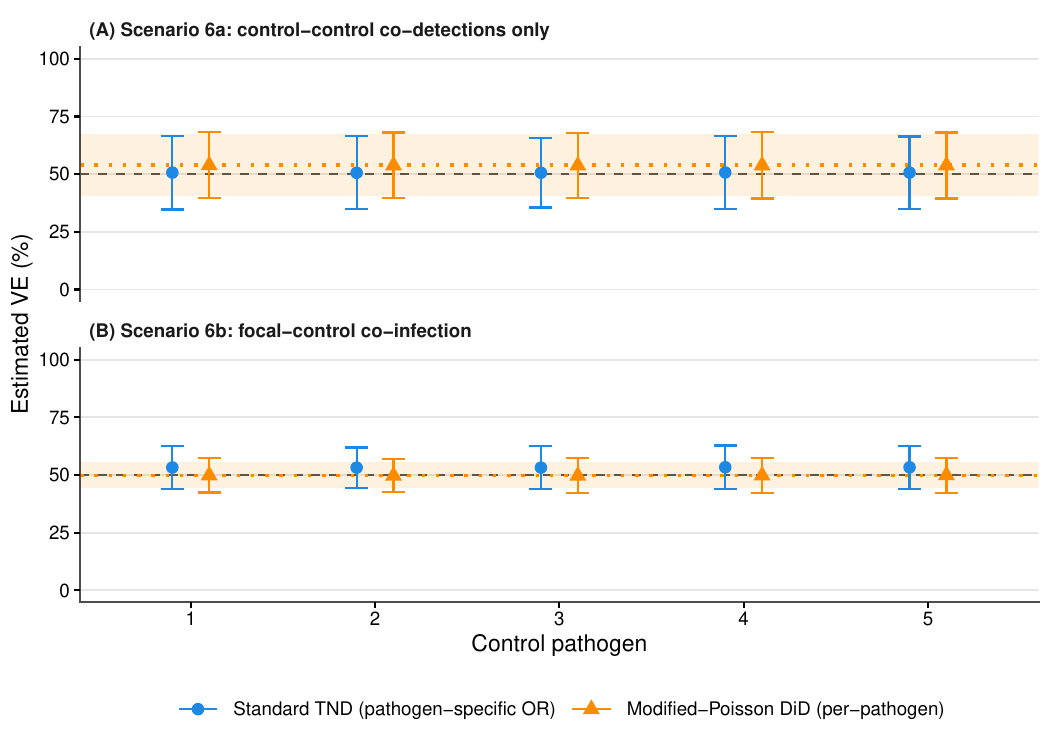}
\caption{Per-pathogen vaccine effectiveness estimates under scenarios 6a (control--control co-detections only, panel A) and 6b (focal--control co-infection, panel B). Blue circles: standard pathogen-specific TND logistic OR. Orange triangles: per-pathogen modified-Poisson DiD estimate. Orange dotted line and shaded band: stacked common-$\delta$ pooled DiD estimate $\pm 1.96$ simulation SD within each panel. Dashed grey line: true VE of 50\%. Error bars show $\pm 1.96$ simulation SD. The standard estimator is unbiased under control--control co-detection (A)---where it is preferred and the DiD is shown only for parity---but biased under focal--control co-infection (B), where the modified-Poisson DiD recovers the truth.}
\label{fig:coinfection}
\end{figure}

These results motivate including a modified-Poisson DiD estimator as a sensitivity analysis in any TND study where focal--control co-infection is plausible. Whether this estimator is preferred to exclusion of co-detected episodes (the strategy discussed in Appendix~\ref{sec:app_codetection}) depends on the prevalence of co-infections, the cost of the induced sample-size loss under exclusion, and the plausibility of the symmetry assumption underlying the DiD cancellation.

\subsection{Scenario 7: Imperfect multiplex testing and pan-negative inclusion}\label{sec:app_imperfect_sim}

Appendix~\ref{sec:app_imperfect_test} shows that, under perfect specificity and heterogeneous non-differential sensitivity for the focal and control assays, (i)~the pathogen-specific TND odds ratio identifies the focal vaccine risk ratio (Proposition~\ref{prop:imperfect_pathogen_specific}), (ii)~the pooled multiplex odds ratio identifies the same target after the control sensitivities cancel through the equi-confounding restriction (Proposition~\ref{prop:imperfect_pooled}), but (iii)~augmenting the pooled control denominator with measurement-error pan-negatives biases the estimator toward the null because the pan-negative group contains a contribution from missed focal infections that is case-like rather than control-like (Equation~\ref{eq:pool_pan_or}). Scenario~7 evaluates these three predictions in a setting in which pan-negatives arise \emph{only} from missed detections of true infections, so the bias in (iii) is fully attributable to the measurement-error mechanism.

\paragraph{Data-generating process for scenario 7.}
We retained the baseline DGP (mutually exclusive focal/control infection, no testing-irrelevance violations, no co-infection) and disabled the background-illness layer entirely by setting the background-illness intercept to $-100$ and the vaccine effect on background illness to zero, so $\Pr(I_{\mathrm{bg}}=1)\approx 0$ and pan-negatives can arise only from missed detections. Multiplex sensitivities were specified as $se_+ = 0.5$ for the focal assay and $se_{-k} \in \{0.90, 0.93, 0.95, 0.97, 0.99\}$ for the five control assays; all assays had perfect specificity, so $D_j = 0$ whenever $I_j = 0$. The low focal sensitivity routes many true focal-positive individuals into the pan-negative pool, while the high control sensitivities keep the pan-negative pool from being diluted by missed (valid) controls. To make the resulting bias clearly visible, we additionally lowered the control-pathogen prevalences (shifting each control intercept by $-1.0$ on the log scale, to roughly $1$--$2\%$) so that the detected-control pool $Q_0(X)$ is comparable in size to the focal prevalence $p_0(X)$; this enlarges the missed-focal share $(1 - se_+)\,p_0/Q_0$ of the augmented control pool (Equation~\ref{eq:pool_pan_or}) while keeping the focal pathogen rare (so $OR \approx RR$ and the reference estimators remain unbiased). Conditional on being tested, an individual's observed multiplex result $\widetilde{Y}$ may therefore differ from the true outcome $Y$: a true focal-positive individual contributes to the focal-positive sample with probability $se_+$ and to the pan-negative sample with probability $1 - se_+$, and analogously for each control. The focal pathogen was kept at its baseline (rare) prevalence, so the true focal VE remains 50\%.

\paragraph{Estimators.}
We applied three estimators to the analytic dataset (which now includes the pan-negative episodes alongside the detected cases and controls): the pathogen-specific TND logistic OR for each of the five regular control pathogens; the \emph{pooled estimator excluding pan-negatives}, which pools only detected control episodes; and the \emph{pooled estimator including pan-negatives}, which adds the pan-negative episodes to the control denominator. All three were fit with the same logistic-OR specification adjusting for $X$.

\paragraph{Results.}
Figure~\ref{fig:measurement_error} shows the three estimators. The five pathogen-specific estimates are approximately unbiased and centered close to the true VE of 50\%, consistent with Proposition~\ref{prop:imperfect_pathogen_specific}: heterogeneous sensitivities $se_{-k}$ rescale numerator and denominator of each pathogen-$k$ OR identically and cancel. The pooled estimator excluding pan-negatives is similarly close to the truth, confirming Proposition~\ref{prop:imperfect_pooled} and demonstrating that the standard pooled multiplex estimator is robust to heterogeneous non-differential sensitivity provided pan-negatives are excluded. The pooled estimator including pan-negatives is biased toward the null, reflecting the case-like contribution of missed focal infections in the pan-negative group identified in Equation~\ref{eq:pool_pan_or}. The magnitude of this bias depends on the focal-to-control prevalence ratio and on the gap between $se_+$ and the $\{se_{-k}\}$; for the parameter values used here it is substantial (on the order of $8$ percentage points of VE) and consistent across replicates, and as expected the pan-negative--inclusive estimator is the only one whose nominal $95\%$ coverage is materially affected.

\begin{figure}[hp]
\centering
\includegraphics[width=0.85\textwidth]{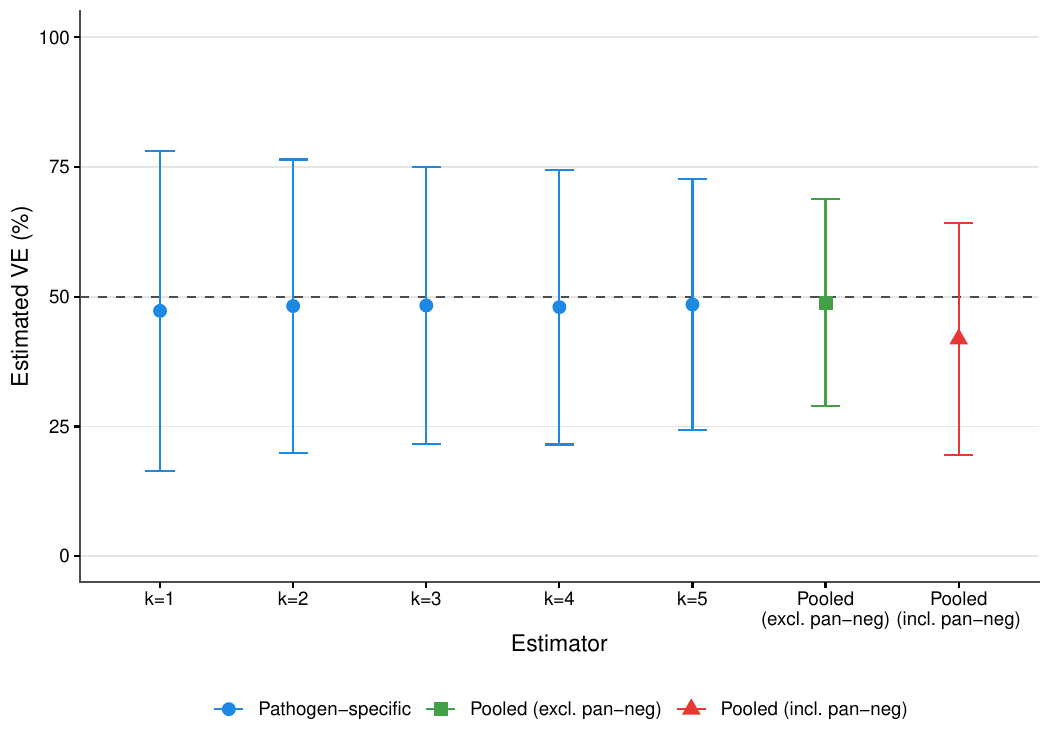}
\caption{Estimator behaviour under scenario~7 (imperfect multiplex testing; pan-negatives arise only from missed detections). Blue circles: pathogen-specific TND logistic OR for each of the five regular control pathogens (k=1,\ldots,5). Green squares: pooled estimator restricted to detected controls (Proposition~\ref{prop:imperfect_pooled}). Red triangles: pooled estimator that augments the control denominator with pan-negative episodes (Equation~\ref{eq:pool_pan_or}). Dashed grey line: true VE of 50\%. Error bars show $\pm 1.96$ simulation SD. The pathogen-specific and pooled-excluding estimators are approximately unbiased; the pan-negative--inclusive pooled estimator is biased toward the null.}
\label{fig:measurement_error}
\end{figure}

The practical implication is that, under imperfect multiplex testing with perfect specificity, pan-negative episodes should not be added to the control pool of a pooled TND analysis unless their measurement-error origin can be ruled out. If background-illness pan-negatives are the intended target (scenario~5a), the missed-detection contribution will contaminate them in any real multiplex setting in proportion to $1 - se_+$ and the relative focal prevalence, and the resulting bias is toward the null regardless of the direction of the focal vaccine effect.

\section{Appendix: A pragmatic control selection workflow}\label{sec:app_workflow}

This appendix provides a step-by-step workflow for pre-specifying test-negative control selection in multiplex TND studies. The steps operationalize the causal principles developed in the main text (vaccine irrelevance, other-intervention independence, testing-process comparability; Section~\ref{sec:three_principles}), apply the exposure-proxy/testing-proxy distinction (Figure~\ref{fig:multiplex-proxies}, Table~\ref{tab:proxy-examples}), and incorporate the co-infection and pan-negative recommendations developed in Section~\ref{sec:coinfection} and Appendices~\ref{sec:app_codetection} and~\ref{sec:app_imperfect_test}. Ideally, each step should be completed and documented \emph{before} the primary analysis; data-driven revisions should be limited to pre-specified sensitivity analyses.

Table~\ref{tab:workflow-summary} summarizes the workflow; detailed guidance follows.

\begin{table}[ht]
\centering
\caption{Summary of the pre-specified control selection workflow for multiplex TND studies. Each step should be documented in the study protocol or statistical analysis plan before data analysis begins.}
\label{tab:workflow-summary}
\small
\begin{tabular}{@{}clp{9.5cm}@{}}
\toprule
\textbf{Step} & \textbf{Action} & \textbf{Key considerations} \\
\midrule
0 & Define estimand & Specify focal pathogen, outcome severity, target causal contrast (primary: conditional $RR_V(X)$), target population, and study period \\
\addlinespace
1 & Define episodes and assay & Specify symptom screen and testing window; document per-pathogen assay sensitivity/specificity; retain control--control co-detections; default to excluding pan-negative episodes \\
\addlinespace
2 & Screen for vaccine irrelevance & Exclude controls plausibly affected by the focal vaccine through cross-protection, interference, or replacement \\
\addlinespace
3 & Screen for other-vaccine entanglement & Identify controls with correlated vaccines or coupled preventive programs; adjust for $V_2$ if measured, otherwise exclude \\
\addlinespace
4 & Apply the proxy lens & Prefer controls aligned on both the exposure ($U_1$) and testing ($U_2$) dimensions; flag for exclusion or stratification any control with substantial misalignment on either dimension, including pathogens with divergent clinical testing pathways \\
\addlinespace
5 & Choose estimator and pooling strategy & Default: pooled odds-ratio with pathogen-specific estimates and the pairwise control--control falsification check as diagnostics; if focal--control co-detection is non-negligible, use the modified-Poisson difference-in-differences estimator pooled via a stacked common-$\delta$ GEE \\
\addlinespace
6 & Specify analyses & Pre-specify the primary control set and sensitivity analyses: add-back of excluded pathogens, period stratification, pathogen-specific homogeneity test, pan-negative inclusion sensitivity, and DiD comparison \\
\bottomrule
\end{tabular}
\end{table}

\subsection*{Step 0: Define the scientific question and estimand}
\begin{itemize}[leftmargin=1.2em]
\item Specify the focal pathogen and the clinical severity of the outcome (e.g., any symptomatic illness, medically attended illness, hospitalization, ICU admission).
\item Specify the primary causal estimand: the conditional risk ratio among the vaccinated, $RR_V(X)$ (Appendix~\ref{sec:app_identification_1}). Indicate whether a marginal summary is also of interest.
\item Define the target population (e.g., community-dwelling adults $\geq 65$ years, children 6 months--17 years) and the study period.
\end{itemize}

\subsection*{Step 1: Define the analytic population, episodes, and assay}
\begin{itemize}[leftmargin=1.2em]

\item Document the per-pathogen sensitivity and specificity of the multiplex assay where available; these inform whether the pan-negative pool can be treated as reliably free of focal-pathogen infection (Appendix~\ref{sec:app_imperfect_test}).
\item \textbf{Co-detection handling.} Decide on an attribution rule for resolving co-detected episodes (Appendix~\ref{sec:app_codetection}). If true co-infections identified after attribution mapping, distinguish \emph{control--control} co-infection (multiple non-focal positives in one specimen) from \emph{focal--control} co-infection (focal positive together with one or more controls). For control--control co-infection, episodes can be retained if they do not violate identification assumptions: i.e., under equi-confounding and vaccine irrelevance the pooled estimator is invariant to attribution rules, and exclusion sacrifices precision and can introduce selection bias (Section~\ref{sec:coinfection}). For focal--control co-infection, the standard pooled estimator is biased through asymmetric depletion of the control pool; this is handled at the estimator level in Step 5 rather than by exclusion at this stage.
\item \textbf{Pan-negative episodes.} Default to \emph{excluding} pan-negative episodes from the control pool. Including them additionally requires that the focal vaccine not affect pan-negative episode rates (extended vaccine irrelevance to the pan-negative pool) and that pan-negatives episodes do not otherwise violate equi-confounding assumption. Both are difficult to verify given unknown identity of the pathogens contributing to the pan-negative pool and are further compromised by imperfect multiplex sensitivity contaminating the pan-negative bucket with undetected focal-pathogen cases (Appendices~\ref{sec:app_imperfect_test} and~\ref{sec:app_panneg_sim}). Include pan-negatives only with explicit biological justification, and always report both inclusion and exclusion estimates as a sensitivity analysis (Step 6).
\end{itemize}

\subsection*{Step 2: Screen for vaccine irrelevance}
\begin{itemize}[leftmargin=1.2em]
\item Review the biological literature to identify control pathogens plausibly affected by the focal vaccine through cross-protection, pathogen interference, or ecological replacement (Figure~\ref{fig:violations-vaccine-irrelevance}).
\item Exclude such pathogens from the primary control set, and document the rationale, hypothesized direction, and supporting evidence for each exclusion.
\end{itemize}

\subsection*{Step 3: Screen for other-intervention entanglement}
\begin{itemize}[leftmargin=1.2em]
\item Identify control pathogens with their own vaccines or coupled preventive programs whose uptake is plausibly correlated with focal vaccine uptake (Figure~\ref{fig:violations-vaccine-entanglement}) but specific to the control pathogen. Common examples include RSV prophylaxis (nirsevimab in infants, RSV vaccines in older adults) and COVID-19 vaccines when the focal pathogen is influenza, especially when these are co-administered.
\item If the correlated vaccine is recorded in the study data, adjust for $V_2$ (e.g., include it as a covariate). If it is unmeasured, exclude the affected pathogen from the primary control set.
\item Where co-administration patterns vary sharply across calendar time (e.g., before vs.\ after a co-administration recommendation), consider period-stratified analyses.
\end{itemize}

\subsection*{Step 4: Apply the proxy lens to the remaining controls}
\begin{itemize}[leftmargin=1.2em]
\item Equi-confounding requires that the focal and control pathogens share \emph{both} unmeasured determinants of infection risk ($U_1$) and unmeasured determinants of care-seeking and clinical testing ($U_2$); failure on either dimension can break the test-negative cancellation. The proxy lens (Section~\ref{sec:three_principles}, Table~\ref{tab:proxy-examples}) classifies each candidate control by which dimension it plausibly covers: \emph{exposure proxies} share $U_1$, \emph{testing proxies} share $U_2$, and the ideal control plausibly covers both.
\item Prefer controls aligned on both dimensions (e.g., a respiratory pathogen with overlapping transmission route, comparable severity profile, and a clinical presentation that triggers the same care-seeking pathway as the focal pathogen).
\item When a candidate aligns on only one dimension, retain it only if there is substantive reason to believe the unmatched dimension is not a meaningful source of confounding in your setting and document that reasoning. Otherwise exclude the pathogen or relegate it to a stratification or sensitivity scheme. Two common failure modes deserve specific attention:
  \begin{itemize}[leftmargin=1em]
  \item \emph{$U_1$ misalignment:} controls with clearly different exposure determinants (e.g., fecal--oral or vector-borne controls in a respiratory study) cannot cancel unmeasured exposure-related confounding of $V \rightarrow I_+$.
  \item \emph{$U_2$ misalignment:} controls whose detection is routed through a meaningfully different clinical pathway from the focal pathogen---most commonly through at-home rapid testing or surveillance-program detection rather than symptom-driven clinical encounters (Figure~\ref{fig:violations-at-home-test})---can break $U_2$ alignment by differentially channelling positive cases away from the TND sample by vaccination status.
  \end{itemize}
\end{itemize}

\subsection*{Step 5: Choose the estimator and pooling strategy}
\begin{itemize}[leftmargin=1.2em]
\item \textbf{Default estimator.} Use the pooled test-negative odds-ratio estimator (logistic regression of case status on $V$ and $X$) on the retained control set, and report pathogen-specific estimates $\{\widehat{VE}_k\}$ alongside as a diagnostic. This hybrid strategy preserves power for the primary estimate while keeping pathogen-level heterogeneity visible.
\item \textbf{When focal--control co-infection is biologically plausible and non-negligible} (per the assay/biology assessment in Step 1): replace the pooled odds-ratio estimator with the difference-in-differences estimator, e.g. modified-Poisson or log-binomial regressions pooled via a stacked common-$\delta$ GEE with cluster-robust standard errors (Appendix~\ref{sec:app_codetection}; Scenarios 6a--b in Appendix~\ref{sec:app_coinf_sim}). This estimator removes the bias from asymmetric control-pool depletion at the cost of wider confidence intervals.
\item \textbf{Diagnostics to report alongside the primary estimate:}
  \begin{itemize}[leftmargin=1em]
  \item Wald homogeneity test across pathogen-specific estimates with cluster-robust standard errors (Appendix~\ref{sec:app_pooling}).
  \item Pairwise control--control vaccination contrast as a falsification check (Section~\ref{sec:coinfection}, Appendix~\ref{sec:app_codetection}): under the maintained assumptions this contrast should equal unity (in expectation), and a substantial deviation suggests equi-confounding has broken down within the control pool.
  \end{itemize}
\end{itemize}

\subsection*{Step 6: Primary analysis and pre-specified sensitivity analyses}
\begin{itemize}[leftmargin=1.2em]
\item \textbf{Primary analysis:} use the control set surviving Steps 2--4 with the estimator selected in Step 5.
\item \textbf{Pre-specified sensitivity analyses:}
  \begin{enumerate}[leftmargin=1.5em]
  \item Add back excluded pathogens one class at a time and assess stability of the VE estimate.
  \item Stratify by calendar periods aligned with major changes in testing availability or vaccine roll-out (e.g., introduction of at-home rapid tests; new co-administration recommendations).
  \item Report pathogen-specific VE estimates with the homogeneity test from Step 5.
  \item Compare results with and without pan-negative episodes in the control pool.
  \item If a correlated vaccine $V_2$ is measured, compare adjusted vs.\ unadjusted estimates for the affected pathogens.
  \item If the standard pooled odds-ratio estimator is used as the primary, additionally report the modified-Poisson DiD estimator from Step 5 to bound the magnitude of any focal--control co-detection bias.
  \end{enumerate}
\item Document all pre-specified and post-hoc decisions in the statistical analysis plan or supplementary materials.
\end{itemize}

\end{document}

%% file: results/tables/simulation_results.tex
\begin{table}[ht]
\centering
\small
\setlength{\tabcolsep}{4pt}
\caption{Simulation results: bias and RMSE of TND vaccine effectiveness estimators across the seven simulation scenarios.}
\label{tab:sim_results}
\begin{tabular}{llrrrrrr}
\toprule
Scenario & Estimator & VE (\%) & Bias & RMSE & Cov. & Cases & Controls \\
\midrule
1. Baseline & Pooled & 49.5 & -0.5 & 6.7 & 95.7 & 305 & 1656 \\
1. Baseline & Screened & 49.5 & -0.5 & 6.7 & 95.7 & 305 & 1656 \\
\addlinespace
2. Vaccine irrelevance & Pooled & 46.9 & -3.1 & 7.8 & 93.4 & 305 & 1623 \\
2. Vaccine irrelevance & Screened & 49.3 & -0.7 & 6.9 & 95.0 & 305 & 1382 \\
\addlinespace
3. Entanglement & Pooled & 48.8 & -1.2 & 7.0 & 95.6 & 308 & 1527 \\
3. Entanglement & Screened & 49.8 & -0.2 & 6.7 & 96.0 & 308 & 1391 \\
\addlinespace
4. At-home testing & Pooled & 47.4 & -2.6 & 8.0 & 92.6 & 304 & 1538 \\
4. At-home testing & Screened & 49.5 & -0.5 & 7.4 & 94.4 & 304 & 1380 \\
\addlinespace
5a. Pan-neg (valid) & Pooled & 49.8 & -0.2 & 6.6 & 95.9 & 305 & 1892 \\
5a. Pan-neg (valid) & Screened & 49.8 & -0.2 & 6.6 & 95.9 & 305 & 1892 \\
\addlinespace
5b. Pan-neg (biased) & Pooled & 47.8 & -2.2 & 7.3 & 95.3 & 306 & 1855 \\
5b. Pan-neg (biased) & Screened & 47.8 & -2.2 & 7.3 & 95.3 & 306 & 1855 \\
\addlinespace
6a. Co-det (ctrl--ctrl) & Diff-in-diff & 53.9 & 3.9 & 7.9 & 91.8 & 246 & 3793 \\
6a. Co-det (ctrl--ctrl) & Pooled & 50.6 & 0.6 & 7.3 & 94.6 & 246 & 3793 \\
6a. Co-det (ctrl--ctrl) & Screened & 50.6 & 0.6 & 7.3 & 94.6 & 246 & 3793 \\
\addlinespace
6b. Co-det (focal--ctrl) & Diff-in-diff & 49.9 & -0.1 & 2.9 & 94.4 & 1434 & 3076 \\
6b. Co-det (focal--ctrl) & Pooled & 53.3 & 3.3 & 4.6 & 84.6 & 1434 & 3076 \\
6b. Co-det (focal--ctrl) & Screened & 53.3 & 3.3 & 4.6 & 84.6 & 1434 & 3076 \\
\addlinespace
7. Imperfect testing & Pooled & 41.8 & -8.2 & 14.0 & 89.1 & 158 & 793 \\
7. Imperfect testing & Screened & 48.9 & -1.1 & 10.2 & 94.7 & 158 & 606 \\
\bottomrule
\end{tabular}
\\[0.5em]
\parbox{\linewidth}{\footnotesize Note: True VE = 50\%. VE, Bias, RMSE and Cov.\ (95\% CI coverage) in percentage points. Cases/Controls are means across replicates. Diff-in-diff (scenarios 6a--b) is the modified-Poisson difference-in-differences estimator with a stacked common-$\delta$ GEE and cluster-robust SE (Appendix~\ref{sec:app_coinf_sim}). In scenario 7, ``Screened'' denotes the pooled estimator restricted to detected (non-pan-negative) controls (Appendix~\ref{sec:app_imperfect_sim}).}
\end{table}

%% file: results/tables/homogeneity_test.tex
\begin{table}[ht]
\centering
\caption{Homogeneity Test Results: Wald Test with Cluster-Robust Standard Errors}
\label{tab:homogeneity}
\begin{tabular}{lrrr}
\toprule
Scenario & Mean Wald & SD Wald & Rejection Rate (\%) \\
\midrule
1. Baseline & 3.98 & 2.76 & 4.9 \\
2. Vaccine irrelevance & 10.35 & 5.36 & 51.5 \\
3. Other-vaccine entanglement & 5.89 & 4.06 & 16.6 \\
4. At-home testing & 10.11 & 5.48 & 47.6 \\
5b. Pan-negatives (biased) & 11.50 & 5.75 & 46.5 \\
5a. Pan-negatives (valid) & 5.06 & 3.22 & 5.7 \\
6b. Co-detections (focal-control) & 4.08 & 2.82 & 5.1 \\
6a. Co-detections (control-control) & 4.09 & 2.79 & 5.2 \\
7. Imperfect multiplex testing & 15.40 & 6.75 & 71.8 \\
\bottomrule
\end{tabular}
\\[0.5em]
\footnotesize{Note: Wald test for homogeneity of vaccine effects across control pathogens,}
\footnotesize{with cluster-robust standard errors to account for repeated cases.}
\footnotesize{Rejection rate is the proportion rejecting $H_0$ at $\alpha = 0.05$.}
\footnotesize{Under $H_0$ (Scenario 1), rejection rate should be approximately 5\%.}
\end{table}